\theoremstyle{plain}
\theoremstyle{definition}
\newtheorem{theorem}{Theorem}[section]
\newtheorem{lemma}[theorem]{Lemma}
\newtheorem{definition-theorem}[theorem]{Definition-Theorem}
\newtheorem{definition-proposition}[theorem]{Definition-Proposition}
\newtheorem{proposition}[theorem]{Proposition}
\newtheorem{corollary}[theorem]{Corollary}
\newtheorem{example}{Example}[section]
\newtheorem{examples}{Example}[subsection]
\newtheorem{remark}{Remark}[section]
\newtheorem{definition}{Definition}[section]
\numberwithin{equation}{section} 
\DeclareMathOperator{\End}{End}
\DeclareMathOperator{\diag}{diag}
\def\ra{{\rightarrow}}
\def\det{\mathrm {det}}
\def\Det{\mathrm {Det}}
\def\Pl{\grP\grl}
\def\End{\mathrm {End}}
\def\span{\mathrm {span}}
\def\diag{\mathrm {diag}}
\def\Gr{\mathrm {Gr}}
\def\res{\mathop{\mathrm {res}}\limits}
\def\swedge{\mathsmaller{\mathsmaller{\wedge\,}}}
\def\Gl{\mathrm{Gl}}
\def\Sl{\mathrm{Sl}}
\def\Sp{\mathrm{Sp}}
\def \Cl{\mathrm{Cl}}
\def \symp{\grs\grp}
\def\res{\mathop{\mathrm{res}}\limits}
\def\&{&{\hskip -20pt}}
\DeclarePairedDelimiter{\no}{:}{:}
\def\be{\begin{equation}}
\def\ee{\end{equation}}
\def\bea{\begin{eqnarray}}
\def\eea{\end{eqnarray}}
\def\bt{\begin{theorem}}
\def\et{\end{theorem}}
\def\bex{\begin{example}\small \rm}
\def\eex{\end{example}}
\def\bexs{\begin{examples}\small \rm}
\def\eexs{\end{examples}}
\def\ra{\rightarrow}
\def \ss {\subset}
\def\br{\begin{remark}\small \rm \em}
\def\er{\end{remark}}
\def\BB{{\mathcal B}}
\def\DD {{\mathcal D}}
\def\FF {{\mathcal F}}
\def\HH{{\mathcal H}}
\def\II {{\mathcal I}}
\def\LL{{\mathcal L}}
\def\NN{{\mathcal N}}
\def\OO{{\mathcal O}}
\def\Cb{\mathbf{C}}
\def\Ib{\mathbf{I}}
\def\Nb{\mathbf{N}}
\def\Pb{\mathbf{P}}
\def\Zb{\mathbf{Z}}
\def\ab{\mathbf{a}}
\def\bb{\mathbf{b}}
\def\nb{\mathbf{n}}
\def\sb{\mathbf{s}}
\def\tb{\mathbf{t}}
\def\0b{\boldsymbol{0}}
 \def\grg{\mathfrak{g}}
 \def\grl{\mathfrak{l}}
\def\grP{\mathfrak{P}} \def\grp{\mathfrak{p}}
 \def\grs{\mathfrak{s}}
 \def\grgl{\mathfrak{gl}}
\begin{document}
\baselineskip 16pt

\medskip
\begin{center}
\begin{Large}\fontfamily{cmss}
\fontsize{17pt}{27pt}
\selectfont
	\textbf{Lagrangian Grassmannians, CKP hierarchy and hyperdeterminantal relations}
	\end{Large}
	
\bigskip \bigskip
\begin{large}
S. Arthamonov$^{1, 2}$\footnote[1]{e-mail:artamono@crm.umontreal.ca}
J. Harnad$^{1, 2}$\footnote[2]{e-mail:harnad@crm.umontreal.ca}
and J. Hurtubise$^{1, 3}$\footnote[3]{e-mail:jacques.hurtubise@mcgill.ca}
 \end{large}
 \\
\bigskip
\begin{small}
$^{1}${\em Centre de recherches math\'ematiques, Universit\'e de Montr\'eal, \\C.~P.~6128, succ. centre ville, Montr\'eal, QC H3C 3J7  Canada}\\
$^{2}${\em Department of Mathematics and Statistics, Concordia University\\ 1455 de Maisonneuve Blvd.~W.~Montreal, QC H3G 1M8  Canada}\\
$^{3}${\em Department of Mathematics and Statistics, McGill University, \\ 805 Sherbrooke St.~W.~Montreal, QC  H3A 0B9 Canada }
\end{small}
 \end{center}
\medskip
\begin{abstract}
\smaller{
This work concerns the relation between the geometry of Lagrangian Grassmannians and the CKP integrable hierarchy. 
 The  Lagrange map from the Lagrangian Grassmannian  of maximal isotropic (Lagrangian) subspaces of a finite dimensional 
symplectic vector space $V\oplus V^*$ into the projectivization of the exterior space $\Lambda V$  is defined by restricting 
the Pl\"ucker map  on the full Grassmannian to the Lagrangian sub-Grassmannian  and composing it 
with projection to the subspace of symmetric elements under dualization $V \leftrightarrow V^*$.
In terms of the affine coordinate matrix on the big cell, this reduces to the principal minors map, whose image is cut out by the 
$2 \times 2 \times 2$ quartic {\em hyperdeterminantal} relations. To apply this to the CKP hierarchy, the Lagrangian Grassmannian framework 
is extended to infinite dimensions, with $V\oplus V^*$ replaced by a polarized Hilbert space $ {\mathcal H} ={\mathcal H}_+\oplus {\mathcal H}_-$, 
with symplectic form $\omega$. The image of the Plucker map in the fermionic Fock space ${\mathcal F}= \Lambda^{\infty/2}{\mathcal H}$ 
is identified and the infinite dimensional Lagrangian map is defined. The linear constraints defining 
reduction to the CKP hierarchy are expressed as a fermionic null condition and the infinite analogue 
of the hyperdeterminantal relations is deduced.  A multiparametric family of such relations is shown to be 
satisfied by the evaluation of the $\tau$-function at translates of a point in the 
space of odd flow variables along the cubic lattices generated by power sums in the parameters.}
 \end{abstract}
\break
\section{Integrable hierarchies, Grassmannians, $\tau$-functions}
\label{KP_BKP_tau}

 It is  well known, since the work of Sato  \cite{Sa} and his school and of Segal and Wilson \cite{SW}, that solutions of 
the KP integrable infinite hierarchy of PDE's are determined by abelian group flows on infinite dimensional Grassmann manifolds. 
The associated $\tau$-functions satisfy Hirota bilinear relations \cite{Hir}, which may be interpreted as infinite dimensional 
versions of the Pl\"ucker relations determining the embedding of the Grassmannian in an exterior product space. There are also 
discrete versions of the hierarchy, in which the appropriate recursion relations appear as special Pl\"ucker-type relations \cite{SaSa}, and  
solutions can be related to discrete flows on the Grassmannian \cite{Mi}.

For KP, and finite dimensional reductions, the systems are associated to homogeneous spaces of the $A$-series of Lie groups, or their infinite dimensional limit. 
They have  generalizations, the BKP,  CKP and DKP hierarchies, associated to the $B$, $C$ and $D$-series of  Lie groups, \cite{DJKM1}. There are also discrete versions, expressed as lattice equations \cite{Mi, Hir,  Nim, Ka, Sch, BobSch1, BobSch2, FN}. In the BKP, CKP and DKP cases,  the flows  are on (maximal) isotropic Grassmannians, cut out by quadratic relations on  infinite exterior algebra spaces, interpreted as fermionic Fock spaces, which are
 the infinite analogues of spinor (or Clifford) modules for the $B$ or $D$ series, and of suitably restricted 
linear subspaces for the $C$ series.

The link to isotropic Grassmannians is implicit in previous studies \cite{DJKM1, DJKM2, VdLOS}, of BKP or CKP, but the main focus has
been solutions to the continuous hierarchies, either as vacuum expectation values in
 fermionic Fock space, or equivalent representation theoretic constructions, or application of some form of
the Riemann-Hilbert ``dressing method'' \cite{NMPZ}. Links between the continuous and discrete hierarchies have generally
been based on the use of discrete symmetries, as groups of dressing transformations \cite{Nim}. 

The aim of this paper is to derive the link between continuous and discrete hierarchies through evaluations of the $\tau$-function at infinite lattices of points
embedded within the flow group orbits as Sato did for the case of KP \cite{SaSa}, by suitable interpretations of the addition formulae 
for $KP$ $\tau$-functions. In the case of the CKP hierarchy,  analogous 
formulae are derived using flows restricted to infinite Grassmannians of Lagrangian type. The link with lattice recursion systems involves
quartic relations of the hyperdeterminantal type \cite{Oed, HoSt, Ka}.  More generally, we show how continuous KP $\tau$ functions of the CKP type
provide solutions to the hexahedron relations of Kenyon and Pemantle \cite{KePe1, KePe2}.
 In both finite and infinite dimensions we show, over a generic set,  that the ``short'' Pl\"ucker type relations
  generate the entire  set of  Pl\"ucker relations, and similarly, the lattice recursion relations appearing in the Lagrangian case,
   are in fact, just the ``short'' versions of the full set of quartic relations satisfied by solutions of the discrete CKP hierarchy.  This discretization
therefore provides families of solutions to the hyperdeterminantal relations for every continuous KP $\tau$-function of CKP type.
(See also \cite{BobSch2}, where a link between KP tau functions, hyperdeterminantal relations and the hexahedron reccurences
 is made.)

\subsection{The KP hierarchy, infinite Grassmannians and the Pl\"ucker map}
\label{KP_tau}

In the study of the Kadomtsev-Petviashvili (KP)  hierarchy \cite{Sa, SW, HB}, 
  the  $\tau$-function  $\tau^{KP}_w({\bf t})$ is a key ingredient.   It depends on an infinite sequence  
  of commuting flow variables
\be
{\bf t} = (t_1, t_2, \dots),
\ee
 and is parametrized by elements $w \in \Gr_{\HH_+}(\HH)$ of an infinite Grassmannian \cite{Sa, DJKM1, DJKM2, DJKM3, SW},
consisting of subspaces $w \ss \HH$ of a polarized Hilbert space $\HH=\HH_+ + \HH_-$, commensurate with
the subspace $\HH_+\ss \HH$.  It satisfies the Hirota bilinear residue relation,
\be
\res_{z=\infty }\left(e^{\sum_{i=1}^\infty \delta t_i z^i} \tau^{KP}_w({\bf t} - [z^{-1}])\tau^{KP}_w(\tb +\delta\tb + [z^{-1}])  \right)dz  = 0,
\label{hirota_bilinear_tau_res}
\ee
identically in $\delta \tb$,  where
\be
\delta \tb  :=(\delta t_1, \delta t_2, \dots),  \quad [z^{-1}] := \left({1\over z}, {1\over 2z^2},  \dots,{1\over j z^j}, \dots \right).
\ee

Expanding $\tau^{KP}_w({\bf t})$  in a basis of Schur functions \cite{Mac, Sa}
\be
\tau^{KP}_w({\bf t}) = \sum_{\lambda}\pi_\lambda(w)s_\lambda({\bf t}),
\label{tau_schur_exp}
\ee
with the flow parameters $(t_1, t_2, \dots)$ interpreted as normalized power sums
\be
t_i = \frac{p_i}{i} , \quad p_i:=\sum_{a=1}^\infty x_a^i \quad i=1,2 \dots,
\ee
and  the labels $\lambda$ denoting integer partitions
$\lambda =(\lambda_1 \ge \lambda _2 \ge\cdots \ge \lambda_{\ell(\lambda)} >0, \cdots)$,
the coefficients $\pi_\lambda(w)$ may be  interpreted  as {\em Pl\"ucker coordinates} of the element  $w \in \Gr_{\HH_+}(\HH)$.
These satisfy the  Pl\"ucker relations \cite{GH, HB}, which determine the image of the infinite Grassmannian
 $\Gr_{\HH_+}(\HH)$ under the {\em Pl\"ucker map}:
\bea
 \Pl_{\HH_+, \HH}: \Gr_{\HH_+}(\HH) &\&\ra \Pb(\FF) \cr
\Pl_{\HH_+, \HH}: \span\{w_1, w_2, \dots \} &\&\mapsto \left[w_1 \swedge w_2 \swedge \cdots \right]
=\big[\sum_\lambda \pi_\lambda(w) |\lambda; n\rangle\big] \in \Pb(\FF)
\label{plucker_map_inf}
\eea
embedding $\Gr_{\HH_+}(\HH)$ into the projectivization of the fermionic Fock space $\FF$,
which is the semi-infinite wedge product space
\be
\FF = \Lambda^{\infty/2}(\HH) =\sum_{n\in \Zb}\FF_n.
\label{fock_space}
\ee

Here $\{|\lambda;n\rangle\}$ is the standard basis \cite{Sa, JM1, HB} for the fermionic charge $n$ sector
$\FF_n$ of the Fock space, $\{w_1, w_2, \dots \} $ is an admissible basis \cite{SW} for the
subspace $w \ss \HH$, viewed as an element of the connected component of $\Gr_{\HH_+}(\HH)$,
on which the Fredholm orthogonal projection operator $\Pi_+: w \ra \HH_+$ has index $n$ and $[ |\phi \rangle]\in \Pb(\FF)$
denotes the projective equivalence class of $|\phi \rangle\in \FF$.
As in the finite dimensional case,  the Pl\"ucker coordinates $\{\pi_\lambda(w)\}$  are expressible as determinants of suitably defined
infinite matrices $W_\lambda(w)$, which are maximal minors of the homogeneous coordinate matrix $W(w)$ of the element $w$, relative to an admissible basis \cite{SW, HB}, and may be interpreted  as  holomorphic sections of the (dual) 
determinantal line bundle  $\Det^*\ra \Gr_{\HH_+}(\HH)$.

\subsection{The CKP hierarchy, Lagrangian Grassmannians and the Lagrange map}
\label{CKP_tau_lagrangian}

The CKP hierarchy \cite{DJKM1,  DJKM3, JM1, VdLOS} may similarly be  parametrized by elements $w^0 \in \Gr^\LL_{\HH_+}(\HH, \omega)$ of the
sub-Grassmannian consisting of {\em Lagrangian} (i.e., maximal isotropic)  subspaces of the Hilbert space $\HH$,  
with respect to a complex symplectic product $\omega$ (as defined in Section \ref{symplectic_form_HH}).
It  only involves the odd flow variables
\be
{\bf t}_o = (t_1, t_3, \dots ),
\ee
and the corresponding Baker function satisfies the Hirota bilinear residue equation
{\be
\res_{z=\infty}\left(\Psi_{w^0}(z, {\bf t}_o) \Psi_{w^0}(-z, {\bf t}_o+{\delta\bf t}_o) \right)dz= 0
\label{hirota_bilinear_tau_res_CKP}
\ee
identically in
\be
\delta{\bf t}_o = (\delta t_1, \delta t_3, \dots).
\ee
It may be expressed \cite{CW, CH, KZ}   in terms of a CKP $\tau$-function
  $\tau^{CKP}_{w^0}({\bf t}_o)$ as
\be
\Psi_{w^0}(z, \tb_o) :=
z^{-1/2} \left( \psi_{w^0}(z, \tb_o)\frac{ \partial \psi_{w^0}(z, \tb_o)}{ \partial t_1} \right)^{\frac{1}{2} },
\ee
where
\bea
\psi_{w^0}(z, {\bf t}_o)
&\&:= e^{\tilde{\xi}(z, {\bf t}_o)} \frac{\tau^{CKP}_{w^0} (\tb_o - 2[z^{-1}]_o)}{\tau^{CKP}_{w^0} (\tb_o)},\\
&\& \cr
\tilde{\xi}(z, {\bf t}_o) &\&:= \sum_{j=1}^\infty t_{2j -1} z^{2j -1}, \quad
[z^{-1}]_o := \left(z^{-1}, {1\over 3} z^{-3}, {1\over 5}  z^{-5}, \dots \right).
\eea

The square of $\tau^{CKP}_{w^0}({\bf t}_o)$ is the restriction to vanishing values of the
even KP flow variables $\tb':= (t_1, 0, t_3, 0, \cdots )$, of a KP $\tau$-function $\tau_{w^0}^{KP}(\tb)$
\be
(\tau^{CKP}_{w^0}({\bf t}_o))^2 = \tau^{KP}_{w^0}(\tb')
\ee
satisfying the auxiliary criticality condition \cite{CW, CH, KZ}
\be
\frac{\partial \tau^{KP}(\tb)}{\partial t_2 }\bigg|_{\tb = \tb'} =0
\label{KZ_linear_cond}
\ee
and,  more generally,
\be
\frac{\partial\tau^{KP}(\tb)}{\partial t_{2j} }\bigg|_{\tb = \tb'} =0, \quad  \forall \ j\in \Nb^+.
\label{2j_linear_cond}
\ee
 It follows that we have a Schur function expansion
\be
(\tau^{CKP}_{w^0}({\bf t}_o))^2 = \sum_{\lambda}\pi_\lambda(w^0)s_\lambda(\tb'),
\label{square_schur_expansion}
\ee
in which the Pl\"ucker relations are satisfied by the coefficients $\{\pi_\lambda(w^0)\}$, as well as an infinite set of
linear relations which imply that $w^0 \ss \HH$ is a Lagrangian subspace with respect to the symplectic form $\omega$.

\subsection{Summary of content and results }
\label{summary}

Section \ref{plucker_lagrange} recalls  the setting of finite dimensional Grassmannians, their Pl\"ucker
embedding in a projectivized exterior space and  the Lagrangian  Grassmannian $\Gr^\LL_V(\HH_N, \omega_N)$,
consisting of  subspaces $w^0\ss \HH_N $ of the $2N$-dimensional symplectic vector space
$\HH_N=V\oplus V^*$ that are maximal isotropic with respect to the canonically defined symplectic form $\omega_N$.
In  Section \ref{lagrange_map}, the Lagrange map 
\be
 \LL^N: \Gr^\LL_V(\HH_N, \omega_N) \ra \Pb(\Lambda(V))
 \label{lagrange_map_fin}
 \ee
 is defined, extending the principal  minors map,  defined on the space of $N\times N$ symmetric
  affine coordinate matrices on the big cell, to the entire Lagrangian Grasssmannian.

 The linear coefficients $\LL_J(w^0)$ of the image
\be
\LL^N(w^0) = \big[\sum_J  \LL_J(w^0) e_{-J^c}\big]
\ee
relative to a basis $\{e_{-J^c}\}$ for $\Lambda(V)$
labelled by ordered subsets $J \ss \{1, \dots, N\}$ of integers  (where $J^c$ is the complement of $J $  )
 coincide with the Plücker coordinates $\pi_{\lambda}(w^0)$ corresponding to  {\em symmetric} partitions $\lambda=\lambda^T$.
    However, the map $\LL$ is not one-to-one (cf. \cite{VGM}). As explained in Section \ref{lagrange_hyperdet_relations_N} its fibres  are
    the orbits of the group $(\Zb_2)^N$ of reflections within the symplectic $2$-planes corresponding to a canonical basis   and,
  generically, are of cardinality $2^{N-1}$.

    For Lagrangian subspaces $w^0 \in \Gr^\LL_V(\HH_N, \omega_N)$ in the  {\em big cell},  the $\LL_J$'s are the 
    principal minor determinants of the $N\times N$ symmetric affine coordinate matrix $A(w^0)$.
  As shown in \cite{HoSt, Oed}, these satisfy  the set of quartic relations (\ref{hyper_det_rels}),
  the ``core'' {\em hyperderminantal relations}, whose orbit under the symplectic subgroup
 \be
 G_N := \left(\Sl(2)\right)^N \rtimes S_N \ss \Sp(\HH_N, \omega_N),
 \label{G_N_def}
 \ee
  cuts out the image of the Lagrange map.
 Combining the quadratic Pl\"ucker relations  with the linear conditions on the Pl\"ucker coordinates
 which assure that the element $w^0$ is in the Lagrangian Grassmannian $ \Gr^\LL_V(\HH_N, \omega_N) $,
 a new proof of these relations, valid on a Zariski open subset, is provided  in  Sections \ref{geometry_plucker_restriction}
 and \ref{hyperdet_Gr_L_3_6} (Propositions \ref{prop:plucker-hyper}, \ref{prop:N_3_hyper_red}).
 It is also shown how a more general set of relations, the {\em hexahedron reccurrence}
 relations, introduced in \cite{KePe1, KePe2} in the study of  double dimer coverings and rhombus tilings, follow from
 the Pl\"ucker relations and isotropy conditions for Lagrangian Grassmannians.

The realization of the KP hierarchy in terms of isospectral flows of formal pseudo-differential operators  is
recalled in Section \ref{KP_Baker_CKP}, together with its reduction to the CKP case.
The Grassmannian interpretation of this reduction consists of restricting the KP flows on  the infinite
Grassmannian $Gr_{\HH_+} (\HH)$ of subspaces of  the underlying  polarized
Hilbert space $\HH= \HH_- \oplus \HH_+$  of the KP hierarchy, commensurable with $\HH_+$,
 to the subgroup of flows in the odd flow parameters only, acting on the Lagrangian sub-Grassmannian
$\Gr^\LL_{\HH_+} (\HH, \omega) \ss \Gr_{\HH_+} (\HH)$ of isotropic subspaces with respect to a suitably defined
symplectic form $\omega$ on $\HH$. The fermionic representation of the KP $\tau$-function as a vacuum expectation value (VEV)
on the associated fermionic Fock space $\FF =\Lambda^{\infty/2}(\HH)$ is recalled in Section \ref{fermionic_KP-tau}.

The symplectic form  $\omega$  on  $\HH$  is introduced in Section \ref{symplectic_form_HH} and used to define the
 infinite symplectic group $\Sp(\HH, \omega)$ action on $\HH$ and on $ \FF$.
In Section \ref{CKP_reduction}, the reduction conditions from the KP to the CKP 
hierarchy are expressed as  fermionic null conditions equivalent to the Lagrangian condition.
 Using the bosonization map and the Murnaghan-Nakayama rule, this is shown to imply an infinite
 set of linear vanishing conditions (Proposition \ref{prop:null_vanishing_conds}) satisfied by the Pl\"ucker coordinates.

The infinite dimensional analog of the Lagrange map
\be
\LL: \Gr^\LL_{\HH+}(\HH, \omega) \ra \Pb(\FF^S)
\ee
 is introduced in Section \ref{lagrange_map_inf},  mapping the Lagrangian Grassmannian $\Gr^\LL_{\HH_+} (\HH, \omega)$
   to the projectivization of the  subspace $\FF^S = \Lambda^{\infty/2}\HH_+ \ss \FF$ spanned by basis elements corresponding to symmetric partitions.
Combining  the Pl\"ucker relations  with the Lagrangian condition, it is shown in Section \ref{inf_hyperdet_rels} (Proposition \ref{prop:hypedet_inf_red}),
that the symmetric partition Pl\"ucker coordinates of an element  $w^0\in \Gr^\LL_{\HH+}(\HH, \omega) $ corresponding
to a CKP type $\tau$-function satisfy the hyperdeterminantal relations.
Finally, in Section \ref{param_hyperdet_tau}  it is shown (Proposition \ref{hyperdet_3_param_family}}
and  Corollary \ref{tau_functional_N_param_hyperdet}), as a consequence of the addition formulae for KP $\tau$-functions 
(generalized Fay identities), that an $N$-parameter family of hyperdeterminantal  relations is satisfied by the $\tau$-function,  
evaluated at the translates of a point in the space of odd flow variables by cubic lattices generated by power sums in the parameters.

\makeatletter
\@addtoreset{equation}{section}
\makeatother
\renewcommand{\theequation}{\thesection.\arabic{equation}}

\makeatletter
\@addtoreset{equation}{subsection}
\makeatother
\renewcommand{\theequation}{\thesubsection.\arabic{equation}}

\section{Pl\"ucker map, Clifford algebra and Lagrange map in finite dimensions}
\label{plucker_lagrange}

\subsection{The Pl\"ucker map and Pl\"ucker relations}
\label{plucker_map_relations}
  The Pl\"ucker map  \cite{GH}
\bea
\Pl^n_k: \Gr_k(\Cb^n)&\& \ra  \Pb(\Lambda^k(\Cb^n)) \cr
\Pl^n_k: w &\& \mapsto  [W_1\wedge \cdots \wedge W_k]
\eea
(where $[\phi]$ denotes the projective equivalence class of $\phi \in \Lambda^k(\Cb^n)$)
defines an embedding of the Grassmannian $\Gr_k(\Cb^n)$ of $k$-planes 
$
w=\span\{W_1, \dots, W_k\} \ss \Cb^n
$
in  the projectivization $\Pb(\Lambda^k(\Cb^n))$ of the exterior space
$\Lambda^k(\Cb^n)$. It  is equivariant with respect to the natural action of the
general linear group $\Gl(n, \Cb)$ on $\Gr_k (\Cb^n)$ and on $\Pb(\Lambda^k(\Cb^n))$.
The image $\Pl^n_k(\Gr_k(\Cb^n)) \in \Pb(\Lambda^k(\Cb^n))$  is the intersection of a number of quadrics, the {\em Pl\"ucker quadrics},
thereby realizing $\Gr_k(\Cb^n)$ as a projective variety. The Pl\"ucker coordinates $\pi_\lambda(w)$ are the (projectivized)
linear coordinates of the image $\grP^n_k(w)$ in the standard basis $\{f_L\}_{L=(L_1, \dots, L_k))}$
for the exterior space $\Lambda^k(\Cb^n)$,  defined by
\be
f_L := f_{L_1} \swedge \cdots \swedge f_{L_k},
\ee
where the multi-index
\be
L:=(L_1, \dots, L_k)  \ss \{1, \dots ,n\}
\ee
is a $k$-element subset of $\{1, \dots ,n\}$, written in increasing order and $\{f_1, \dots, f_n\}$
is the standard basis for  $\Cb^n$.
Thus
\be
\grP^n_k(w) =\left[ \sum_\lambda \pi_\lambda(w) f_L\right],
\ee
where the partition $\lambda = (\lambda_1 \ge \cdots \ge \lambda_k \ge 0)$
associated to $L$,  which labels the Pl\"ucker coordinate $\pi_\lambda(w)$, is given by
\be
\lambda_i  = L_{k-i+1} +i -k -1, \quad i=1, \dots k
\ee
and its Young diagram fits into a $k \times (n-k)$ rectangle.

Equivalently, let $W$ be the $n\times k$ homogeneous coordinate matrix of the element $w$,
whose columns are the basis vectors $(W_1, \dots, W_k)$, viewed as column vectors,
and let $W_\lambda$ be the $k \times k$ matrix whose $i$th row is the $L_i$th row of $W$.
Then
\be
\pi_\lambda(w) =  \det(W_\lambda).
\ee
The labelling by partitions $\lambda$ or by $k$ element subsets $L\ss \{1, \dots, n\}$
is equivalent, but it is sometimes more convenient to use the  multi-index $L$,
in which case we write
\be
\tilde{\pi}_L(w) := \pi_\lambda(w).
\ee

 The fact that $\grP^n_k(w) \in \Pb(\Lambda^k(\Cb^n))$ is (the projectivization of) a completely decomposable element
of $\Lambda^k(\Cb^n)$ uniquely characterizes the image  of the Pl\"ucker map.  It is equivalent to $\Pl^n_k(w)$  satisfying
the quadratic {\em Pl\"ucker relations}, which are obtained by contracting it, as a pojectivized $k$-vector,
 with the various possible basis elements in $\Lambda^{k-1}( \Cb^{n*})$, to obtain elements of $w$,  and noting that,
 due to the decomposability of $\grP^n_k(w)$, their exterior products with $\grP^n_k(w)$ must vanish.
 The vanishing of the components of the resulting  elements of $\Lambda^{k+1}( \Cb^{n})$,
 expressed in terms of Pl\"ucker coordinates, define the Pl\"ucker relations.

To express these concisely \cite{GH},  let $(I, J)$ be a pair of ordered subsets of $\{1, \dots ,n\}$
with cardinalities $k-1$ and $k+1$, respectively:
\bea
I &\&= (I_1,I_2,\dots, I_{k-1}), \quad1 \leq I_{1} < I_{2} < \dots < I_{k-1} \leq n, \cr
J &\&= (J_1,J_2,\dots, J_{k+1}),\quad 1 \leq J_{1} < J_2 < \dots < K_{k+1}\leq n.
\eea
For any ordered subset
\be
L=(L_1, \dots L_r),  \quad 1 \le L_1 < \cdots < L_r \le n
\ee
of cardinality $r$, and any $j\in \{1, \dots ,n\}$, $j \notin L$, denote by $L(j)$
the ordered set with elements $(L_1, \dots L_r, j)$ and
\be
 (L_1, \dots, \widehat{L}_m, \dots L_{r }), \quad m=1, \dots r
\ee
the subset $L\backslash \{L_m\}$ with $L_m$ removed.
The Pl\"ucker relations are then
\be
\sum_{m=1}^{k+1}(-1)^{m}\tilde{\pi}_{(I_1, I_2 \dots \dots I_{k-1}  J_m)} \tilde{\pi}_{(J_1 J_2\dots, \widehat{J}_m, \dots J_{k+1} )}=0,
\label{plucker_rel_kn}
\ee
where the indexing has been extended to all multi-index distinct sequences, such that  Pl\"ucker coordinates whose indices differ
by a permutation from the increasingly ordered sequence are understood to equal the ordered one times the sign of the permutation.

The relations (\ref{plucker_rel_kn}) are not independent, of course. Generically, a much smaller subset,
 known as the {\em short } Pl\"ucker relations,  in which the intersection $I\cap J$ is of cardinality $k-2$,
 suffices to generate them all. If we choose the first $k-2$ of the indices to coincide
\be
I':= (I_1=J_1, \cdots  , I_{k-2} = J_{k-2}),
\ee
there are only three possible distinct terms in the sum (\ref{plucker_rel_kn}).
Letting
\be
I_{k-1} := i \quad J_{k-1} := j_1 , \quad J_{k} :=j_2, \quad J_{k+1} := j_3,
\ee
these are
\be
\tilde{\pi}_{(I, j_1)} \tilde{\pi}_{(I', j_2, j_3 )}
+\tilde{\pi}_{(I, j_3)} \tilde{\pi}_{(I', j_1, j_2)}
+\tilde{\pi}_{(I, j_2)} \tilde{\pi}_{(I', j_3, j_1)} =0.
\label{short_plucker}
\ee

As shown in \cite{HB}, App. D  (cf.  also \cite{KPRS}),  on a Zariski open set within $\Pb(\Lambda^k(\Cb^n)$, these
short Pl\"ucker relations are sufficient to imply the full set. This follows inductively from the Desnanot-Jacobi identity,
and the generalized Giambelli identity, which expresses all Pl\"ucker coordinates as minor
determinants of the matrix of hook partition Pl\"ucker coordinates.  Another proof of this fact, formulated more
geometrically, is provided in Section \ref{geometry_plucker_restriction}.


\subsection{Pl\"ucker map for $\Gr_V(\HH_N)$ and the Clifford algebra}
\label{plucker_map_clifford_alg}

Let $V$ be a complex vector space of dimension $N$, $V^*$ its dual space, and denote by
 \be
 \HH_N := V \oplus V^*
 \label{HH_N_def}
 \ee
the direct sum of the two.
The Grassmannian $\Gr_V(\HH_N)$ of $N$-planes in $\HH_N$ is the orbit of  $V\ss \HH_N$ under the
action of the general linear group $\Gl(\HH_N)$.
The {\em Pl\"ucker map}
\be
\Pl_V:\Gr_V(\HH_N) \ra \Pb(\Lambda^N(\HH_N))
\label{plucker_map}
\ee
for this case is the $\Gl(\HH_N)$ equivariant embedding  of $\Gr_V(\HH_N)$
in the projectivization \hbox{$ \Pb(\Lambda^N(\HH_N))$ } of the exterior space $\Lambda^N(\HH_N)$
defined   by:
\be
\Pl_{V} : w \mapsto [w_1 \swedge \cdots \swedge w_N] \in \Pb(\Lambda^N(\HH_N)),
\label{plucker_map_def}
\ee
where  $\{w_1, \dots, w_N\}$ is a basis for the subspace $w\in \Gr_V(V\oplus  V^*)$.
 Its image is cut out by the intersection of the {\em Pl\"ucker quadrics}
  (\ref{plucker_rel_kn}), for $k=N$, $n=2N$.

 To anticipate the notational conventions used in the next section, we index the basis
 for $V$  and $V^*$ henceforth as $\{e_{-j}\}_{j=1, \dots, N}$ and $\{e_j\}_{j=0, \dots, {N-1}}$ respectively,
 with dualization pairing
    \be
  e_i(e_{-j})= (-1)^i\delta_{i+1, j}.
  \label{dual_basis_pairing}
  \ee
Ordering the basis  for $\HH_N$  as $(e_{-N}, \dots, e_{-1}, e_0, \dots , e_{N-1})$,
define the corresponding basis elements $\{|\lambda\rangle\}$ for $\Lambda^N(\HH_N)$ by
\be
|\lambda\rangle := e_{l_1} \swedge \cdots \swedge e_{l_N},
\label{basis_finite}
\ee
where $\lambda$ is any partition whose Young diagram fits in the $N\times N$ square diagram,
 and
\be
l_j := \lambda_j -j, \quad 1\le j \le N
\label{particle_coords}
\ee
 are the {\em particle positions}   associated to the partition (see \cite{HB}, Chapt. 5, Sec. 5.1)
 \be
 \lambda = (\lambda_1, \dots, \lambda_{\ell(\lambda)}, 0, \dots ).
 \ee
 Thus $l_1 > \cdots > l_N$ is a strictly decreasing sequence of $N$ integers between $N-1$ and $-N$.
 The ``vacuum'' (or highest weight) vector is defined as
 \be
 |0 \rangle := |\emptyset\rangle = e_{-1} \swedge \cdots \swedge e_{-N},
 \ee
 and its projectivization is the image $\Pl_V(V)$ of $V$ under the Pl\"ucker map.
A (complex) scalar product on $\Lambda^N (\HH_N)$ is defined, in bra/ket notation,
 by requiring the  $\{|\lambda\rangle\}$ basis to be orthonormal
\be
\langle\lambda | \mu\rangle  =\delta_{\lambda\mu}.
\ee

 Following Cartan \cite{Ca, Ch}, define the natural complex scalar product $Q$ on $\HH_N\oplus \HH_N^*$ by
\be
Q((X, \xi), (Y, \eta)) = \eta(X)  + \xi(Y), \quad X, Y \in \HH_N, \quad \xi, \eta \in \HH_N^*,
\ee
and let $\Cl\left(\HH_N \oplus \HH_N^*, Q)\right)$ denote the corresponding
Clifford algebra  on $\HH_N \oplus  \HH_N^*$. The standard irreducible representation
\bea
\Gamma: \Cl\left(\HH_N \oplus \HH_N^*, Q)\right) &\&\ra  \End(\Lambda(\HH_N)),  \cr
\Gamma: \sigma &\&\mapsto \Gamma_\sigma
\eea
is generated by the linear elements, which are represented by exterior and interior multiplication:
\bea
\Gamma_{v + \mu}:= v  \swedge &\&+ i_{\mu},  \  \in \End(\Lambda(\HH_N)),  \\
v \in \HH_N, &\& \quad \mu\in \HH_N^*.
\nonumber
\label{Gamma_w}
\eea
The representations of the basis elements, denoted
\be
\psi_i := \Gamma_{e_i} = e_i \swedge \quad \psi^\dag_i:= \Gamma_{e^*_i} = i_{e^*_i}, \quad i= -N, \dots, N-1,
\label{psi_psi_dag_def}
\ee
 are viewed as finite dimensional fermionic creation and annihilation operators,
which satisfy the anticommutation relations
\be
[\psi_i, \psi_j]_+=0,\quad [\psi^\dag_i, \psi^\dag_j]_+=0, \quad [\psi_i, \psi^\dag_j]_+= \delta_{ij}
\ee
as well as the vacuum annihilation conditions
\be
\psi_{-i} | 0\rangle = \psi^\dag_{i-1}|0\rangle = 0, \quad i= 1, \dots, N.
\ee

\subsection{Pl\"ucker coordinates on $\Gr_V(\HH_N)$}
\label{Grassmannian_Gr_V_WW_plucker}

   For consistency with standard notations \cite{JM1, Sa, HB} used in infinite dimensions
   (see Section \ref{CKP_hierarchy}),  we index our bases as $\{e_{-N}, \dots, e_{-1} \}$ and $\{e_0, \dots, e_{N-1}\}$
to identify $V$ and $V^*$ with $\Cb^N$ and $\Cb^{N*}$, respectively, with the dualization pairing
\be
e_i(e_{-j}) = (-1)^i \delta_{i, j-1},\quad i = 0. \dots, N-1, \quad j =1, \dots, N.
\ee
The dual basis $\{e^*_{-N}, \dots, e^*_{-1}, e_0^*, \cdots, e_{N-1}^*\}$  is thus given by
\be
e^*_i = (-1)^{i+1} e_{-i-1} \quad i = -N, \dots N -1.
\label{dual_basis_isom}
\ee

Let $w\in \Gr_V(\HH_N)$ be an element of the Grassmannian of $N$-dimensional subspaces
of  $\HH_N$, and let $W$ denote the  $2N \times N$ dimensional homogeneous coordinate
matrix  whose columns $\{W_i \in \HH_N\}_{i=1, \dots, N}$ are a  basis for
$w$ expressed relative to $\{e_{-N}, \dots, e_{-1}, e_0, \dots, e_{N-1}\}$.
 The Pl\"ucker coordinates $\{\pi_\lambda(w)\}$
are thus labelled by partitions $\lambda$ whose Young diagrams fit into
the $N\times N$ square $(N)^N$. Recall that any partition
 $\lambda =(\lambda_1 \ge \cdots \lambda_{\ell(\lambda)} \ge 0, \cdots)$ of length $\ell(\lambda) \le N$ may equivalently by labelled
 by its Frobenius indices \cite{Mac}
\be
\lambda = \lambda(\ab | \bb) , \quad (\ab | \bb):=(a_1, \dots, a_r | b_1, \dots , b_r),
\ee
where the Frobenius rank  $r$, with $0\le r \le N$, is the number of diagonal terms in the
Young diagram of $\lambda$, and
\be
\ab = (a_1> \cdots > a_r \ge 0),\quad \bb = (b_1> \cdots > b_r \ge 0)
\ee
are two strictly decreasing sequences of nonnegative integers that represent the ``arm'' \
and ``leg'' lengths' in the Young diagram (i.e., the number of squares to the right of
and below the $r$ diagonal elements, respectively).

To each partition $\lambda \ss (N)^N$, we associate the $N\times N$ submatrix $W_\lambda$
of the homogeneous coordinate matrix whose rows consist of the rows of $W$ in positions
$L_1 < L_2 \cdots < L_N$, where
\be
L_i := l_{N-i+1}, \quad i=1, \dots, N.
\ee
The Pl\"ucker coodinates $\pi_\lambda(w)$ are given, up to projective equivalence, by
the determinants
\be
\pi_\lambda(w) = \det(W_\lambda) =\tilde{\pi}_L.
\ee

For $0 \le  r \le N$, let
\be
\{I:=(I_1, \dots, I_{r}) \ss (1,\dots, N)\}, \quad \{J:=(J_1, \dots, J_{r}) \ss (1,\dots, N)\},
\ee
be a pair of (increasingly) ordered subsets of $(1, \dots, N)$,  with cardinalities $| I |= | J|=r$,
Define a basis  $\{e_{(I,J)}\}$ for $\Lambda^N(\HH_N)$, labelled by such pairs $(I,J)$,
as
\be
e_{(I,J)}:= e_{I-1} \swedge e_{-J^c}, \quad e_{I-1} \in \Lambda^{r}(V^*),
\quad e_{-J^c} \in \Lambda^{N-r}(V),
\quad r=0, \dots, N,
\label{eq:eIJNotationFinite}
\ee
where
\be
J^c = (J^c_1 < \cdots < J^c_{N-r})
\ee
 is the (increasingly ordered) complement of $J \ss (1, \dots,  N)$,  and
\be
e_{I-1} := e_{I_r-1} \swedge \cdots \swedge e_{I_{1}-1} \in \Lambda^r (V^*),
\quad e_{-J^c}:= e_{-J^c_1} \swedge \cdots \swedge e_{-J^c_{N-r}} \in \Lambda^{N-r} (V)
\ee
are the corresponding standard basis elements for $\Lambda^r (V^*)$ and $\Lambda^{N-r} (V)$, respectively.
(Note that the ordering of successive factors in the wedge products is chosen in both cases to be decreasing from left to right.)
 To each partition $\lambda \ss (N)^N$ of Frobenius rank $r$, we associate the unique pair $(I,J)$ such that
\be
 I_i = \lambda_{r-i+1} - r+i, \quad  1 \le i \le r , \quad  J^c_i = r+ i -\lambda_{r+i} ,  \quad 1 \le i \le N-r. \quad
 \label{particle_coords_IJ}
\ee

\begin{lemma}
The relation between the Frobenius indices $(\ab | \bb)$ and the pairs $(I, J)$ is given by:
\be
a_i = I_{r-i +1} -1, \quad b_i =  J_{r-i+1} -1, \quad i = 1, \dots , r, \quad r=0, \dots, N-1.
\label{ab_IJ}
\ee
and the basis elements are related by
\be
e_{(I,J)} = |\lambda \rangle = (-1)^{\sum_{i=1}^r b_i}\prod_{i=1}^r \psi_{a_i}\psi^\dag_{-b_i-1} | 0 \rangle.
\ee
\end{lemma}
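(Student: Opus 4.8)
The plan is to work throughout in the particle-position description $|\lambda\rangle = e_{l_1}\swedge\cdots\swedge e_{l_N}$ with $l_j = \lambda_j-j$, and to split the occupied positions according to sign. Since $\lambda$ has Frobenius rank $r$, one has $l_1>\cdots>l_r\ge 0>l_{r+1}>\cdots>l_N$, where the nonnegative positions satisfy $l_i=\lambda_i-i=a_i$ (so $e_{l_1},\dots,e_{l_r}\in V^*$) and the negative ones give $e_{l_{r+1}},\dots,e_{l_N}\in V$. Thus $|\lambda\rangle=(e_{a_1}\swedge\cdots\swedge e_{a_r})\swedge(e_{l_{r+1}}\swedge\cdots\swedge e_{l_N})$, and the proof reduces to matching this factorization against $e_{(I,J)}=e_{I-1}\swedge e_{-J^c}$ and then re-expressing it through the fermionic operators.

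First I would establish the dictionary \eqref{ab_IJ}. Substituting $i\mapsto r-i+1$ into the defining relation $I_i=\lambda_{r-i+1}-r+i$ gives $I_{r-i+1}=\lambda_i-i+1=a_i+1$, which is the arm relation $a_i=I_{r-i+1}-1$. For the legs, the defining relation $J^c_i=r+i-\lambda_{r+i}=-l_{r+i}$ identifies $J^c$ with the negated occupied negative positions, so $\{-j:j\in J\}$ is precisely the set of unoccupied negative positions. The particle--hole (conjugation) symmetry of the Maya diagram (\cite{HB}, Ch.~5) sends each leg length $b_i$ to the unoccupied position $-b_i-1$; matching the decreasing sequence $b_1>\cdots>b_r$ against the increasing sequence $J_1<\cdots<J_r$ then yields $b_i=J_{r-i+1}-1$. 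With these two relations, $e_{I-1}=e_{I_r-1}\swedge\cdots\swedge e_{I_1-1}$ equals $e_{a_1}\swedge\cdots\swedge e_{a_r}$ term by term, and $e_{-J^c}=e_{l_{r+1}}\swedge\cdots\swedge e_{l_N}$, so $e_{(I,J)}=|\lambda\rangle$.

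The substance of the lemma is the fermionic formula, which is a sign computation. Because $a_i\ge 0>-b_j-1$ for all $i,j$, every creation operator $\psi_{a_i}=e_{a_i}\swedge$ anticommutes with every annihilation operator $\psi^\dagger_{-b_j-1}=i_{e^*_{-b_j-1}}$, so the interleaved product can be reordered into $(\psi_{a_1}\cdots\psi_{a_r})(\psi^\dagger_{-b_1-1}\cdots\psi^\dagger_{-b_r-1})$; only the $\binom{r}{2}$ pairs having a $\psi^\dagger$ originally to the left of a $\psi$ are transposed, contributing a sign $(-1)^{\binom{r}{2}}$. Applying the annihilation block to $|0\rangle=e_{-1}\swedge\cdots\swedge e_{-N}$, the factor $e_{-b_i-1}$ initially occupies slot $b_i+1$; since the $b_i$ strictly decrease, by the time it is deleted exactly $j$ lower slots have already been removed, so the $(j{+}1)$-st interior product contributes $(-1)^{b_{r-j}-j}$. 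The product of these signs is $(-1)^{\sum_i b_i-\binom{r}{2}}$ and the surviving wedge is exactly $e_{-J^c}$. The creation block then prepends $e_{a_1}\swedge\cdots\swedge e_{a_r}=e_{I-1}$, giving $e_{I-1}\swedge e_{-J^c}=|\lambda\rangle$. Collecting the signs, $(-1)^{\binom{r}{2}}(-1)^{\sum_i b_i-\binom{r}{2}}=(-1)^{\sum_i b_i}$, so $\prod_i\psi_{a_i}\psi^\dagger_{-b_i-1}|0\rangle=(-1)^{\sum_i b_i}|\lambda\rangle$, which rearranges into the stated identity since $(-1)^{\sum_i b_i}$ is its own inverse.

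I expect the main obstacle to be the slot-shift bookkeeping in the annihilation step: the exponent of each interior-product sign depends on the current rather than the original position of $e_{-b_i-1}$, and an off-by-one there would corrupt the final sign. The strict inequalities $b_1>\cdots>b_r$ are exactly what guarantee that every previously removed factor lies below the one being removed, so the shift equals the number of prior removals; I would record this monotonicity explicitly before summing the exponents. Everything else reduces to the definition of the interior product and the canonical anticommutation relations, together with the elementary splitting of the particle positions set up at the start.
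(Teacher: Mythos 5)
Your proof is correct and follows essentially the same route as the paper, whose proof is simply the assertion that the lemma follows by direct application of eqs.~(\ref{basis_finite}), (\ref{particle_coords}), (\ref{particle_coords_IJ}) and the definition of the Frobenius indices; you have carried out exactly that computation, splitting the particle positions $l_i=\lambda_i-i$ into the arms $a_i\ge 0$ and the holes $-b_i-1$ among the negative slots. In particular your sign bookkeeping checks out: the reordering of the interleaved product costs $(-1)^{\binom{r}{2}}$, the successive interior products on $|0\rangle=e_{-1}\swedge\cdots\swedge e_{-N}$ contribute $(-1)^{\sum_i b_i-\binom{r}{2}}$ because the strict decrease $b_1>\cdots>b_r$ makes the slot shift equal the number of prior removals, and the product is $(-1)^{\sum_i b_i}$ as claimed.
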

\begin{proof}
This follows by direct application of eqs.~(\ref{basis_finite}), (\ref{particle_coords}), (\ref{particle_coords_IJ})
and the definition of the Frobenius indices $(a_1, \dots, a_r | b_1, \dots, b_r)$.
\end{proof}

The pairs $\{(I, J)\}$ thus provide an equivalent labelling of the partitions $\{\lambda \ss (N)^N\}$,
which we denote
\be
\lambda(I, J) := (\ab | \bb) = \lambda^T(J,I).
\ee
We may replace the pair $(I,J)$ by the Frobenius indices  $(\ab | \bb)$, and label the basis elements
equivalently as
\be
|\lambda(I,J)\rangle =|(\ab | \bb) \rangle = e_{(I, J)}.
\ee
The element obtained by interchanging $(I, J)$   just corresponds to the transposed partition
\be
e_{(J, I)} = |(\bb | \ab)\rangle = |\lambda^T \rangle.
\ee
The Pl\"ucker map is thus
\bea
\Pl_V: \Gr_V(\HH_N) &\&  \ra \Pb(\Lambda^N(\HH_N)) \cr
\Pl_V:w &\& \mapsto [W_1 \wedge \cdots \wedge W_N]
 = [\sum_{r=0}^N\sum_{(I, J)} \pi_{\lambda(I,J)}(w) e_{(I,J)}].
\eea
A  {\em symmetric} partition is one that equals its transpose $\lambda = \lambda^T$,  so that $I = J$,  $\ab = \bb$,
with $(I,J)$ related to $(\ab, \bb)$ by (\ref{ab_IJ}).

\begin{remark} Note that, following standard usage, $|\lambda|$ denotes the {\em weight} of an integer partition $\lambda$
(i.e., the sum of its parts), while $\ell(\lambda)$ denotes its {\em length} (i.e., the number of nonzero parts).
For multi-indices $K=(K_1, \dots, K_m)$, however,  $|K | = m$ denotes the {\em cardinality}. There should be no confusion,
since we consistently use lower case Greek letters  $\lambda, \mu, \dots$ for partitions,  upper
case Roman letters $K=(K_1, \dots, K_m)$ for increasingly ordered multi-indices and lower case 
Roman letters $(l_1, l_2, \dots)$ for (finite or infinite) strictly decreasing sequences of integers.
\end{remark}

\subsection{Symplectic form and Lagrange map }
\label{lagrange_map}

Define the symplectic form $\omega_N \in \Lambda^{2}(\HH^*_N)$ on $\HH_N=V\oplus  V^*$  as
 \be
 \omega_N(u+\alpha, v+\beta) =  \alpha(v) - \beta(u)  \quad u, v \in V, \ \alpha, \beta \in V^*.
 \label{eq:omega_N_def}
 \ee
 In terms of the basis elements this is
\be
\omega_N:= \sum_{i=1}^N (-1)^{i +1}e^*_{-i}\swedge e^*_{i-1}   = \sum_{i=1}^N (-1)^i e_{-i} \swedge e_{i-1}.
\label{eq:omega_N_basis}
\ee
where we identify $\HH_N \sim \HH^*_N$  via the isomorphism (\ref{dual_basis_isom}).
The symplectic group $\Sp(\HH_N, \omega_N)$ is thus the subgroup of $\Gl(\HH_N)$ that preserves $\omega_N$:
\be
\Sp(\HH_N, \omega_N) = \{ g \in \Gl(\HH_N)\  |\  \omega_N(g X, g Y) = \omega_N(X, Y), \quad \forall \ X, Y \in \HH_N\}.
\ee
The Lagrangian Grassmannian $\Gr^\LL_V(\HH_N, \omega_N) \ss \Gr_V(\HH_N)$  consists of  those elements
$\{w^0\in \Gr_V(\HH_N) \}$ on which the restriction of $\omega_N$ is totally null:
 \be
 \omega_N \big|_{w^0} = 0.
 \ee
 The ``big cell'' in  $\Gr^\LL_V(\HH_N,\omega_N)$ consists of  elements $w^0\in \Gr_V(\HH_N) $ of the form
 \be
w^0 :=  \span\{e_{-i }+ \sum_{j=1}^N A_{ij}(w^0)(-1)^{j-1} e_{j-1}\}_{i=1, \dots, N},
\label{affine_matrix}
 \ee
 where $A(w^0) = A^T(w^0) $ is a symmetric $N \times N$ matrix, whose entries are
 the {\em affine coordinates} of $w^0$.

 The exterior space $\Lambda(V) \ss \Lambda^N(\HH_N)$ may be identified with the subspace
 \be
 \Lambda^S(\HH_N) \ss\Lambda^N(\HH_N)
 \ee
  spanned by basis elements $\{e_{(J,J)}=|\lambda\rangle\}$ corresponding to symmetric partitions 
  $\lambda = \lambda^T$ via the injection map
\bea
\iota_{\Lambda(V)}: \Lambda(V) &\&\ra \Lambda^S(\HH_N) \ss  \Lambda^N(\HH_N) \cr
&\& \cr
\iota_{\Lambda (V)}: \sum_{J }\LL_J \ e_{-J^c}&\&\mapsto \sum_{J } \LL_J  e_{(J,J)},
\label{LambdaV_inject}
\eea
 where  the sum is over all increasingly ordered multi-indices $J =(J_1, \cdots J_r) \ss (1, \dots, N)$  of
cardinality $0\le r \le N$.

Viewing $\{e_{(I,J)}\}_{I, J\subseteq (1, \dots, N), \atop | I |+ | J |=N}$ as an orthonormal basis
for $\Lambda^N(\HH_N)$ and identifying $\Lambda(V)$ with its image  under the injection map $\iota_{\Lambda(V)}$,
 we have the orthogonal projection
\bea
{\Pr}_{\Lambda (V)}: \Lambda^N(\HH_N) &\&\ra \Lambda(V) \cr
&\& \cr
{\Pr}_{\Lambda (V)}: \sum_ {I, J \subseteq (1, \dots, N)\atop | I |=|J^c |} \pi_{(I,J)} e_{(I,J)} &\&
\mapsto \sum_ {J \subseteq (1, \dots, N)} \pi_{(J,J)} e_{-J^c}
\eea

  \begin{definition}
 The {\em Lagrange map}
 \label{lagrange_map_def_finite}
 \be
 \LL^N: \Gr^\LL_V(\HH_N,\omega_N)) \ra \Pb(\Lambda(V))
 \label{Lagrange_map_finite}
 \ee
 is defined to be the composition of the restriction of the Pl\"ucker map $\Pl_V |_{ \Gr^\LL_V(\HH_N,\omega_N))}$ to
 $\Gr^\LL_V(\HH_N,\omega_N)$ with the projection $\Pr_{\Lambda(V)}$:
 \be
   \LL^N  :=   {\Pr}_{\Lambda(V)} \circ \Pl_V |_{\Gr^\LL_V(\HH_N,\omega_N)}.
   \label{lagrange_map_finite_def}
 \ee
 \end{definition}
 It is therefore expressed in terms of the basis  as
\be
 \LL^N(w^0) = \big[ \sum_{J \subseteq (1, \dots, N)}  \LL_J(w^0) e_{-J^c}\big],
  \label{lagrange_map_finite_basis}
 \ee
where
\be
\LL_J(w^0) := \pi_{\lambda(J, J)}(w^0)
\label{lagrange_coord}
\ee
will be referred to as the {\em Lagrange coefficients}.
   It follows from the generalized Giambelli identity (\cite{HB}, Appendix C) that on the big cell of $\Gr^\LL_V(\HH_N,\omega_N)$,
the Pl\"ucker coodinates are, up to projective equivalence, the minor determinants of the affine coordinate matrix $A(w^0)$.
In particular, for $w^0$ in the  big cell, the $\LL_J(w^0)$'s are, within projective equivalence,
 the determinants of the principal submatrices $A_J(w^0)$ with rows and columns  in $J$
   \be
   \LL_J(w^0) = \det\left(A_J(w^0)\right).
   \label{lagrange_coord_affine}
   \ee

Thus,  for $w^0$ in the big cell, $\LL$  maps $w^0$ to an element of $\Pb(\Lambda(V))$ 
which  may  be expressed as 
   \be
   \LL^N(w^0) =[\sum_ {I \subseteq (1, \dots, N)}\det(A_J(w^0)) e_{-J^c} ].
   \label{lagrange_affine_coord}
   \ee
in the standard basis $\{e_{-J^c}\}_{J\ss (1, \dots N)}$.

\subsection{Decomposition of  $\Lambda(\HH_N)$ into irreducible representations of $\Sp(\HH_N, \omega_N)$ }
 \label{decomp_Sp_2N_irreps}

Viewed as an endomorphism of $\Lambda(\HH_N)$, the inner product with the  symplectic form  will be denoted
\bea
\hat{ \omega}^\dag_N:\Lambda(\HH_N) &\& \ra \Lambda(\HH_N) \cr
  \hat{\omega}^\dag_N: \mu &\& \mapsto i_{\omega_N}\mu,
 \label{eq:OperatorOmegaNDag}
\eea
and the (dual) exterior product as
\nopagebreak{
\bea
   \hat{\omega}_N:\Lambda(\HH_N)&\& \ra \Lambda(\HH_N) \cr
   \hat{\omega}_N :  \mu &\& \mapsto \omega_N \swedge \mu.
   \label{eq:OperatorOmegaN}
\eea
In terms of fermionic creation and annihilation operators, these can be written as
\bea
\hat{\omega}^\dag_N &\&= \sum_{i=0}^{N-1}(-1)^i\psi^\dag_{-i-1}\psi^\dag_i ,
\label{eq:OperatorOmega_dag_N}
\\
\hat{\omega}_N &\&= -\sum_{i=0}^{N-1}(-1)^i\psi_{-i-1}\psi_{i} .
\label{eq:OperatorOmega_N}
\eea

\begin{definition}
For every pair of integers $j, k\in\Nb$ satisfying 
\be
 0\leq j\leq N, \quad 2j\leq k\leq N+j,
\ee
 define the subspace $P^{k}_{k-2j} \ss \Lambda^k(\HH_N)$  as
\be
P^k_{k-2j}= (\hat\omega_N)^j\left(\ker(\hat\omega_N^\dag )\big\vert_{\Lambda^{k-2j}(\HH_N)}\right)\ss \Lambda^k(\HH_N).
\label{Pk_k-2j_def}
\ee
In particular $P^N_N \ss \Lambda^N(\HH_N)$ is defined by either of the equivalent linear relations
\be
\phi \in P^N_N \ \text{if and only if } \  \hat{\omega}^\dag_N(\phi) =0,  \quad
\phi  \in P^N_N \ \text{if and only if }  \  \hat{\omega}_N(\mu)=0.
\label{linear_lagrange_rels}
\ee
\end{definition}
The following is a standard result in the representation theory of $\Sp(\HH_N, \omega_N)$ \cite{Bourb1, Bourb2, FuH}
\begin{proposition}
\label{PR_N_j_irrep}
The subspaces $P^k_{k-2j}\ss \Lambda^k(\HH_N)$ are invariant and irreducible under the $\Sp(\HH_N, \omega_N) \ss \Gl(\HH_N)$
action on $\Lambda^k(\HH_N)$, which decomposes into their direct sum:
 \be
 \Lambda^k(\HH_N) = \bigoplus_{j=0}^{\lfloor k/2 \rfloor} P^k_{k-2j}.
 \ee
 The exterior algebra $\Lambda(\HH_N)$ thus decomposes into the direct sum:
 \be
 \Lambda(\HH_N) = \bigoplus_{k=0}^N\bigoplus_{j=0}^{\lfloor k/2 \rfloor} P^k_{k-2j}.
    \label{eq:LambdaWDecompositionIrreducibles}
 \ee
The isomorphism class of  $P^k_{k-2j}$ is given by the partition $(1)^{k-2j}$, and has dimension
 \be
 \binom{2k}{k-2j} -  \binom{2k}{k-2j-2} =\frac{2j+1}{k+1}\binom{2k +2}{ k-2j}.
 \label{dimP_k_k-2j}
 \ee
 \end{proposition}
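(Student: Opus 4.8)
The plan is to recognize the pair $(\hat\omega_N,\hat\omega^\dag_N)$ as the raising/lowering operators of an $\grsl(2)$-triple that commutes with $\Sp(\HH_N,\omega_N)$, and to obtain (\ref{eq:LambdaWDecompositionIrreducibles}) as the joint decomposition under this commuting pair --- the symplectic Lefschetz picture. Set $e:=\hat\omega_N$, $f:=\hat\omega^\dag_N$ and $H:=(\deg-N)\,\mathrm{Id}$, where $\deg$ acts on $\Lambda^k(\HH_N)$ by multiplication by $k$. Using the fermionic expressions (\ref{eq:OperatorOmega_dag_N})--(\ref{eq:OperatorOmega_N}) and the anticommutation relations, the first (routine) step is to verify $[H,e]=2e$, $[H,f]=-2f$ and $[e,f]=H$; evaluating $[e,f]$ on $\Lambda^0(\HH_N)$ reproduces the constant via $i_{\omega_N}\omega_N=N$, which is what fixes the shift in $H$. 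Then $\Lambda(\HH_N)$ is a finite-dimensional, hence completely reducible, $\grsl(2)$-module, with $\Lambda^k(\HH_N)$ the $H$-eigenspace of weight $k-N$.

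Next I would read off the structure from $\grsl(2)$ theory. The lowest-weight vectors in degree $m$ are exactly $\ker\hat\omega^\dag_N\big\vert_{\Lambda^m(\HH_N)}$, which can be nonzero only for $m\le N$ (weight $m-N\le 0$); such a primitive vector generates an irreducible string $v,ev,\dots,e^{N-m}v$ occupying degrees $m,m+2,\dots,2N-m$. Consequently $\hat\omega_N^{\,j}$ is injective on $\ker\hat\omega^\dag_N\big\vert_{\Lambda^{k-2j}}$ precisely when $j\le N-(k-2j)$, i.e.\ $k\le N+j$, which is the range imposed in the Definition; this identifies $P^k_{k-2j}$ as a single weight-$(k-N)$ slice of these strings and yields the vector-space isomorphism $P^k_{k-2j}\cong\ker\hat\omega^\dag_N\big\vert_{\Lambda^{k-2j}}$. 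Grouping the weight-$(k-N)$ vectors according to the string containing them gives $\Lambda^k(\HH_N)=\bigoplus_{j}P^k_{k-2j}$ (summands with $k-2j>N$ being zero), and letting $k$ run over $0,\dots,2N$ produces (\ref{eq:LambdaWDecompositionIrreducibles}).

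Because $\omega_N$ is $\Sp(\HH_N,\omega_N)$-invariant, one has $g\cdot(\omega_N\swedge\mu)=\omega_N\swedge(g\mu)$ and $g\cdot i_{\omega_N}\mu=i_{\omega_N}(g\mu)$, so $e$ and $f$ commute with the $\Sp$-action; hence $\ker\hat\omega^\dag_N$ is an $\Sp$-submodule and $\hat\omega_N^{\,j}$ is an $\Sp$-isomorphism onto $P^k_{k-2j}$. This makes invariance of every $P^k_{k-2j}$ immediate and reduces everything to the primitive space $\ker\hat\omega^\dag_N\big\vert_{\Lambda^m}$, $m=k-2j$. Fixing a Cartan subalgebra and positive system for $\grsp(\HH_N,\omega_N)=C_N$ in which $\HH_N$ has weights $\pm\epsilon_1,\dots,\pm\epsilon_N$, the vector $v_{\epsilon_1}\swedge\cdots\swedge v_{\epsilon_m}$ is killed by every positive root vector and, containing no $\omega$-dual pair $\{\epsilon_i,-\epsilon_i\}$, also by $\hat\omega^\dag_N$; it is thus a primitive highest-weight vector of weight $\varpi_m=\epsilon_1+\cdots+\epsilon_m$, the $m$-th fundamental weight, so $\ker\hat\omega^\dag_N\big\vert_{\Lambda^m}$ contains the irreducible module $V_{\varpi_m}=(1)^m$.

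The one genuinely representation-theoretic point --- the main obstacle --- is that this primitive space is \emph{exactly} $V_{\varpi_m}$, with no further constituents, i.e.\ its irreducibility. I would settle this by a dimension count: for $1\le m\le N$ the string structure makes $\hat\omega_N:\Lambda^{m-2}(\HH_N)\to\Lambda^m(\HH_N)$ injective, equivalently $\hat\omega^\dag_N:\Lambda^m(\HH_N)\to\Lambda^{m-2}(\HH_N)$ surjective, so rank--nullity gives
\be
\dim\ker\hat\omega^\dag_N\big\vert_{\Lambda^m(\HH_N)}=\binom{2N}{m}-\binom{2N}{m-2},
\ee
matching the Weyl dimension of $V_{\varpi_m}$ and forcing $\ker\hat\omega^\dag_N\big\vert_{\Lambda^m}=V_{\varpi_m}$; hence each $P^k_{k-2j}$ is irreducible of isomorphism class $(1)^{k-2j}$. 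Conceptually this is the Howe duality of the pair $\big(\grsl(2),\grsp(\HH_N,\omega_N)\big)$ on $\Lambda(\HH_N)$, and it may equally be quoted from \cite{FuH,Bourb2}. The dimension of $P^k_{k-2j}$ is then $\binom{2N}{k-2j}-\binom{2N}{k-2j-2}$, and the identity $\sum_{k,j}\dim P^k_{k-2j}=\sum_k\binom{2N}{k}=2^{2N}$ confirms that (\ref{eq:LambdaWDecompositionIrreducibles}) is complete.
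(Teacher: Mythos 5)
Your proposal is correct, but it takes a genuinely different route from the paper, because the paper does not prove Proposition \ref{PR_N_j_irrep} at all: it is stated as a standard fact with references to Bourbaki and Fulton--Harris. What the paper does do, later in Section \ref{P_k_k-2j_bases}, is construct an explicit combinatorial basis $\{\phi_{j,T,K}\}$ adapted to the decomposition, and Lemma \ref{lemm:OmegaOmegaDagActionBasisFinite} there ($\hat\omega_N\phi_{j,T,K}=(j+1)\phi_{j+1,T,K}$, $\hat\omega^\dag_N\phi_{j,T,K}=(|\overline K|-2l-j+1)\phi_{j-1,T,K}$) is precisely your $\grsl(2)$-string structure made explicit: on that basis one reads off $[\hat\omega_N,\hat\omega^\dag_N]=\deg-N$, which is your $H$, and Proposition \ref{prop:adapted_p_k_j_basis} is the resulting direct-sum decomposition. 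But the paper never uses this to establish irreducibility in finite dimensions --- that remains a citation, and even the infinite-dimensional irreducibility (Proposition \ref{prop:FmjIrreducible}) is proved by reduction to the quoted finite-dimensional fact. Your argument supplies the missing proof along standard symplectic-Lefschetz/Howe-duality lines, and each step checks out: injectivity of $e=\hat\omega_N$ on negative $H$-weights simultaneously explains the range $k\le N+j$ in the Definition and gives the surjectivity of $\hat\omega^\dag_N:\Lambda^m\to\Lambda^{m-2}$ for $m\le N$ used in your rank--nullity count; the vector $v_{\epsilon_1}\swedge\cdots\swedge v_{\epsilon_m}$ is indeed primitive of extremal weight $\varpi_m$ (one should note, as you implicitly do, that $\varpi_m+\alpha$ is not a weight of $\Lambda^m(\HH_N)$ for any positive root $\alpha$), and complete reducibility plus the count $\dim\ker\hat\omega^\dag_N\vert_{\Lambda^m}=\binom{2N}{m}-\binom{2N}{m-2}=\dim V_{\varpi_m}$ forces equality. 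The only external input is the Weyl dimension of $V_{\varpi_m}$, which is independent of any exterior-algebra realization, so there is no circularity; what your approach buys is a self-contained proof, while the paper's basis construction buys explicit formulas needed later but leaves irreducibility to the literature.

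One apparent discrepancy is in the statement, not in your proof. Your dimension $\binom{2N}{k-2j}-\binom{2N}{k-2j-2}$ disagrees with the printed formula (\ref{dimP_k_k-2j}), which has $2k$ where $2N$ should stand; the printed version is correct only when $k=N$ (for instance, with $N=3$, $k=2$, $j=0$, the kernel of $i_{\omega_3}$ on $\Lambda^2(\Cb^6)$ has dimension $\binom{6}{2}-1=14$, not $\binom{4}{2}-1=5$), so it should be read as a typo tuned to the case $P^N_N$ actually used in the sequel. Similarly, your summation range $0\le k\le 2N$ (with $P^k_{k-2j}=0$ when $k-2j>N$) is the right one in (\ref{eq:LambdaWDecompositionIrreducibles}); as printed, with $k\le N$, the right-hand side misses the degrees above $N$, as your closing check $\sum_{k,j}\dim P^k_{k-2j}=2^{2N}$ makes visible. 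You silently corrected both points, which is to your credit.
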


 The fact that a subspace $w^0\ss \HH_N$ is Lagrangian is defined by the  condition
 \be
\omega_N (u,  v) = 0, \quad \forall \ u, v \in w^0,
 \ee
 which implies:
 \begin{proposition}
 \label{plucker_lagrangian_linear_image}
 The image of the restriction of the Pl\"ucker map
 \be
 \grP\grl_V: Gr^\LL_V(\HH_N,\omega_N))\ra\Pb(\Lambda^N(\HH_N))
 \ee
to the Lagrangian Grasssmannian $Gr^\LL_V(\HH_N,\omega_N))$ lies in $P^N_N$,
which is its linear span.  The number of independent linear relations (\ref{linear_lagrange_rels}) that determine it is thus $\binom{2N}{N-2}$.
  \end{proposition}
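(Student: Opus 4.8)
The plan is to prove the three assertions separately: that the image lies in $P^N_N$, that $P^N_N$ is its linear span, and that the defining relations (\ref{linear_lagrange_rels}) number $\binom{2N}{N-2}$. The first is a direct computation using isotropy, the second combines $\Sp(\HH_N,\omega_N)$-equivariance with the irreducibility of $P^N_N$ from Proposition~\ref{PR_N_j_irrep}, and the third is a rank--nullity count.

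First I would show $\grP\grl(w^0)\in P^N_N$ for every $w^0\in\Gr^\LL_V(\HH_N,\omega_N)$. Picking a basis $\{w_1,\dots,w_N\}$ of $w^0$, its image is the class of $w_1\swedge\cdots\swedge w_N$, and the standard formula for contracting a decomposable $N$-vector with the $2$-form $\omega_N$ reads
\be
\hat\omega_N^\dag(w_1\swedge\cdots\swedge w_N)=i_{\omega_N}(w_1\swedge\cdots\swedge w_N)=\sum_{1\le p<q\le N}(-1)^{p+q-1}\,\omega_N(w_p,w_q)\,w_1\swedge\cdots\widehat{w_p}\cdots\widehat{w_q}\cdots\swedge w_N.
\ee
Since $w^0$ is Lagrangian, $\omega_N(w_p,w_q)=0$ for all $p,q$, so the right-hand side vanishes and $\grP\grl(w^0)\in\ker(\hat\omega_N^\dag)\big\vert_{\Lambda^N(\HH_N)}=P^N_N$ by (\ref{linear_lagrange_rels}).

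To see that $P^N_N$ is precisely the linear span of the image, I would argue by equivariance and irreducibility. Choosing for each Lagrangian $w^0$ a representative vector $w_1\swedge\cdots\swedge w_N$, let $S\ss\Lambda^N(\HH_N)$ be their linear span; by the previous step $S\ss P^N_N$. Since the Pl\"ucker map is $\Gl(\HH_N)$-equivariant and $\Sp(\HH_N,\omega_N)\ss\Gl(\HH_N)$ preserves the Lagrangian Grassmannian, the collection of representatives is permuted up to scalars by $\Sp(\HH_N,\omega_N)$, so $S$ is an $\Sp(\HH_N,\omega_N)$-submodule of $P^N_N$. Moreover $S\neq\{0\}$: the subspace $V$ is itself Lagrangian and maps to the nonzero vacuum class $[\,|0\rangle\,]$. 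As $P^N_N$ is irreducible by Proposition~\ref{PR_N_j_irrep}, the nonzero invariant subspace $S$ must be all of $P^N_N$.

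Finally, the count of independent relations follows from rank--nullity. The conditions (\ref{linear_lagrange_rels}) are the $\binom{2N}{N-2}=\dim\Lambda^{N-2}(\HH_N)$ scalar equations asserting the vanishing of $\hat\omega_N^\dag(\phi)\in\Lambda^{N-2}(\HH_N)$, and the number of independent ones equals the rank of $\hat\omega_N^\dag\big\vert_{\Lambda^N(\HH_N)}$, which by rank--nullity is $\dim\Lambda^N(\HH_N)-\dim P^N_N$. With $\dim\Lambda^N(\HH_N)=\binom{2N}{N}$ and, from (\ref{dimP_k_k-2j}) at $k=N,\,j=0$, $\dim P^N_N=\binom{2N}{N}-\binom{2N}{N-2}$, this difference equals $\binom{2N}{N-2}$, which moreover coincides with $\dim\Lambda^{N-2}(\HH_N)$, so $\hat\omega_N^\dag$ is onto $\Lambda^{N-2}(\HH_N)$ and all the relations are independent. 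The one step that is not purely formal is the passage from $S\ss P^N_N$ to $S=P^N_N$: irreducibility in Proposition~\ref{PR_N_j_irrep} is exactly what rules out a proper invariant span, sparing one the explicit verification (e.g.\ via the principal-minor coordinates on the big cell) that enough Lagrangian planes are present to fill out $\ker(\hat\omega_N^\dag)$.
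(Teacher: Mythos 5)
Your proof is correct and takes essentially the approach the paper intends: the paper states Proposition~\ref{plucker_lagrangian_linear_image} without a written proof, as an immediate consequence of the isotropy condition together with the representation theory just established. Your three steps --- the vanishing of $i_{\omega_N}$ on a decomposable $N$-vector built from an isotropic basis, the identification of the linear span via $\Sp(\HH_N,\omega_N)$-equivariance combined with the irreducibility of $P^N_N$ from Proposition~\ref{PR_N_j_irrep}, and the rank--nullity count $\binom{2N}{N}-\dim P^N_N=\binom{2N}{N-2}$ from the dimension formula (\ref{dimP_k_k-2j}) at $k=N$, $j=0$ (which also shows $\hat\omega_N^\dag$ is surjective onto $\Lambda^{N-2}(\HH_N)$, so all the relations (\ref{linear_lagrange_rels}) are independent) --- supply precisely the details the paper leaves implicit.
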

  \begin{remark}
  A simple way to express these linear relations in terms of Pl\"ucker coordinates is given in \cite{CZ}.
   For any subset $\alpha \ss \{-N, \dots,  N-1\}$ of cardinality $N-2$,
   whose negative indices are denoted $-I$ and nonnegative indices $J-1$, let
  $ \lambda(I(\alpha, i), J(\alpha, i) )$  denote the partition obtained by adding the pair $(i, i) $ to
the increasingly ordered sets $(I,J)$,  where $i$ does not belong to $I \cup J$. The linear
 relations determining $P^N_N$ are then expressed in terms of the Pl\"ucker coordinates by
  \be
  \sum_{}\pi_{(\lambda(I(\alpha, i)), (J(\alpha, i))} =0,
  \label{linear_rels}
  \ee
  where the sum is over all $i\in \{1, \dots, N\}$ that do not belong to $I\cup J$.
    \end{remark}
 In particular all basis elements
 \be
|\lambda\rangle = |(\ab | \ab) \rangle := e_{(I, I)}
\ee
 corresponding to symmetric partitions  $\lambda = \lambda^T$ belong to $P^N_N$.
 Their linear span may be viewed as a subspace of  $P^N_N$ or, equivalently, as the exterior
 space $\Lambda(V)$, under the injection $i_{\Lambda(V)}$ defined in (\ref{LambdaV_inject}}).

\begin{corollary}
Of these linear relations, it is possible to choose
\be
\sum_{j=1}^{\lfloor N/2 \rfloor}(-1)^{j - 1}\binom{2 N} {N - 2 j} = \frac{1}{2}\binom{2N}{N}  - 2^{N-1}
\label{number_2_term}
\ee
independent ones involving two terms only,  consisting of the equalities
\be
\pi_\lambda(w^0) = \pi_{\lambda^T}(w^0)
\label{2_term_linear_rels}
\ee
between Pl\"ucker coordinates corresponding to all  pairs $(\lambda, \lambda^T)$
of distinct partitions within the square $(N)^N$.
\end{corollary}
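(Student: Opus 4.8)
The plan is to prove the statement in two stages: first that each asserted equality $\pi_\lambda(w^0)=\pi_{\lambda^T}(w^0)$ is a genuine linear relation on the image of $\Gr^\LL_V(\HH_N,\omega_N)$ under the Pl\"ucker map (so that it sits among the relations of Proposition \ref{plucker_lagrangian_linear_image} cutting out $P^N_N$), and second that these two-term relations are linearly independent and number exactly $\tfrac12\binom{2N}{N}-2^{N-1}$, this count then being matched to the alternating sum in (\ref{number_2_term}). The conceptual heart, and the step I expect to require the most care, is the first stage: one must pin down that the relevant minor identification is sign-free, so that the relation is the honest equality $\pi_\lambda=\pi_{\lambda^T}$ with coefficient $+1$ rather than $\pi_\lambda=\pm\pi_{\lambda^T}$, and then justify passing from the big cell to the whole Lagrangian Grassmannian.

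For the first stage I would argue on the big cell and spread the conclusion by density. For $w^0$ of the form (\ref{affine_matrix}) the affine coordinate matrix is symmetric, $A(w^0)=A^T(w^0)$. By the generalized Giambelli identity (\cite{HB}, Appendix C)---the same identity that yields (\ref{lagrange_coord_affine})---the Pl\"ucker coordinate $\pi_{\lambda(I,J)}(w^0)$ equals, up to the common projective scale, the minor $\det\!\big(A_{I,J}(w^0)\big)$ with rows indexed by $I$ and columns by $J$. Since the interchange $(I,J)\mapsto(J,I)$ is precisely conjugation $\lambda\mapsto\lambda^T$, and since $A_{J,I}=A_{I,J}^{T}$ forces $\det A_{I,J}=\det A_{J,I}$, the equality $\pi_\lambda(w^0)=\pi_{\lambda^T}(w^0)$ holds on the big cell. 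Each such identity is linear and homogeneous in the everywhere-holomorphic Pl\"ucker coordinates, and the big cell is Zariski dense, so it extends to all of $\Gr^\LL_V(\HH_N,\omega_N)$. Hence the functional $e^*_{(I,J)}-e^*_{(J,I)}$ vanishes on the image of the Pl\"ucker map; as that image spans $P^N_N$ by Proposition \ref{plucker_lagrangian_linear_image}, the functional annihilates $P^N_N$, i.e. it lies among the $\binom{2N}{N-2}$ linear relations defining $P^N_N$.

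For the second stage, independence is immediate: conjugation is an involution partitioning the non-self-conjugate partitions in $(N)^N$ into two-element orbits $\{\lambda,\lambda^T\}$, so the functionals attached to distinct orbits have pairwise disjoint support and are linearly independent. To count them I would use that the total number of partitions in $(N)^N$ is $\binom{2N}{N}$, while the self-conjugate ones correspond to Frobenius data with $\ab=\bb$, that is, to a single strictly decreasing sequence drawn from $\{0,\dots,N-1\}$, and hence number $2^N$. The non-self-conjugate partitions thus form $\tfrac12\big(\binom{2N}{N}-2^N\big)=\tfrac12\binom{2N}{N}-2^{N-1}$ conjugate pairs, giving exactly that many independent two-term relations; being independent elements of the $\binom{2N}{N-2}$-dimensional relation space, they can indeed be selected among the relations of Proposition \ref{plucker_lagrangian_linear_image}.

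It remains to identify this count with the alternating sum in (\ref{number_2_term}). Writing $\Sigma:=\sum_{j}(-1)^j\binom{2N}{N-2j}$, summed over all $j$ with $0\le N-2j\le 2N$, the symmetry $\binom{2N}{N-2j}=\binom{2N}{N+2j}$ gives $\Sigma=\binom{2N}{N}+2\sum_{j\ge1}(-1)^j\binom{2N}{N-2j}$, so that $\sum_{j\ge1}(-1)^{j-1}\binom{2N}{N-2j}=\tfrac12\binom{2N}{N}-\tfrac12\Sigma$, and the claim reduces to $\Sigma=2^N$. I would obtain this by evaluating $(1+\sqrt{-1})^{2N}=(2\sqrt{-1})^{N}=2^N(\sqrt{-1})^{N}$ and comparing with $\sum_{m=0}^{2N}\binom{2N}{m}(\sqrt{-1})^m$: collecting the terms with $m\equiv N\pmod 2$ produces $(\sqrt{-1})^{N}\Sigma$, while the remaining terms produce $(\sqrt{-1})^{N+1}\Sigma'$ with $\Sigma'$ real, whence $\Sigma+\sqrt{-1}\,\Sigma'=2^N$ and, taking real parts, $\Sigma=2^N$. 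This closes the identity and completes the count.
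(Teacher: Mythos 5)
Your proof is correct and follows essentially the route the paper leaves implicit: the corollary is stated there without a written proof, and the intended justification is precisely your combination of the generalized Giambelli identity with the symmetry $A(w^0)=A^T(w^0)$ on the big cell, Zariski-density to extend $\pi_\lambda=\pi_{\lambda^T}$ to all of $\Gr^\LL_V(\HH_N,\omega_N)$, the spanning statement of Proposition \ref{plucker_lagrangian_linear_image} to place these functionals among the $\binom{2N}{N-2}$ relations cutting out $P^N_N$, and the count $\frac{1}{2}\bigl(\binom{2N}{N}-2^N\bigr)$ of conjugate pairs of non-self-conjugate partitions in the $N\times N$ square (self-conjugate ones corresponding to subsets of $\{0,\dots,N-1\}$ via $\ab=\bb$). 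Your closing verification of the alternating binomial identity via $(1+\sqrt{-1})^{2N}=2^N(\sqrt{-1})^N$ is a correct supplement that the paper does not spell out.
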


\subsection{Lagrange map, hyperdeterminantal relations and inverse}
\label{lagrange_hyperdet_relations_N}

\subsubsection{Lagrange coefficients, principal minors,  hyperdeterminantal relations}
\label{lagrange_princip_minors}

On the big cell, the  {\em hyperdeterminantal relations} \cite{HoSt, Oed}  are satisfied  by the  principal minor determinants
of the affine coordinate matrix which, up to projectivization, coincide with the Lagrange coefficients
   \be
   D_J(A(w^0) := \det(A_J(w^0))= \LL_J(w^0).
   \ee
   To express these, we extend the definition of $\LL_J$ to allow any distinct subset $J=(J_1, \dots , J_r)\ss (1, \dots, N)$
   of cardinality $r$, regardless of order, with the value of $\LL_J$ the same for all orderings. Now choose an additional
   triplet $(j_1, j_2, j_3)$ of distinct elements of $\{1, \dots,  N\}$ which are also distinct from the elements of $J$ (so $r\le N-3$).
   and denote by $(J, j_a)$, $(J, j_a, j_b)$ and $(J, j_1, j_2, j_3)$, for $a, b = 1, \dots 3,\ a\neq b$, the subsets of $(1, \dots , N)$
   with the indicated elements. We then have the following result:
   \begin{proposition}
   \label{hyperdet_finite}
   The Lagrange coefficients  satisfy the ``core'' hyperdeterminantal relations
   \bea
 &\& \LL_J^2\LL^2_{J, j_1,j_2,j_3} +\LL^2_{J, j_1} \LL^2_{J, j_2, j_3}  +\LL^2_{J, j_2} \LL^2_{J, j_1, j_3} +\LL^2_{J, j_3} \LL^2_{J, j_1, j_2}  \cr
&\&-2\LL_J \LL_{J,j_1}\LL_{J, j_2, j_3}\LL_{J, j_1. j_2, j_3}  - 2\LL_J  \LL_{J,j_2}\LL_{J, j_1, j_3}\LL_{J, j_1. j_2, j_3}  -2\LL_J  \LL_{J,j_3}\LL_{J, j_1, j_2}\LL_{J, j_1. j_2, j_3} \cr
 &\& -2\LL_{J,j_1}\LL_{J, j_2}\LL_{J, j_1,j_3}\LL_{J,j_2, j_3}  -2\LL_{J, j_1}\LL_{J, j_3}\LL_{J, j_1,j_2}\LL_{J,j_2, j_3}
 - 2\LL_{J,j_2}\LL_{J, j_3}\LL_{J, j_1,j_2}\LL_{J, j_1, j_3} \cr
 &\&    -4 \LL_{J}\LL_{J,j_1, j_2}\LL_{J, j_1, j_3} \LL_{J, j_2, j_3} - 4 \LL_{J, j_1}\LL_{J, j_2} \LL_{J, j_3} \LL_{J, j_1, j_2, j_3} =0.
 \cr
 &\&
 \label{hyper_det_rels}
 \eea
   \end{proposition}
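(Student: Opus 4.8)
The plan is to reduce the relation, for arbitrary $J$ and arbitrary ambient dimension $N$, to a single universal identity among the eight principal minors of a $3\times 3$ symmetric matrix, and then to verify that identity by direct expansion. Throughout I work on the big cell, where $\LL_J(w^0) = \det(A_J(w^0))$ for the symmetric affine coordinate matrix $A=A(w^0)$ of (\ref{affine_matrix}), and I treat the upper-triangular entries of $A$ as free coordinates. Since (\ref{hyper_det_rels}) is a polynomial identity in these entries and each $\LL_J$ is a regular function on the (irreducible) big cell, it suffices to prove it on the Zariski-dense open subset where $\det(A_J)\neq 0$.

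First I would pass to the $(|J|+3)\times(|J|+3)$ symmetric principal submatrix $B:=A_{J\cup\{j_1,j_2,j_3\}}$, which contains all eight minors occurring in (\ref{hyper_det_rels}) as its principal minors indexed by $J\cup S$, with $S\subseteq\{j_1,j_2,j_3\}$. On the locus $\det(A_J)\neq 0$ I form the Schur complement $C:=B/A_J$, a symmetric $3\times 3$ matrix indexed by $\{j_1,j_2,j_3\}$. The decisive point is that Schur complementation commutes with passage to principal minors: writing $B=\left(\begin{smallmatrix} A_J & Q\\ Q^T & R\end{smallmatrix}\right)$, the $(a,b)$ entry of $C$ is $R_{ab}-(Q^TA_J^{-1}Q)_{ab}$, so for every $S$ one has $R_S-Q_S^TA_J^{-1}Q_S=C_S$, whence
\be
\LL_{J\cup S}(w^0) = \det(A_{J\cup S}) = \det(A_J)\,\det(C_S) = \LL_J(w^0)\,\det(C_S),
\ee
where $C_S$ is the principal submatrix of $C$ on rows and columns in $S$ and $\det(C_\emptyset):=1$.

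Substituting these expressions into (\ref{hyper_det_rels}), each quartic monomial is a product of exactly four $\LL$-factors, so it acquires the common factor $\LL_J^4=\det(A_J)^4\neq 0$, which may be divided out. What remains is exactly the relation (\ref{hyper_det_rels}) in the special case $J=\emptyset$, $N=3$: namely that the eight principal minors $\{\det(C_S)\}_{S\subseteq\{1,2,3\}}$ of an arbitrary symmetric $3\times 3$ matrix $C$ satisfy the quartic (Cayley $2\times 2\times 2$ hyperdeterminant) relation. This I would verify by direct substitution, parametrizing $C$ by its six entries and expanding the eight minors, so that the quartic collapses to the zero polynomial. Being a polynomial identity in the entries of $C$, it then holds for every symmetric $C$, in particular for the rational matrix $C=B/A_J$; multiplying back by $\det(A_J)^4$ recovers (\ref{hyper_det_rels}) on $\{\det(A_J)\neq 0\}$, and hence on the whole big cell by density.

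The conceptual content is entirely in the Schur-complement reduction, which eliminates the two nuisance parameters $J$ and $N$ and shows that there is really only one relation to check. The main obstacle is then the $3\times 3$ verification: it is completely mechanical but genuinely bulky (this is the content of the Holtz--Sturmfels/Oeding computation in \cite{HoSt, Oed}), and organizing it so the cancellations are transparent --- for instance by first treating the diagonal case $C=\diag(\cdot)$ and then tracking the off-diagonal corrections $-c_{ab}^2$, or by exploiting the $(\Sl(2))^3\rtimes S_3\ss G_N$ symmetry to normalize $C$ --- is where care is needed. An alternative that avoids the brute-force expansion, and which fits the Pl\"ucker-theoretic framework of this paper, is to derive the quartic from the quadratic short Pl\"ucker relations (\ref{short_plucker}) together with the linear isotropy conditions (\ref{linear_lagrange_rels})--(\ref{2_term_linear_rels}); I would expect that to be the route developed in Sections \ref{geometry_plucker_restriction}--\ref{hyperdet_Gr_L_3_6}.
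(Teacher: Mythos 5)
Your argument is correct in substance, but it takes a genuinely different route from the paper's. The paper never performs the $3\times 3$ expansion itself (that computation is attributed to \cite{HoSt, Oed}); its own proof, developed in Sections \ref{geometry_plucker_restriction} and \ref{hyperdet_Gr_L_3_6} exactly as you guess in your final sentence, reduces a generic isotropic $N$-vector to the $(3,6)$-dimensional case by contraction $i_{f_{(A,I)}}$ and projection $p^3_{(B,I)}$ (Propositions \ref{prop:plucker-hyper} and \ref{prop:N_3_hyper_red}), and then eliminates the twelve nonsymmetric Pl\"ucker coordinates: the short Pl\"ucker relations express $T_i^2$ and $T_{i^*}^2$ as quadratics in the eight symmetric coordinates (\ref{short1})--(\ref{short3*}), the four-term relations give $2T_iT_{i^*}$ (\ref{long1})--(\ref{long3}), and squaring the latter and comparing with the product of the former yields the quartic (\ref{hyperdet_3_6}). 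Your Schur-complement step is valid: $C=B/A_J$ is symmetric, complementation commutes with passing to principal submatrices, so $\LL_{J\cup S}=\LL_J\det(C_S)$ and $\LL_J^4$ factors uniformly out of every quartic monomial, reducing the statement to the single Holtz--Sturmfels relation for a symmetric $3\times3$ matrix. What your route buys is elementary linear algebra, independent of the Pl\"ucker machinery, at the price of the bulky $3\times3$ expansion you defer; what the paper's route buys is that it stays entirely inside the quadratic-Pl\"ucker-plus-linear-isotropy framework, which is precisely what lets the same argument pass to the infinite-dimensional CKP setting (Proposition \ref{prop:hypedet_inf_red}) and yields the hexahedron recurrences as a by-product. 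One small completeness point: your density argument extends the relation over the big cell only; since the quartic is homogeneous and the big cell is dense in the irreducible Lagrangian Grassmannian, it does extend to all of $\Gr^\LL_V(\HH_N,\omega_N)$, and you should say so.

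One warning, since your plan ends with a literal expansion: as printed, (\ref{hyper_det_rels}) carries the wrong sign on its two degree-four terms. For $A=\diag(a,b,c)$ and $J=\emptyset$ the displayed left-hand side evaluates to $-16\,a^2b^2c^2\neq 0$; the identity holds with those two terms entering with coefficient $+4$, in agreement with Cayley's $2\times2\times2$ hyperdeterminant and with the paper's own equation (\ref{hyperdet_3_6}) once all terms are moved to one side. Your mechanical verification would detect this; it is a typo in the statement rather than a defect of either proof, but you should not assert that the quartic ``collapses to the zero polynomial'' with the printed signs.
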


     \begin{remark}
     This result is proved  in \cite{HoSt, Oed} for the principal minor determinants of any symmetric
   $N\times N$ matrix. We give another proof in Sections \ref{geometry_plucker_restriction},  \ref{hyperdet_Gr_L_3_6}, based on combining
   the Pl\"ucker relations for any  element  $w^0\in Gr_V(\HH_N)$ with the linear relations that assure it belongs
   to the Lagrangian Grassmannian $w^0\in \Gr^\LL_V(\HH_N, \omega_N)$. In  \cite{Oed}, it was shown that the image of
   the principal minors map is cut out by  the orbit of the ``core'' quartic hyperdeterminantal relations (\ref{hyper_det_rels}),
   under the subgroup
   \be
   G_N:=   (\Sl(2))^{ N}  \rtimes  S_N   \subset  \Sp(\HH_N, \omega_N),
   \ee
   where the $\Sl(2)$'s act within the planes $\{e_{-i}, e_{i-1}\}_{i=1, \dots, N}$ and  $S_N$ by permuting them.

  In Section \ref{hyperdet_Gr_L_3_6}, we identify the  eight distinct principal
     minors of size $(r+a) \times (r+a)$, for $a=0,1, 2, 3$  that correspond to the nonzero
     columns and rows appearing in (\ref{hyper_det_rels}). These are all of the same form as the single
     quartic relation satisfied by the eight symmetric Pl\"ucker  coordinates for $ \Gr^\LL_{\Cb^3}(\Cb^3 \oplus \Cb^{3*}, \omega_3)$.
     By varying the choice of ($J, j_1, j_2, j_3)$ as subsets of $\{1, \dots ,N\}$, we obtain
     the core hyperdeterminantal relations (\ref{hyper_det_rels}).
        \end{remark}

   \subsubsection{Inverse of the Lagrange map}
\label{invert_lagrange_map}
The Lagrange map (\ref{lagrange_map_finite_def}) is constant on the orbits of the subgroup
\be
(\Zb_2)^N  =  \II_\epsilon := \{\diag( \epsilon_{-N}, \dots  , \epsilon_{-1}, \epsilon_{0}, \dots, \epsilon_{N-1})\}, \ss\\Sp(\HH_N, \omega_N),
\ee
where
\be
\epsilon_{-i} = \epsilon_{i-1} = \pm 1, \quad i=1, \dots, N,
\ee
consisting of any number of reflections inside the canonical coordinate $2$-planes  \break \hbox{$\{e_{-i}, e_{i-1}\}_{i= 1, \dots, N}$},
since it leaves invariant the Pl\"ucker coordinates $\pi_\lambda(w^0)$ for all symmetric partitions $\lambda= \lambda^T$.
In fact, the converse is also true \cite{VGM}; two elements of $\Gr^\LL_V(\HH_N, \omega_N)$ have the same image under the
Lagrange map if and only if they lie on the same  $(\Zb_2)^N$ orbit.
Generically, $(\Zb_2)^{N}$ has the $2$-element subgroup $\{\pm\Ib_{2N}\}$ as stability subgroup,  and
there is an open dense stratum in which all the orbits have $2^{N-1}$ elements.
 But there are strata consisting of orbits of all cardinalities $2^k$, for $0 \le k \le N-1$,
so the quotient by this group action is not a manifold, but an orbifold.

As with the Pl\"ucker coordinates \cite{GH},  the Lagrange coefficients $\{\LL_J\}$
may be interpreted as  holomorphic sections of a line bundle: the  (dual) determinantal line bundle $\Det^{*} \ra \Gr^\LL_V(\HH_N, \omega_N)$,
defined as the pullback, under the Lagrange map, of the hyperplane section bundle $\OO(1) \ra \Pb(\Lambda(V))$
\be
\Det^* \ra \Gr^\LL_V(\HH_N, \omega_N) := \LL_N^*( \OO(1) \ra \Pb(\Lambda(V))).
\ee
Although this is equivalent to the restriction of the dual determinantal line bundle $\Det^* \ra \Gr_V(\HH_N)$ to
$\Gr^\LL_V(\HH_N,  \omega_N)$,  the sections corresponding to symmetric partitions span
$\Lambda^*(V)$, which  is realized as the $2^N$ dimensional subspace of $\Lambda^{N*}(\HH_N)$
defined by the injection map (\ref{LambdaV_inject}) or, equivalently, by the basis elements corresponding to
symmetric partitions.

\subsection{The geometry of Plucker relations. Restriction to Lagrangian Grassmannians}
\label{geometry_plucker_restriction}

Recall that the  image of the Lagrangian Grassmannian $\Gr^\LL_V(\HH_N, \omega_N)$ under the Pl\"ucker map
 is cut out in $\Pb(\Lambda^N(\HH_N))$  by the combination of the Pl\"ucker relations, corresponding to a decomposable $N$--vector
 defining the $N$-plane and the linear relations following from the fact that the $N$-plane is a Lagrangian subspace.

We can restate the Pl\"ucker relations as follows. They are determined by first  choosing a ``seed'' multi-index $I_0$
   of cardinality  $i_0<k-1$, and  then completing it with indices $i_1, ..,i_{k-1-i_0}$ to a multi-index $I$ of cardinality $k-1$, and with $j_1,..,j_{k+1-i_0}$
   to  a multi-index $J$ of cardinality $k+1$, in such a way that all the added indices are distinct.
   The corresponding Plucker relation on $\Pl^n_k(w)$ is then
\be
 \sum_{s=1}^{k+1-i_0} (-1)^{s} \tilde \pi_{I_0, i_1,i_2,..,i_{k-1-i_0},j_s} \tilde \pi_{I_0, j_1,..,\hat j_s,..,j_{k+1-i_0}} =0.
 \ee
Thus, the number of terms in the sum is $k+1-i_0$.
The  ``short'' Pl\"ucker relations, occur when $i_0= k-2$, and consist of a three term sum
\be
 \label{shortP1}
\sum_{cycl(j,k,\ell)} \tilde \pi_{I^0,i,j}\tilde \pi_{I^0,k,\ell} = 0.
\ee

We can show, by restriction and projection to subspaces, that a combination of suitably chosen 
 short Pl\"ucker relations with the isotropy condition for Lagrangian Grassmannians imply the full set of Pl\"ucker relations with the isotropy condition, at least on a generic locus. We will see in the next section that the short relations and isotropy then determine the hyperdeterminantal conditions.   The idea, roughly, is to intersect with a family of coordinate subspaces. To illustrate this, first consider the corresponding statement, for the ordinary Grassmannian of $k$-planes in $n$-space, that the short (three-term) Plucker relations 
 determine the Grassmannian, on a generic locus.  This is  proved in \cite{HB}, App. D  (cf. also \cite{KPRS}), using determinantal identities.   To see it geometrically, note that for the multi-index $I_0$, intersection with a subspace determined by the complementary  multiindex $I_0^c$ is simply given on the level of the exterior algebra by contraction by $f_{I_0}$, as long as this contraction gives a non-zero result (this is the necessary genericity). The Pl\"ucker relation "survives"  this operation simply by removal of the $I_0$, and becomes a relation for smaller dimensional planes in a smaller dimensional space. In particular, if  $i_0= k-2$, so that the Pl\"ucker relation has length three, the relation becomes one for a two dimensional space  in $n-k+2$ space, if $n$ is our initial dimension. But for these the only Pl\"ucker relations are short. Doing this for all possible choices of $I_0$ of length $k-2$, we find:
 
 \begin{proposition}
Let $\phi$ be a $k$-vector  in $\Lambda^k(U)$  which, is {\em generic}, in the sense of  belonging to the Zariski open set  on which the
contractions $i_{f_{I^0}}(\phi)$ are non-zero. If when contracted with  every coordinate $k-2$-vector it defines a $2$-plane,
(i.e. is a decomposable $2$-vector), then $\phi$ defines a $k$-plane (i.e., is decomposable).
In consequence, the three-term Pl\"ucker relations for the $k$-vector imply the full set of Pl\"ucker relations.
 \end{proposition}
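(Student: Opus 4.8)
The plan is to prove the statement directly by induction on the ambient dimension $n$, with $k$ arbitrary, using exactly the ``restriction and projection'' operation: contraction by, and reduction modulo, a single coordinate direction. The base case $n=k$ is trivial, since $\Lambda^k(U)$ is then one-dimensional and every element is decomposable. For the inductive step I would fix the coordinate $f_n$ and split
$\phi = f_n \wedge \psi + \rho$, where $\psi := i_{f_n}\phi \in \Lambda^{k-1}(U')$ is the \emph{restriction} and $\rho \in \Lambda^k(U')$ is the \emph{projection}, with $U' := \span\{f_1, \dots, f_{n-1}\}$.

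The first step is to observe that the short Pl\"ucker relations for $\psi$, viewed as a $(k-1)$-vector in $U'$, are precisely those short relations of $\phi$ whose seed multi-index $I^0$ contains $n$, while the short relations for $\rho$, viewed as a $k$-vector in $U'$, are precisely those of $\phi$ whose indices all avoid $n$. Since the genericity hypothesis on $\phi$ (non-vanishing of all coordinate $(k-2)$-contractions) restricts to the corresponding genericity for $\psi$ and for $\rho$, the inductive hypothesis applies and shows that $\psi$ and $\rho$ are both decomposable, defining a $(k-1)$-plane $P = w_\psi$ and a $k$-plane $Q = w_\rho$ inside $U'$.

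The heart of the argument is to glue these two pieces, and I claim $\phi$ is decomposable as soon as $P \subseteq Q$. Indeed, choosing a basis $q_1, \dots, q_{k-1}$ of $P$ and extending it to a basis $q_1, \dots, q_k$ of $Q$, one has $\psi = \mu\, q_1\wedge\cdots\wedge q_{k-1}$ and $\rho = \lambda\, q_1\wedge\cdots\wedge q_k$, whence $\phi = q_1\wedge\cdots\wedge q_{k-1}\wedge(\lambda q_k \pm \mu f_n)$ is manifestly decomposable. To force $P \subseteq Q$ I would use the remaining short relations, namely those that \emph{mix} $n$ with the other indices. Concretely, for each coordinate $(k-2)$-seed $I'$ avoiding $n$, the contraction decomposes as $i_{f_{I'}}\phi = \pm f_n \wedge p_{I'} + \sigma_{I'}$, with $p_{I'} = i_{f_{I'}}\psi \in P$ a vector and $\sigma_{I'} = i_{f_{I'}}\rho$ a decomposable $2$-vector supported in $Q$. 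By hypothesis the $2$-vector $i_{f_{I'}}\phi$ is decomposable, so $(i_{f_{I'}}\phi)\wedge(i_{f_{I'}}\phi)=0$; since $\sigma_{I'}\wedge\sigma_{I'}=0$, this collapses to $f_n\wedge p_{I'}\wedge\sigma_{I'}=0$, hence $p_{I'}\wedge\sigma_{I'}=0$, i.e.\ $p_{I'}\in\mathrm{supp}(\sigma_{I'})\subseteq Q$. Letting $I'$ range over the seeds for which both $p_{I'}$ and $\sigma_{I'}$ are non-zero, the vectors $p_{I'}$ span $P$, giving $P\subseteq Q$. The concluding sentence of the proposition is then immediate: decomposability of every generic $\phi$ satisfying the three-term relations means precisely that, on the Zariski-open genericity locus, those relations already imply the full Pl\"ucker system, since decomposability is equivalent to satisfying all Pl\"ucker quadrics.

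I expect the main obstacle to lie entirely in the genericity bookkeeping of the gluing step. One must guarantee that enough seeds $I'$ simultaneously produce non-vanishing $p_{I'}$ \emph{and} $\sigma_{I'}$, so that the $p_{I'}$ genuinely span all of $P$; note that if $\sigma_{I'}=0$ the relation imposes no constraint on $p_{I'}$, and if $p_{I'}=0$ it is vacuous. Identifying the precise Zariski-open locus on which this spanning holds, and verifying that it is preserved under the restriction/projection passage so that the induction closes, is the delicate part. Geometrically, via the preceding Lemma, these non-degeneracy requirements are exactly the transversality of the relevant coordinate slices $w\cap w_I$, which is what the word ``generic'' is encoding throughout.
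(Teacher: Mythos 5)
Your strategy --- induction dropping one coordinate at a time, the splitting $\phi = f_n\wedge\psi + \rho$, and the gluing of the two inductively decomposable pieces via $P\subseteq Q$ --- is the natural ordinary-Grassmannian version of the contraction-plus-projection induction that the paper writes out in full only for the Lagrangian analogue (Lemma \ref{1_2_decomp} and Proposition \ref{prop:plucker-hyper}); for the present proposition the paper itself supplies only the geometric dictionary of the preceding Lemma and otherwise defers to the determinantal proofs of \cite{HB}, App.~D, and \cite{KPRS}. The steps you actually execute are correct: the coordinate $(k-3)$-contractions of $\psi=i_{f_n}\phi$ are, up to sign, the coordinate $(k-2)$-contractions of $\phi$ whose seed contains $n$; the $f_n$-free part $\sigma_{I'}$ of the decomposable $2$-vector $i_{f_{I'}}\phi$ is itself decomposable; the wedge-square of $i_{f_{I'}}\phi$ collapses to $f_n\wedge p_{I'}\wedge\sigma_{I'}=0$, hence $p_{I'}\wedge\sigma_{I'}=0$; and, granted $P\subseteq Q$, the reassembly $\phi = q_1\wedge\cdots\wedge q_{k-1}\wedge(\lambda q_k\pm\mu f_n)$ is right.

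However, the issue you defer to the last paragraph is a genuine gap, not bookkeeping, and it cannot be closed on the locus you (following the literal statement) work with. Your assertion in the body that the genericity hypothesis ``restricts to the corresponding genericity for $\psi$ and for $\rho$'' is true for $\psi$ but false for $\rho$: one can have $\sigma_{I'}=i_{f_{I'}}\rho=0$ while $i_{f_{I'}}\phi=\pm f_n\wedge p_{I'}\neq 0$, so the inductive hypothesis need not apply to $\rho$; worse, the set of seeds with both $p_{I'}\neq0$ and $\sigma_{I'}\neq0$ can be empty, so nothing forces $P\subseteq Q$. Concretely, take $k=3$, $n=6$ and $\phi=f_1\wedge f_2\wedge f_3+c\,f_4\wedge f_5\wedge f_6\in\Lambda^3(\Cb^6)$ with $c\neq0$: every contraction $i_{f_i}\phi$, $i=1,\dots,6$, is a nonzero decomposable $2$-vector (so all three-term relations hold and $\phi$ lies in the stated open set), yet $\phi\wedge\phi=2c\,f_1\wedge\cdots\wedge f_6\neq0$, so $\phi$ is indecomposable. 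In your induction at $f_6$ one finds $P=\span\{f_4,f_5\}\not\subseteq Q=\span\{f_1,f_2,f_3\}$, and for each seed exactly one of $p_{I'},\sigma_{I'}$ vanishes --- precisely your flagged degenerate case, realized. The moral is that the generic locus must be strictly smaller than ``all coordinate contractions nonzero'': for instance, one must additionally require, at every stage of the induction, that the seeds with $\sigma_{I'}\neq0$ produce vectors $p_{I'}$ spanning $P$ (a dense Zariski-open condition, which then makes your argument close), or else work on the locus of nonvanishing hook coordinates where the Desnanot--Jacobi/Giambelli route of \cite{HB}, App.~D runs. As written, your proof stalls exactly where you predicted; the example also shows that the phrase ``the Zariski open set on which the contractions are non-zero'' in the proposition must itself be read as such a smaller dense open set.
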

 The proof consists in taking local coordinates, and then  in looking at all these two-planes, checking that they are in a suitable set compatible, and then piecing together the result into a $k$-plane.

We now consider a similar question for Lagrangian subspaces. Here,  there will be a family of six dimensional spaces obtained from the coordinate subspaces. 
However, the procedure will not just be one of intersection, but rather intersection followed by projection. (Note that this is completely
 in the spirit of symplectic reduction, where one first restricts to a subvariety, then quotients by a null foliation.) Define the basis
 \be
\{ f_1, f_2, ..., f_N, f_1^*, f_2^*, ..,f_N^*\}
\label{f_basis}
 \ee
  of the $2N$-dimensional symplectic space $\HH_N$ by
  \be
  f_i := e_{-i}, \quad f^*_{i} := (-1)^{i-1} e_{i-1}, \quad 1\le i \le N,
  \label{e_f_basis_change}
  \ee
In this basis, the symplectic form is
   \be
   \omega_N = \sum_j f_j\swedge f^* _j.
   \ee
   Denote the components of a vector $v\in \HH_N$ relative to this basis as $\{a_i, a_i^*\}_{i=1, \dots, N}$,
   \be
   v = \sum_i (a_if_i +a_i^* f_i^*).
   \ee
   
   The corresponding basis  $\{f_K\}$ for  $\Lambda^N(\HH_N)$  is given by
   \be
   f_K = f_{K_1} \swedge \cdots  f_{K_N},
   \label{f_K_basis}
   \ee
   where $K$ is a multi-index $(K_1,K_2,..K_N)$ with distinct, increasingly ordered $K_i$'s, first of type $j$, 
   followed by those of type $j^*$, $j\in \{1, \dots, N\}$. Relative to this basis, any $N$-vector $\phi\in \Lambda^N(\HH_N)$
   may be expressed as
   \be
\phi = \sum_{K} \tilde\pi_K f_K.
   \ee
Note that for $K$  to correspond to a symmetric partition, in the notation of the preceding sections, 
   means that, for all $j\in \{1, \dots, N\}$, $K=(K_1, \dots, K_N)$ contains either $j$ or $j^*$, but not both. 
This is equivalent to the corresponding decomposable $N$-vector $f_K$  being the Pl\"ucker image of  a Lagrangian (i.e., maximal isotropic) subspace.
 We also  will require  basis multi-vectors for $\Lambda^k(\HH_N)$ of degree $k<N$, which satisfy the symmetry condition
  that they contain  either $j$ or $j^*$, but not both. The corresponding multi-indices, viewed as subsets of  $\{i, i^*\}_{i=1, \dots, N}$ are defined as follows.
Let $I= (I_1, .., I_k) \ss (1,..,N)$ be an increasingly ordered subset, viewed as a multi-index of cardinality $k$.
We then choose a function  $A$ on the space of such $k$-indices which, to each $I_j$ associates either $A(I_j) = I_j$ or $I_j^*$.
Let $(A, I)$ denote the corresponding ``marked'' multi-index of cardinality $k$
     \be
     \label{marked_index}
     (A,I) = (A(I_1), A(I_2), ...A(I_k))
     \ee
      (written in their correct order) and denote the corresponding basis multivectors $f_{(A,I)}$. For each $(A, I)$, there is a complementary $(B,I)$ consisting of
  the complementary elements; i.e.  $B(I_j) = I^*_j \ (\mathrm{resp.\ }I_j)$ if and only if $A(I_j) = I_j\  (\mathrm{resp.\ }I^*_j)$.   Let  
      \be
      p_{(B,I)}: \HH_N \ra  \HH_N
      \label{p_B_I_proj}
      \ee
      denote the projection map onto the $(N+k)$-dimensional subspace  spanned by the
      basis vectors complementary to $\{f_{B(I_1)} \dots, f_{B(I_k)}\}$, with kernel the space  $\span \{f_{B(I_1)} \dots, f_{B(I_k)}\}$,
   and let
      \be
      p^j_{(B,I)}: \Lambda^j(\HH_N) \ra  \Lambda^j(\HH_N), \quad j\in \{1, \dots, N\}
      \label{p_B_I_proj_j}
      \ee
     denote the lift of this map to $\Lambda^j(\HH_N)$.
     
     Our operations  will be:  contraction $i_{f_{(A,I)}}\phi$ of $\phi\in \Lambda^N(\HH_N)$  with $f_{(A,I)}$ 
 (so intersection with the co-isotropic plane corresponding to $\tilde \pi_{(A,I)}= 0$), giving an element of $\Lambda^{N-k}(\HH_N)$, 
 followed by projection  $p^{N-k}_{(B,I)}$  to $\Lambda^{N-k}(p_{(B,I)}(\HH_N))\subset \Lambda^{N-k}( \HH_N)$ (i.e., setting the corresponding coordinates to zero.) 
  Note that if we define  the contraction and projection in such a way that we stay  in $\HH_N$, the two operations commute, in the sense that
  \be
  p^{N-k}_{(B,I)}\circ i_ {f_{(A,I)}} = i_ {f_{(A,I)}}\circ  p^N_{(B,I)}
\ee
 Now let ${(A,B,I)^c}$ denote the set of $2N-2k$ indices in the complement of the union of the multi-indices $(A,I), (B, I)$;
 i.e., the indices $\{j, j^*\}_{j\in I^c}$, and let $\HH_{(A,B,I)^c}$ be the $2N-2k$ dimensional space spanned by the vectors 
 with indices in ${(A,B,I)^c}$. Thus
 \be
 p^{N-k}_{(B,I)}\circ i_ {f_{(A,I)}}(\phi) = i_ {f_{(A,I)}}\circ  p^N_{(B,I)}(\phi) \in \Lambda^{n-k}
\in  \Lambda^{N-k}(\HH_{(A,B,I)^c}) \subset \Lambda^{N-k}( \HH_N)
 \ee
  gives us an $N-k$-vector in  $\Lambda^{N-k}(\HH_{(A,B,I)^c})$ for any $\phi\in \Lambda^N(\HH_N)$.

     We now fix $k=N-3$.   
The  result of the contraction and projection is  now a $3$-vector on the  $6$-dimensional subspace   $\HH_{(A,B,I)^c} \ss \HH_N$. 
Applying this to  a decomposable isotropic element $ \phi \in \Lambda^N (\HH_N))$, the resulting $3$-vector  is again decomposable 
and isotropic in $\Lambda^3(\HH_{(A,B,I)^c})$. We also have the converse:
\begin{proposition}
\label{prop:plucker-hyper}
A generic element $\phi \in \Lambda^N (\HH_N)$ is the Pl\"ucker image of a
Lagrangian plane $w^0\in \Gr^\LL_V(\HH_N)$ if and only if, for all $(A, I)$ with $I$ of cardinality $N-3$,
 the elements 
 \be
 p^3_{(B,I)}( i_{f_{(A,I)}}(\phi)) = i_{f_{(A,I)}}(p^N_{(B,I)}(\phi))
 \ee
  represent null (isotropic)  $3$-planes in $\Lambda^3(\HH_{(A,B,I)^c})$.
\end{proposition}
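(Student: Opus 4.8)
The plan is to prove the two implications separately. The forward direction is essentially the content of the paragraph preceding the statement, so I would only record the preservation properties needed. Suppose $\phi = \Pl_V(w^0)$ with $w^0$ Lagrangian; then $\phi$ is decomposable and, by Proposition \ref{plucker_lagrangian_linear_image}, satisfies $\hat\omega_N^\dag\phi = 0$. The contraction $i_{f_{(A,I)}}$ realizes intersection of $w^0$ with the coisotropic coordinate subspace and $p^3_{(B,I)}$ the ensuing symplectic reduction onto $\HH_{(A,B,I)^c}$. Both preserve decomposability---the first by the Lemma of Section \ref{geometry_plucker_restriction}, the second because the lift $\Lambda^j p_{(B,I)}$ of a projection sends decomposables to decomposables---and, since $I$ and $I^c$ index disjoint sets of basis vectors, both commute with contraction by the reduced form $\omega_{I^c} := \sum_{j\in I^c} f_j\wedge f_j^*$. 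As $i_{f_{(A,I)}}(\hat\omega_N^\dag\phi) = 0$, the reduced $3$-vector $p^3_{(B,I)} i_{f_{(A,I)}}\phi$ is decomposable and annihilated by $i_{\omega_{I^c}}$, i.e.\ represents a null $3$-plane.

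For the converse I would establish isotropy and decomposability of $\phi$ in turn. For isotropy, the null condition on the reduced $3$-vector reads $i_{\omega_{I^c}}\,p^3_{(B,I)}\,i_{f_{(A,I)}}\phi = 0$. Commuting $i_{\omega_{I^c}}$ past $i_{f_{(A,I)}}$ and $p_{(B,I)}$ (again by disjointness of the $I$ and $I^c$ indices), and using the key cancellation $i_{f_{(A,I)}}\,i_{\omega_I}\phi = 0$---each summand $i_{f_j\wedge f_j^*}$ of $i_{\omega_I}$ removes both members of a pair that the factor $i_{f_{A(j)}}$ then tries to contract again---this becomes $p^1_{(B,I)}\,i_{f_{(A,I)}}(\hat\omega_N^\dag\phi) = 0$, a statement purely about $\hat\omega_N^\dag\phi$. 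Letting $(A,I)$ and the marking $A$ range over all admissible choices harvests exactly the components of $\hat\omega_N^\dag\phi\in\Lambda^{N-2}(\HH_N)$ indexed by subsets meeting each symplectic pair in at most one index; I would then argue these, together with decomposability, force $\hat\omega_N^\dag\phi = 0$, i.e.\ $\phi\in P^N_N$.

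For decomposability, I would reduce to the three-term (short) Pl\"ucker relations via the Proposition of Section \ref{geometry_plucker_restriction}. The decomposability of the reduced $3$-vector in the six-dimensional space $\HH_{(A,B,I)^c}$ is, by the Lemma identifying the Pl\"ucker image of an intersection, precisely the assertion that the short Pl\"ucker relations of $\Gr_3(\Cb^6)$ hold among the coordinates $\tilde\pi_{(A,I)\cup M}$ with $M\subset\{j,j^*\}_{j\in I^c}$. These are exactly the short Pl\"ucker relations of $\phi$ whose seed $(A,I)\cup\{m\}$ has cardinality $N-2$ and meets each symplectic pair in at most one index. Ranging over all $(A,I)$ and markings produces all short relations with such ``isotropic'' seeds, and---on the Zariski open set where every contraction $i_{f_{(A,I)}}\phi$ is non-zero and the relevant intersections are transverse---I would invoke the cited short-Pl\"ucker reduction to conclude decomposability of $\phi$; combined with isotropy this yields $\phi = \Pl_V(w^0)$ with $w^0\in\Gr^\LL_V(\HH_N,\omega_N)$.

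The main obstacle is precisely that both the isotropy components and the short relations extracted from the reduced $3$-planes are restricted to ``isotropic'' seeds---those meeting each symplectic pair at most once---whereas the full Pl\"ucker ideal and the full set of linear conditions $\hat\omega_N^\dag\phi=0$ involve all seeds. The resolution I would pursue is to use the already-established two-term relations $\pi_\lambda=\pi_{\lambda^T}$ (the Corollary of Section \ref{decomp_Sp_2N_irreps}) to rewrite any non-isotropic seed in isotropic form modulo the linear conditions, so that the accessible relations generate the remainder on the generic locus. Verifying that nothing is genuinely lost to the projection---that the ideal so generated really cuts out the Lagrangian Grassmannian and not a strictly larger variety---is the delicate point, and is where both the genericity hypothesis and the commutation of contraction with projection must be used most carefully.
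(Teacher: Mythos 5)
Your forward direction is sound and follows the same lines as the paper's (decomposability and primitivity both descend under contraction-plus-projection; cf.\ Lemma \ref{red_N_ker_omega_N}, and your commutation identity $i_{f_{(A,I)}}\, i_{\omega_I}\phi = 0$ is correct). The converse, however, contains a genuine gap, and you have located it yourself but not closed it: the family of $(3,6)$ reductions only ever sees Pl\"ucker coordinates $\tilde\pi_{(A,I)\cup M}$ whose index set meets each symplectic pair $\{f_j, f_j^*\}$, $j\in I$, in exactly one element, so every short Pl\"ucker relation and every linear condition you can harvest has an ``isotropic seed''; coordinates whose index sets contain two or more full pairs $\{j, j^*\}$ never appear in any reduction at all. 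Your proposed repair --- invoking the two-term relations $\pi_\lambda = \pi_{\lambda^T}$ to rewrite non-isotropic seeds --- is circular: those relations are themselves consequences of the linear Lagrangian condition $\hat\omega_N^\dag\phi = 0$ (that is how the Corollary of Section \ref{decomp_Sp_2N_irreps} obtains them, from membership in $P^N_N$), whereas at that stage of your argument you have established only the isotropic-seed components of $\hat\omega_N^\dag\phi$. So the assertion that the harvested relations generate the remainder of the ideal on the generic locus is precisely what is left unproven, and no rearrangement of the harvested relations alone can control the invisible coordinates.

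The paper closes this gap by a different mechanism: a one-dimension-at-a-time induction that reconstructs the plane itself instead of manipulating the ideal. For each pair index $j$ it uses three one-step reductions --- $p^{N-1}_j(i_{f_j^*}(\phi))$, $p^{N-1}_{j^*}(i_{f_j}(\phi))$, and the double contraction $i_{f_j\wedge f_j^*}(\phi)$, data your scheme never extracts since your contractions always take exactly one member of each pair --- and Lemma \ref{1_2_decomp} glues the resulting normal forms $(v_1+f_j)\wedge(v_2-f_j^*)\wedge v_3\wedge\cdots\wedge v_N + \phi'_j$ across two indices $j, k$ to force the remainder $\phi'$ to vanish, so that $\phi$ is decomposable; Lemma \ref{red_N_ker_omega_N} handles primitivity, and the induction bottoms out at $(3,6)$, where the hypothesis applies directly. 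If you wish to salvage your route, you would have to prove independently the statement you flag as delicate --- for instance, that for a generic decomposable $\phi$ the vanishing of the isotropic-seed components of $i_{\omega_N}\phi$ already forces $\omega(w_a, w_b)=0$ for all pairs of basis vectors of the plane --- and carrying that out amounts to a disguised form of the paper's inductive reconstruction.
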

The proof proceeds in essence in taking all of these Lagrangian three-planes, and seeing that the fact that they come from a common element $\phi$ allow us to piece them together into a Lagrangian $n$-plane. Again, the genericity required is that the contractions and projections give non-zero results.

Now consider what this means in terms of the coordinates $\tilde \pi_K$ of the original $N$-vector 
 $\phi \in \Lambda^N(\HH_N)$, and Pl\"ucker relations for the $  p^3_{(B,I)}i_{f_{(A,I)}}(\phi)$.  
 The  relations on the $3$-planes  are given by taking  the  multi-index $(A,I)$ of cardinality $N-3$ as ``seed''.
 This  is completed in turn by adding to $(A,I)$ first  two indices $(L_1, L_2)$, giving  a multi-index $L$ of cardinality
  $N-1$, and then  four indices $(K_1, K_2, K_3, K_4)$, giving a multi-index $K$ of  cardinality $N+1$. 
  We require that these  extra indices $L, K$ avoid the elements of $(B, I)$. They thus lie in the set
 $\{i, i^*, j, j^*, k, k^*\}$, where $i,j,k$ are the three indices not in $I$, and $(L_1, L_2)$ and $(K_1, K_2, K_3, K_4)$
  can overlap by at most one element. (If they overlap by one element, we add that to the seed.)
  The corresponding Pl\"ucker relations are then
\be
\label{p1}
 \sum_{s=1}^{4} (-1)^{s} \tilde \pi_{(A,I),L_1,L_2, K_s} \tilde \pi_{(A,I), K_1,..,\hat K_s,..,K_4} =0
 \ee
for no overlap and
\be
\label{p2}
\sum_{s=2}^{4} (-1)^{s} \tilde \pi_{(A,I),L_1,L_2,K_s} \tilde \pi_{(A,I),L_1,..,\hat K_s,..,K_4} =0
\ee
when $L_1 = K_1$.
Varying $L,K$ gives the equations for the $3$-plane in  the $6$-plane $\HH_{(A,B,I)^c}$ corresponding to $(A,I)$, 
essentially by a correspondence $ \tilde \pi_{(A,I),\mu,\nu,\sigma}\leftrightarrow \tilde \pi_{ \mu,\nu,\sigma}$. 
The  Pl\"ucker relations restrict to Pl\"ucker relations on $\Gr_3(\HH_{(A,B,I)^c})$, for an appropriate choice of indices.  
Restriction of the isotropy condition is simpler; we just require  that contractions with the restriction 
\be
\omega_{I}:= \omega_N\vert_{\HH_{(A,B,I)^c}}
\ee
 of the symplectic form to $\HH_{(A,B,I)^c}$ give zero. Thus:

\begin{proposition} 
\label{prop:N_3_hyper_red}
For generic $\phi$, the Pl\"ucker relations, together with the symplectic isotropy conditions on $\phi$, 
are equivalent to the relations (\ref{p1}), (\ref{p2}) for  all $(A,I)$) with $I$ of cardinality $N-3$,
  together with the isotropy conditions 
  \be
  i_{\omega_{I}}((p^3_{(B,I)}i_{f_{(A,I)}}(\phi))) =0,
    \ee
    where $p^3_{(B,I)}(i_{f_{(A,I)}}(\phi))$ is viewed as an element of $\Lambda^3(\HH_{(A,B,I)^c)})$.
\end{proposition}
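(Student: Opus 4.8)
The plan is to recognize Proposition~\ref{prop:N_3_hyper_red} as the explicit coordinate reformulation of the geometric equivalence already established in Proposition~\ref{prop:plucker-hyper}, so that the only work remaining is to translate the two geometric conditions---that each reduced $3$-vector $p^3_{(B,I)}(i_{f_{(A,I)}}(\phi))$ represents a $3$-plane, and that this $3$-plane is null---into the stated algebraic relations. By Proposition~\ref{prop:plucker-hyper}, for generic $\phi$ the property that $\phi$ is the Pl\"ucker image of a Lagrangian plane $w^0\in \Gr^\LL_V(\HH_N)$ (equivalently, that $\phi$ satisfies the full Pl\"ucker relations together with the isotropy conditions) is equivalent to the assertion that, for every marked multi-index $(A,I)$ with $|I|=N-3$, the element $p^3_{(B,I)}(i_{f_{(A,I)}}(\phi))\in \Lambda^3(\HH_{(A,B,I)^c})$ represents a null $3$-plane. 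It therefore suffices to show, for each fixed $(A,I)$, that the statement ``$p^3_{(B,I)}(i_{f_{(A,I)}}(\phi))$ represents a null $3$-plane'' is equivalent to the conjunction of the relations (\ref{p1}), (\ref{p2}) and the isotropy condition $i_{\omega_I}(p^3_{(B,I)}i_{f_{(A,I)}}(\phi))=0$.

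First I would dispose of the decomposability half. The reduced vector lives in $\Lambda^3(\HH_{(A,B,I)^c})$, where $\HH_{(A,B,I)^c}\cong \Cb^6$ is spanned by $\{f_i, f_i^*, f_j, f_j^*, f_k, f_k^*\}$ for the three indices $i,j,k\notin I$. Using the contraction--projection correspondence $\tilde\pi_{(A,I),\mu,\nu,\sigma}\leftrightarrow \tilde\pi_{\mu,\nu,\sigma}$ between the coordinates of $\phi$ and those of the reduced $3$-vector, decomposability is precisely the vanishing of the $\Gr_3(\Cb^6)$ Pl\"ucker quadrics of Section~\ref{plucker_map_relations}; namely the relations (\ref{plucker_rel_kn}) with $k=3$, $n=6$, in which the index set of cardinality $k-1=2$ is a completion $(L_1,L_2)$ and the index set of cardinality $k+1=4$ a completion $(K_1,\dots,K_4)$ drawn from $\{i,i^*,j,j^*,k,k^*\}$. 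When these completions are disjoint one obtains the four-term relation (\ref{p1}); when they share a single index, say $L_1=K_1$, the term carrying the repeated index vanishes identically and the three-term relation (\ref{p2}) remains. Prepending the seed $(A,I)$ to every index reproduces (\ref{p1}), (\ref{p2}) verbatim in the coordinates $\tilde\pi_K$ of $\phi$.

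Next I would handle isotropy. Since $\HH_{(A,B,I)^c}$ is a $6$-dimensional symplectic space with form $\omega_I=\omega_N\vert_{\HH_{(A,B,I)^c}}$, a $3$-plane in it is isotropic if and only if it is Lagrangian, and by Proposition~\ref{plucker_lagrangian_linear_image} (applied in dimension $N=3$) the Pl\"ucker image of such a plane is characterized by annihilation under inner product with the symplectic form, i.e.\ by $i_{\omega_I}(p^3_{(B,I)}i_{f_{(A,I)}}(\phi))=0$. Thus the null condition on the reduced $3$-plane is exactly the stated isotropy condition. Combining the two translations, ``$p^3_{(B,I)}(i_{f_{(A,I)}}(\phi))$ represents a null $3$-plane'' is equivalent, for each $(A,I)$, to the conjunction of (\ref{p1}), (\ref{p2}) and $i_{\omega_I}(p^3_{(B,I)}i_{f_{(A,I)}}(\phi))=0$; quantifying over all $(A,I)$ with $|I|=N-3$ and invoking Proposition~\ref{prop:plucker-hyper} yields the asserted equivalence. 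The forward implication may alternatively be seen directly: restricting a Lagrangian $w^0$ to the coordinate subspaces and projecting out the complementary directions produces, by Lemmas~\ref{1_2_decomp} and \ref{red_N_ker_omega_N}, a decomposable isotropic $3$-vector, so (\ref{p1}), (\ref{p2}) and the isotropy conditions necessarily hold.

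The main obstacle I expect is the bookkeeping in the decomposability step: one must check that the correspondence $\tilde\pi_{(A,I),\mu,\nu,\sigma}\leftrightarrow \tilde\pi_{\mu,\nu,\sigma}$ is sign-consistent, since the contraction $i_{f_{(A,I)}}$ and the projection $p_{(B,I)}$ each introduce signs that must be reconciled with the ordering conventions built into the basis elements $f_K$, and that the degenerate, overlapping-index instances of the $\Gr_3(\Cb^6)$ relations collapse to precisely the three-term form (\ref{p2}) with no spurious surviving terms. One must also confirm that the genericity hypotheses---nonvanishing of the relevant contractions $i_{f_{(A,I)}}(\phi)$---propagate down the inductive reduction, so that Proposition~\ref{prop:plucker-hyper} is applicable; this is exactly the Zariski-open condition under which the reductions of that proposition were carried out.
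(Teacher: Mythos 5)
Your proposal is correct and follows essentially the same route as the paper, which likewise presents Proposition \ref{prop:N_3_hyper_red} as the coordinate translation of the geometric equivalence in Proposition \ref{prop:plucker-hyper}: the Pl\"ucker quadrics of the reduced $3$-vectors in $\Lambda^3(\HH_{(A,B,I)^c})$ become (\ref{p1}), (\ref{p2}) under the seed correspondence $\tilde\pi_{(A,I),\mu,\nu,\sigma}\leftrightarrow\tilde\pi_{\mu,\nu,\sigma}$ (with the one-index overlap absorbed into the seed, yielding the three-term form), and nullity of the reduced $3$-planes is exactly the contraction condition with $\omega_I$. Your explicit attention to sign bookkeeping and to propagating the genericity hypotheses only makes precise what the paper leaves implicit.
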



\subsection{Hyperdeterminantal relations for $\Gr^\LL_{\Cb^3}(\Cb^3\oplus \Cb^{3*}, \omega_3)$ }
\label{hyperdet_Gr_L_3_6}

We have thus reduced the problem, at least on an open dense set, to a family of Pl\"ucker relations and  isotropy conditions in dimensions $(3,6)$; that is, for
elements of $\Lambda^3(\Cb^3\oplus \Cb^{3*})$ corresponding to isotropic 3-planes. Our aim is to now combine these into one relation,
the hyperdeterminantal relation, for each of these $3$-planes.

For $i,j,k\in \{1,2,3,1^*,2^*,3^*\} $, let $\tilde \pi_{ijk}$,  in the indicated order, denote the Pl\"ucker coordinates of a $3$-dimensional
subspace $w^0 \ss \Cb^3\oplus \Cb^{3*}$,  viewed as an element of the Grassmannian $\Gr_{\Cb^3}(\Cb^3\oplus \Cb^{3*}, \omega_3)$
whose Pl\"ucker image, up to projectivization, is given by
\be
\phi :=  \Pl^{2N}_N(w^0) = \sum_{i,j,k\in \{1,2,3,1^*,2^*,3^*\} } \tilde \pi_{ijk}f_i\wedge f_j\wedge f_k   \in \Lambda^3(\Cb^3\oplus \Cb^{3*}).
\ee
This gives $20$ projective coordinates, and so $19$ parameters.  Eight of these correspond to symmetric partitions:
\bea
S_0 &\&:=\tilde{\pi}_{123}, \quad S_{1} := \tilde \pi_{2 3 1^*}, \quad  S_{2} := -\tilde \pi_{13 2^*}, \quad S_{3} := \tilde \pi_{1 2 3^*}, \cr
S_{0^*} &\&:= \tilde \pi_{1^*2^*3^*}, \quad S_{1^*} := \tilde \pi_{1 2^*3^*}, \quad  S_{2^*} := - \tilde \pi_{2 1^*3^*}, \quad S_{3^*} := \tilde \pi_{3 1^*2^*},
\label{symmetric_8}
\eea
in which the $\tilde \pi_{ijk}$ are chosen such that   $i = 1$ or $1^*$, $j = 2$ or $2^*$, $k=3$ or $3^*$.
The remaining $12$  ``nonsymmetric'' Pl\"ucker coordinates form, by the linear Lagrange conditions,
 six equal pairs, which  are labelled  by mutually dual partitions
\bea
T_1 &\&:= \tilde \pi_{12 2^*} = - \tilde \pi_{1 3 3^*} ,  \quad T_2:=\tilde \pi_{233^*} = \tilde \pi_{121^*} ,  \quad T_3 :=  \tilde \pi_{232^*}=- \tilde \pi_{131^*} , \cr
T_{1^*}&\&:=\tilde \pi_{ 21^*2^*} = - \tilde \pi_{3 1^* 3^*},    \quad T_{2^*} := \tilde \pi_{3 2^* 3^*} = \tilde{\pi}_{1 1^* 2^*} ,
\quad T_{3^*} := -\tilde \pi_{11^*3^*} = \tilde \pi_{22^*3^*}.
\label{unsymmetric_12}
\eea
There are $120$ three term (``short'') Pl\"ucker relations:
\be
\sum_{\nu = 1}^3 (-1)^\nu \tilde \pi_{i_1, i_2 ,j_\nu} \tilde \pi_{i_1, j_1, ..., \widehat{j_\nu},...,j_3} = 0,
\ee
with five  distinct indices $(i_1, i_2, j_1, j_2, j_3)$, and  $15$ four term ones
\be
 \sum_{\nu = 0}^3 (-1)^\nu \tilde \pi_{i_1,i_2,j_\nu} \tilde \pi_{j_0, ..., \widehat{j_\nu},...,j_3} =0,
\ee
 with  six distinct indices $(i_1, i_2, j_0, j_1, j_2, j_3)$.
 (These are obviously very redundant, since the Lagrangian Grassmannian has dimension $6$.)

 We can eliminate the non-symmetric coordinates from some of the Pl\"ucker relations
  to obtain one quartic relation for the remaining $8$ symmetric coordinates which, in addition to projectivization, cuts us down to 6 dimensions,
and so gives the isotropic Grassmannian, at least on an open set.
The two short Pl\"ucker relations
\begin{subequations}
\bea
 \tilde \pi_{123}\tilde \pi_{12^*3^*} -  \tilde \pi_{122^*}\tilde \pi_{13 3^*} +  \tilde \pi_{123^*}\tilde \pi_{132^*}  &\& =0, \\
 \tilde \pi_{23 1^*}\tilde \pi_{1^*2^*3^*} -  \tilde \pi_{21^*2^*}\tilde \pi_{3 1^*3^*} +  \tilde \pi_{2 1^*3^*}\tilde \pi_{3 1^*2^*} &\& =0
\eea
\end{subequations}
give
\begin{subequations}
\bea
\label{short1}
 T_{1}^2&\&=  - S_{0 }S_{1^*}   +  S_{2} S_{3} , \\
\label{short1*}
 T_{1^*}^2 &\& = - S_{0^*} S_{1} -   S_{2^*}S_{3^*}
\eea
Similarly, we have
\bea
\label{short2}
 T_{2}^2&\&=  -S_{0} S_{2^*}   +  S_{1} S_{3} , \\
\label{short2*}
 T_{2^*}^2 &\& =  -S_{0^*} S_{2}- S_{1^*}S_{3^*} \\
 \label{short3}
 T_{3}^2&\&=  -S_{0} S_{3^*}   + S_{1} S_{2} , \\
\label{short3*}
 T_{3^*}^2 &\& =  -S_{0^*} S_{3} - S_{1^*}S_{2^*}
\eea
\end{subequations}
The four term relations
\begin{subequations}
\bea
 \tilde \pi_{123}\tilde \pi_{1^*2^*3^*}-  \tilde \pi_{121^*}\tilde \pi_{3 2^*3^*} + \tilde \pi_{122^*}\tilde \pi_{3 1^*3^*} - \tilde \pi_{123^*}\tilde \pi_{ 31^*2^*} &\&=0,\\
- \tilde \pi_{231^*}\tilde \pi_{12^*3^*}-   \tilde \pi_{12 1^*}\tilde \pi_{3 2^*3^*} - \tilde \pi_{21^*2^*}\tilde \pi_{ 13 3^* } + \tilde \pi_{21^*3^*}\tilde \pi_{ 132^*} &\&=0
\eea
\end{subequations}
give
\begin{subequations}
\bea
T_1 T_{1^*} + T_2 T_{2^*}&\&= S_0 S_{0^*} - S_3 S_{3^*}, \\
T_1 T_{1^*} - T_2 T_{2^*} &\&= S_1 S_{1^*} -  S_2 S_{2^*},
\eea
\end{subequations}
and hence
\begin{subequations}
\bea
2T_1 T_{1^*}&\&=  S_0 S_{0^*} + S_1 S_{1^*} - S_2 S_{2^*}  - S_3 S_{3^*} ,	
\label{long1} \\
2T_2 T_{2^*}  &\&= S_0 S_{0^*}   - S_1 S_{1^*}  + S_2 S_{2^*}  - S_3 S_{3^*} ,
\label{long2}
\eea
and similarly, we have
\be
2T_3 T_{3^*}=  S_0 S_{0^*} - S_1 S_{1^*} - S_2 S_{2^*}  + S_3 S_{3^*}.	
\label{long3}
\ee
\end{subequations}

Squaring (\ref{long1}) and equating this to the product of the expressions   (\ref{short1}), (\ref{short1*}) gives
\bea
&\& S_0^2S_{0^*}^2  + S_{1}^2S_{1^*}^2  +S_2^2S_{2^*}^2  +S_3^2S_{3^*}^2
=2S_0S_{0^*} S_{1}S_{1^*} + 2S_0S_{0^*}S_2S_{2^*}+2 S_0S_{0^*} S_{3}S_{3^*}
 \cr
 &\&  + 2 S_{1}S_{1^*} S_{2}S_{2^*}  + 2 S_{1}S_{1^*}  S_3S_{3^*}+ 2  S_{2}S_{2^*} S_3S_{3^*}
 - 4S_{0^*}S_1  S_2S_3 - 4S_{0}S_{1^*} S_{2^*}S_{3^*} \cr
 &\&
  \label{hyperdet_3_6}
\eea
or, equivalently,
\bea
&\&  \tilde{\pi}_{123}^2\tilde{\pi}_{1^*2^*3^*}^2  +\tilde{\pi}_{123^*}^2\tilde{\pi}_{ 31^*2^*}^2 + \tilde{\pi}_{231^*}^2\tilde{\pi}_{12^*3^*}^2  +\tilde{\pi}_{21^*3^*}^2\tilde{\pi}_{  132^*}^2 \cr
&\& \quad =2  \tilde{\pi}_{123}\tilde{\pi}_{1^*2^*3^*}  \tilde{\pi}_{123^*}\tilde{\pi}_{ 31^*2^*} +2 \tilde{\pi}_{123}\tilde{\pi}_{1^*2^*3^*}  \tilde{\pi}_{231^*}\tilde{\pi}_{12^*3^*}
+2 \tilde{\pi}_{123}\tilde{\pi}_{1^*2^*3^*} \tilde{\pi}_{21^*3^*}\tilde{\pi}_{132^*} \cr
 &\&  \qquad+ 2 \tilde{\pi}_{123^*}\tilde{\pi}_{ 1^*2^*3}  \tilde{\pi}_{23 1^*}\tilde{\pi}_{12^*3^*}  + 2 \tilde{\pi}_{123^*}\tilde{\pi}_{ 3 1^*2^*}  \tilde{\pi}_{2 1^*3^*}\tilde{\pi}_{ 132^*}+ 2  \tilde{\pi}_{231^*}\tilde{\pi}_{12^*3^*} \tilde{\pi}_{21^*3^*}\tilde{\pi}_{  13 2^*}\cr
  &\&\qquad  + 4\tilde{\pi}_{123}\tilde{\pi}_{12^*3^*}  \tilde{\pi}_{21^*3^*}\tilde{\pi}_{31^* 2^*} + 4\tilde{\pi}_{123^*}\tilde{\pi}_{13 2^*} \tilde{\pi}_{231^*}\tilde{\pi}_{1^*2^*3^*},
  \label{hyperdet_3_6}
\eea
which is the single hyperdeterminantal relation for $\Gr^\LL_{\Cb^3}(\Cb^3 \oplus \Cb^{3*}, \omega_3)$.
(The same relation may be derived  {\em mutatus mutandis} using the pairs $(T_2, T_{2^*})$ or  $(T_3, T_{3^*})$.)

As explained in Section \ref{lagrange_map}, on the big cell of the Lagrangian Grassmannian, the
$N$-dimensional subspace $w^0$ is represented as the graph of a map $A(w^0):\Cb^N\rightarrow (\Cb^N)^*$, given by the affine
coordinate  matrix $A(w^0)$, which is symmetric,  and the Pl\"ucker coordinates  corresponding to symmetric partitions are
projectively equivalent  to its principal minors.
 Relation (\ref{hyperdet_3_6}) is an example of the ``core'' hyperdeterminantal relations  studied in \cite{HoSt, Oed}.
 In  \cite{Oed}, it was shown that these relations, orbited by the group $G_N$ defined in (\ref{G_N_def})
 as the semi-direct product of $\Sl(2,\Cb)^N$  with the symmetric
 group on $N$ letters, where the $\Sl(2,\Cb)$'s act within the $2$-planes spanned the dual pairs $(f_i, f_i^*$), and $S_N$
 permutes them, cut out the variety defined by the principal minors of $A$.

  Note that the symmetric  partition Pl\"ucker coordinates do not quite determine the isotropic plane.
  As explained for general $N$ in Subsection \ref{invert_lagrange_map}, the short Pl\"ucker relations (\ref{short1}) - (\ref{short3*})
  only determine the non-symmetric    coordinates $(T_1, T_2, T_3, T_{1^*}, T_{2^*}, T_{3^*})$ up to the action of the group $(\Zb_2)^3$ of sign changes within
  the canonical coordinate planes, which replaces these by  $(\epsilon_1T_1, \epsilon_2T_2, \epsilon_3T_3, \epsilon_1T_{1^*}, \epsilon_2 T_{2^*},
  \epsilon_3T_{3^*})$ for $\{\epsilon_i= \pm 1\}_{i=1, 2, 3}$. The hyperdeterminantal relation cuts out the variety obtained
  as the image of any of the points on an orbit, which, generically, is of cardinality $8$.

On $\HH_N = V\oplus V^*$, the definitions of the multi-indices $(A,I), (B,I)$  may be 
adapted to the basis $(e_{-N}, \dots, e_{N-1})$ as follows. The multindex $(A,I)$ is defined by combining 
$I= (I_1, ...,I_{N-3})\subset (1,2,..,N)$ as before with the ``marking'' function $A$ that associates to each $I_j$ either
 $ A(I_j) = -I_j$ or $I_j-1$ giving
\be
(A,I) = (A(I_1), A(I_2), ...,A(I_{N-3})
\label{marked_index2} 
\ee
 In the same way, we define the complementary  assignments $B(I_j) = -I_j$
  (resp. $I_j-1$) if $A(I_j) = I_j-1$ (resp. $-I_j$),  and  the complementary marked multi-index $(B,I)$.
The operators $p^3_{(B,I)}i_{f_{(A,I)}}$ are given, {\em mutatis mutandis},
 by the same operations of contraction and projection.   With thus have:
 \begin{proposition}
A generic element $\phi \in \Lambda^N(\HH_N)$ represents a  Lagrangian plane if and only if
 the eight symmetric coordinates of all the elements $ p^3_{(B,I)}i_{f_{(A,I)}}(\phi)$ satisfy the ``core''
 hyperdeterminantal relations (\ref{hyper_det_rels}).
 
\end{proposition}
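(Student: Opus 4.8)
The plan is to obtain the statement by feeding the $(3,6)$ computation of Section~\ref{hyperdet_Gr_L_3_6} into the geometric reduction of Proposition~\ref{prop:plucker-hyper}, once both are read in the $(e_{-N},\dots,e_{N-1})$ coordinates and $(A,I),(B,I)$ are the marked multi-indices adapted to $\HH_N=V\oplus V^*$ just above the statement. First I would invoke Proposition~\ref{prop:plucker-hyper}: on the relevant Zariski-open locus a generic $\phi$ represents a Lagrangian plane exactly when, for every $(A,I)$ with $|I|=N-3$, the three-vector $p^3_{(B,I)}i_{f_{(A,I)}}(\phi)$ is a null three-plane in the six-dimensional symplectic space $\HH_{(A,B,I)^c}\cong\Cb^3\oplus\Cb^{3*}$. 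This transfers everything to the $(3,6)$ setting, where Section~\ref{hyperdet_Gr_L_3_6} has already shown that combining the short Pl\"ucker relations (\ref{short1})--(\ref{short3*}) and the four-term relations (\ref{long1})--(\ref{long3}) with the isotropy conditions eliminates the twelve non-symmetric coordinates and leaves the single quartic (\ref{hyperdet_3_6}) on the eight symmetric coordinates $S_0,\dots,S_{3^*}$ of (\ref{symmetric_8}).

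The bookkeeping heart of the argument is the index dictionary. Writing $\{j_1,j_2,j_3\}=\{1,\dots,N\}\setminus I$ for the three free indices and $J\subseteq I$ for the subset of seed indices that $A$ marks as $V^*$-type, the correspondence $\tilde\pi_{(A,I),\mu,\nu,\sigma}\leftrightarrow\tilde\pi_{\mu,\nu,\sigma}$ of Section~\ref{geometry_plucker_restriction} identifies the eight symmetric coordinates of the reduced three-vector with the eight Lagrange coefficients of $\phi$ obtained by adjoining the starred free indices to $J$:
\[
S_0\leftrightarrow\LL_{J},\quad S_{0^*}\leftrightarrow\LL_{J,j_1,j_2,j_3},\quad S_a\leftrightarrow\LL_{J,j_a},\quad S_{a^*}\leftrightarrow\LL_{J,j_b,j_c}\ \ (\{a,b,c\}=\{1,2,3\}),
\]
i.e.\ the principal minors of sizes $|J|,|J|+3,|J|+1,|J|+2$. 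Substituting these into (\ref{hyperdet_3_6}) reproduces the core relation (\ref{hyper_det_rels}) monomial by monomial for that $(J,j_1,j_2,j_3)$, the sign conventions of (\ref{symmetric_8})--(\ref{unsymmetric_12}) together with the $(-1)^{j-1}$ factors of (\ref{e_f_basis_change}) and (\ref{affine_matrix}) being exactly what is needed for the two quartics to coincide. As $(A,I)$ runs over all $2^{N-3}\binom{N}{3}$ choices, the pair $(J,\{j_1,j_2,j_3\})$ runs over all disjoint configurations with $|J|\le N-3$, so the aggregate of reduced relations is precisely the full family (\ref{hyper_det_rels}). The forward implication then follows at once: a Lagrangian $\phi$ has all reduced three-vectors null, hence satisfies all instances of (\ref{hyperdet_3_6}), hence all core relations.

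For the converse I would argue through the ``if and only if'' of Proposition~\ref{prop:plucker-hyper}: assuming the core relations for every $(A,I)$, each reduced three-vector has symmetric coordinates lying on the $(3,6)$ hyperdeterminantal hypersurface, which by Section~\ref{hyperdet_Gr_L_3_6} is exactly the image of the Lagrange map on $\Gr^\LL_{\Cb^3}(\Cb^3\oplus\Cb^{3*},\omega_3)$; recovering the non-symmetric coordinates $T_\bullet$ from the short relations (\ref{short1})--(\ref{short3*}) up to the $(\Zb_2)^3$ sign ambiguity exhibits each reduced three-vector as a null three-plane, and Proposition~\ref{prop:plucker-hyper} then returns that $\phi$ is Lagrangian.

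The hard part is precisely this converse, and it is where genericity must be used with care. Since the core relations constrain only the symmetric (Lagrange) coordinates, the equivalence can hold only modulo the reconstruction of the non-symmetric coordinates via the short Pl\"ucker relations, which is the source of the $2^{N-1}$-fold non-injectivity of the Lagrange map discussed in Section~\ref{invert_lagrange_map}; one quartic on the eight symmetric coordinates does not by itself force the reduced three-vector to be decomposable and isotropic, so the reduction to the image of the $(3,6)$ Lagrange map is essential. The two points needing genuine verification are therefore: (i) that the label translation from the $\{f_i,f_i^*\}$ basis of Section~\ref{hyperdet_Gr_L_3_6} to the $\{e_{-N},\dots,e_{N-1}\}$ basis carries (\ref{hyperdet_3_6}) onto (\ref{hyper_det_rels}) with the correct global sign; and (ii) that the Zariski-open loci of Proposition~\ref{prop:plucker-hyper} and of the $(3,6)$ analysis---on which the contractions $i_{f_{(A,I)}}(\phi)$ are nonzero and the elimination of the $T_\bullet$ is valid---intersect in a single dense open set on which the stated equivalence is exact.
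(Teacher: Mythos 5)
Your proposal is correct and follows essentially the same route as the paper, which states this proposition without a separate proof precisely as the assembly you describe: Proposition~\ref{prop:plucker-hyper} reduces Lagrangianity of a generic $\phi$ to the nullity of the $3$-vectors $p^3_{(B,I)}i_{f_{(A,I)}}(\phi)$ in the six-dimensional spaces $\HH_{(A,B,I)^c}$, the $(3,6)$ elimination of the $T_\bullet$ via (\ref{short1})--(\ref{long3}) yields the single quartic (\ref{hyperdet_3_6}), and the concluding dictionary (with $J=\{i\in I \mid A(i)=i-1\}$) identifies the eight symmetric coordinates with the Lagrange coefficients $\LL_J,\ \LL_{J,j_a},\ \LL_{J,j_a,j_b},\ \LL_{J,j_1,j_2,j_3}$, reproducing (\ref{hyper_det_rels}) as $(A,I)$ varies. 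Your closing caveats --- that the converse works only through reconstructing the nonsymmetric coordinates up to the $(\Zb_2)^3$ sign ambiguity, and that the genericity loci of Proposition~\ref{prop:plucker-hyper} and of the $(3,6)$ elimination must be intersected --- are in fact stated more explicitly by you than by the paper itself.
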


Finally, if we let the multi-index $J$ denote the set of indices $i\in I $ for which $A(i) = i-1$,  the symmetric Pl\"ucker coordinates 
of $p^3_{B,I}(i_{f_{(A,I})}(\phi))$ are precisely the Lagrangian coefficients 
$\mathcal L_{J}, \mathcal L_{J, j_1},  \mathcal L_{J, j_1, j_2}, $ $\mathcal L_{J, j_1, j_2,j_3}$
of the image of the Lagrange map. Therefore  a generic  element of $\HH_N $ represents a Lagrangian plane if and only if the image
of the Lagrange map satisfies the hyperdeterminantal relations  ({\ref{hyper_det_rels}) of Proposition \ref{hyperdet_finite}.


\subsection{Hexahedron recurrence equations}
\label{sec:hexahedron}

   The hyperdetermantal relations (\ref{hyper_det_rels}) were introduced as
 integrable systems of recurrence relations on lattices by Kashaev \cite{Ka}, who showed that the star triangle
 relations satisfied by Boltzmann weights for the Ising model imply these for a suitably defined $\tau$-function
on the $\Zb^3$ integer lattice. They were studied subsequently by Schief and others \cite{Sch, BobSch1, BobSch2, FN},
as discrete analogs of the CKP hierarchy.

Kenyon and Pemantle \cite{KePe1, KePe2} extended these to a larger  system,
which they called the {\em hexahedron recurrence}, and applied them to the study of double dimer covers
and rhombus tilings. These can either be derived  directly or,  if we include both symmetric and
nonsymmetric Pl\"ucker coordinates, by again combining the Pl\"ucker relations
with the linear Lagrangian condition.

 To see this,  multiply the short Pl\"ucker relations  (\ref{short1*}), by $S_1$ to get:
\be\label{short4}
  S_1(S_0S_{1^*} +  T_1^2 -  S_2S_3)=0.
\ee
and another short Pl\"ucker relation by $T_1$ to get
\be
\label{short3}
T_1 (S_0 T_{1^*} +  S_1T_1 - T_2T_3 )  =0.
\ee
Taking the difference gives
\begin{subequations}
\be
 S_0 (T_1 T_{1^*} - S_1S_{1^*}) =    T_1T_2T_3 -S_1 S_2 S_3 .
 \label{hexahedron1}
\ee
which, up to some changes of notation\footnote{To compare with the notation of \cite{HoSt}, \cite{KePe1} and \cite{KePe2}, set
\bea
S_0 &\&=A_0= h=a_0, \ S_1=A_1 = h_{(1)} = a_7, \ S_2 =A_2 =- h_{(2)}=-a_8,\ S_3 = A_3= h_{(3)}=a_9, \cr
S_{0^*} &\& = A_{123}= h_{(123)}  = a_0^*, \ S_{1^*} =A_{23}=h_{(23)} = a_4, \  S_{2^*}=A_{13} =-h_{(13)} = -a_5 , \  S_{3^*} =A_{12} =h_{(12)} = a_6, \cr
 T_1 &\& = h^{(x)} =a_1, \ T_2= h^{(y)}=a_2, \ T_3 = h^{(z)} = a_3, \
T_{1^*} = h^{(x)}_{(1)} = a_1^*, \ T_{2^*}  = h^{(y)}_{(2)} = a_2^*, \  T_{3^*} = h^{(x)}_{(3)} = a_3^*. \cr
 \nonumber
\eea},  is one of the hexahedron relations. The others
\bea
 S_0 (T_2 T_{2^*}  - S_2S_{2^*})&\&=    T_1T_2T_3 - S_1 S_2 S_3,
 \label{hexahedron2}  \\
 S_0 (T_3 T_{3^*}  - S_3S_{3^*}) &\&=  T_1T_2T_3 - S_1 S_2 S_3 ,
  \label{hexahedron3}  \\
S_{0^*} (T_1 T_{1^*}  - S_1S_{1^*}) &\& =  T_{1^*}T_{2^*}T_{3^*} -S_{1^*} S_{2^*} S_{3^*}  ,
 \label{hexahedron1*}  \\
S_{0^*} (T_2 T_{2^*} -  S_2S_{2^*}) &\& = T_{1^*}T_{2^*}T_{3^*} - S_{1^*} S_{2^*} S_{3^*} ,
 \label{hexahedron2*}  \\
S_{0^*} (T_3 T_{3^*} - S_2S_{2^*}) &\& =    T_{1^*}T_{2^*}T_{3^*} - S_{1^*} S_{2^*} S_{3^*},
 \label{hexahedron3*}
\eea
\end{subequations}
are derived similarly. The degree six relation (1.4d) in \cite{KePe2} follows by solving  eqs.~(\ref{hexahedron1}) - (\ref{hexahedron3})
for $T_{1^*}, T_{2^*}$ and $T_{3^*}$ and substituting either in (\ref{hexahedron1*}), (\ref{hexahedron2*}) or (\ref{hexahedron3*}).


\section{The CKP hierarchy, infinite Lagrangian Grassmannians and hyperdeterminantal relations}
\label{CKP_hierarchy}

\subsection{Baker function, Lax operators and CKP reduction}
\label{KP_Baker_CKP}

\quad We  recall the formulation of the KP hierarchy  as an infinite abelian group action on an infinite dimensional
Grassmannian \cite{Sa, SW}, its relation to isospectral flows of pseudo-differential operators
and reduction to the CKP hierarchy  \cite{DJKM1, JM1}.

    Let $\HH$ denote a separable Hilbert space, with orthonormal
basis $\{e_i\}_{i\in \Zb}$ labelled by the integers. Concretely, we may think of $\HH $ as
the space of square integrable functions  $L^2(S^1)$ on the unit circle
$S^1=\{z := e^{i\theta}, \ 0\le \theta < 2\pi\}$ in the complex plane with hermitian inner product
\be
(f, g) := \frac{1}{2\pi i} \oint_{z\in S^1} \overline{f(z) }g(z) \frac{dz}{z}.
\ee
and (for reasons of historical conventions), choose the basis elements as the monomials
\be
e_i := z^{-i-1}, \quad i \in \Zb.
\ee

Split $\HH$ as a direct sum
\be
\HH = \HH_+ \oplus \HH_-
\label{spli_hilbert_space}
\ee
of  Hardy spaces
\be
\HH_+ := \span\{z^i =e_{-i-1}\}_{i\in \Nb}, \quad \HH_- := \span\{z^{-i} =e_{i-1}\}_{i\in \Nb^+},
\label{hardy_spaces}
\ee
consisting of elements $f\in \HH_+$ that admit analytic continuation to the interior of $S^1$
and  $f\in \HH_-$ that admit analytic continuation outside $S^1$, with $f(\infty)=0$
(or, equivalently, the positive  and negative power Fourier series).
By the infinite Grassmannian $\Gr_{\HH_+}(\HH)$, we mean
a suitably defined Banach manifold (see \cite{SW}) consisting of subspaces $w\ss\HH$ that
are {\em commensurable} with $\HH_+$, in the sense that  orthogonal projection
$\pi_+: w \ra \HH_+$ is a Fredholm operator (with index $n \in \Zb$) while the
projection $\pi_-: w \ra \HH_-$ is ``small'' (either Hilbert-Schmidt, or compact,
depending on the context).

We skip the analytic details (see \cite{SW} or \cite{HB},  Chapt. 3),
and just require that, via a suitable choice of ``admissible basis", we may identify
the spaces $w$ and $\HH_+$ as isomorphic, so  it is meaningful to define
the {\em determinant} $\det(\pi_+:w \ra \HH_+)$ of the projection map.
We also define (as in \cite{SW}) the general linear group $\Gl(\HH)$ of
invertible endomorphisms of $\HH$ (satisfying certain admissibility conditions),
its Lie algebra $\grgl(\HH)$, and the abelian subgroup of  {\em shift flows} $\Gamma_+ \ss \Gl(\HH)$
\be
\Gamma_+:= \{ \gamma_+(\tb) \in \Gl(\HH),  \ \gamma_+(\tb) \gamma_+(\sb) = \gamma_+(\tb+\sb)\},
\ee
where $\tb = (t_1, t_2, \dots )$ are the KP flow variables, and the abelian group
\be
\Gamma_+:= \{\gamma_+(\tb) := e^{\xi(z,\tb)}, \quad \xi(z, \tb) := \sum_{i=1}^\infty t_i z^i\}
\ee
acts on $f\in \HH=L^2(S^1)$ by multiplication. This lifts in the standard way to an action on the Grassmannian
\bea
\Gamma_+ \times \Gr_{\HH_+}(\HH) &\&\ra \Gr_{\HH_+}(\HH) \cr
(\gamma_+(\tb), w) &\& \mapsto w(\tb) := \gamma_+(\tb) w.
\eea
The orbit of an element $w\in \Gr_{\HH_+}(\HH)$ under this action is denoted
$ \OO_w =\{w(\tb)\}$.
The KP $\tau$-function $\tau^{KP}_w$ corresponding to the element $w$ is defined to be
\be
\tau^{KP}_w(\tb) := \det(\pi_+: w(\tb) \ra \HH_+).
\ee
This then satisfies the Hirota bilinear residue equations
\be
\res_{z=\infty} (e^{\xi(z,\ {\bf \delta t})}  \tau^{KP}_w(\tb - [z^{-1}])
\tau^{KP}_w(\tb + {\bf \delta t}+  [z^{-1}]) )dz
\label{KP_hirota_res_eq}
\ee
identically in the parameters ${\bf \delta t}=(\delta t_1, \delta t_2, \dots, )$,
where
\be
[z^{-1}] := \left(\frac{1}{z}, \frac{1}{2z^2}, \dots \right),
\ee
and the formal residue $\res_{z=\infty}( \cdots)\,dz$ signifies evaluation of the coefficient of the $\frac{1}{z}$ term in the formal
Laurent series appearing in each coefficient  of the monomials in the shift parameters $\{\delta t_i\}$.

The formal Baker-Akhiezer function (or {\em wave function})  and its dual are given by the Sato formulae \cite{Sa, JM1} as
\bea
\Psi(z, \tb) &\&:= e^{\xi(z, \tb)} \frac{\tau(\tb - [z^{-1}])}{\tau(\tb)} =:  e^{\xi(z, \tb)} (1 +\sum_{i=1}^\infty{a_i}(\tb) z^{-i}),\\
\Psi^*(z, \tb) &\&:= e^{-\xi(z, \tb)} \frac{\tau(\tb + [z^{-1}])}{\tau(\tb)} =:  e^{-\xi(z, \tb)} (1 +\sum_{i=1}^\infty{a^*_i}(\tb) z^{-i}).
\label{sato_formula}
\eea
The formal pseudo-differential ``wave operator'' and its dual are defined by
\bea
\hat{W}  &\&:= 1  +\sum_{i=1}^\infty{a_i(\tb)} \partial^{-i},\\
\hat{W}^\dag  &\&:= 1  +\sum_{i=1}^\infty{a^*_i(\tb)} \partial^{-i},
\label{wave_op}
\eea
where
\be
\partial :=\frac{\partial}{\partial x},
\ee
so that
\be
\Psi(z, \tb) = \hat{W} ( e^{\xi(z, \tb)} ) , \quad  \Psi^*(z, \tb) = (\hat{W}^\dag)^{-1} ( e^{-\xi(z, \tb)} ) ,
\ee
and the Lax pseudo-differential operator operator is
\be
\LL := \hat{W} \partial \hat{W}^{-1} = \partial + \sum_{i=1}^\infty u_i (\tb) \partial^{-i}.
\label{lax_dressing}
\ee

It follows (see \cite{Sa, SW}, or \cite{HB}, Chapt. 3) that $\Psi(z, \tb)$ satisfies
\be
\frac{\partial \Psi}{\partial t_i} = \DD_i \Psi,  \quad \forall i \in \Nb,
\ee
where
\be
\DD_i := (\LL^i)_+
\ee
is the differential operator part of the pseudo-differential operator $\LL^i$,
and $\LL$ satisfies the Lax equations
\be
\frac{\partial \LL}{\partial t_i} = [ \DD_i, \LL].
\ee
The compatibility conditions
\be
\frac{\partial \DD_i}{\partial t_j} -  \frac{\partial \DD_j}{\partial t_i} + [\DD_i, \DD_j] =0
\label{KP_hierarchy}
\ee
give an infinite set of constant coefficient partial differential equations for
the functions $\{u_i(\tb)\}_{i\in \Nb^+}$,  each involving derivatives with respect to
a triple $(x, t_i, t_j)_{1< i<j}$, with $x$ identified, within a translation constant, with the flow variable $t_1$.
The functions $\{u_i(\tb)\}_{i\in \Nb^+}$ are uniquely determined,
through eqs.~(\ref{sato_formula}), (\ref{wave_op}), (\ref{lax_dressing}), in terms of derivatives of the
$\tau$-function, and the set of equations (\ref{KP_hierarchy}) are equivalent to the Hirota
residue equation (\ref{KP_hirota_res_eq}).

   The KP hierarchy is reduced to the CKP one \cite{DJKM1, JM1}  by imposing additional conditions.
   In terms of the Lax operator $\LL$,  we require the formal anti-self-adjointness
   condition
  \be
  \LL^\dag = - \LL,
 \ee
 to be satisfied, which implies that
\be
\DD^\dag_{2j -1} = - \DD_{2j-1} , \quad j \in \Nb^+.
\ee
It  follows that
\be
\tau(\tb) = \tau(\tilde{\tb}),
\ee
where
\be
\tilde{\tb} := (t_1, - t_2, t_3, -t_4, \dots),
\ee
and
\be
\Psi^*(z, \tb) = \Psi(-z, \tilde{\tb}).
\ee
The Hirota bilinear equation (\ref{hirota_bilinear_tau_res}) therefore reduces to
\be
\res_{z=0} \Psi(z, \tb) \Psi(-z, \tilde{\tb}+\delta\tilde{\tb})dz  = 0
\label{hirota_bilinear_tau_res_CKP_bis}
\ee
or, for vanishing even flow variables
\be
\res_{z=0} \Psi(z, \tb') \Psi(-z, \tb' +\delta\tb')dz  = 0,
\label{hirota_bilinear_tau_res_CKP_t_o}
\ee
which is (\ref{hirota_bilinear_tau_res_CKP}) with
\be
\Psi(z, \tb')  := \Psi_{w^0}(z, {\bf t}_o)
\ee


\subsection{Fermionic representation of KP $\tau$-functions}
\label{fermionic_KP-tau}

The fermionic Fock space is  the semi-infinite wedge product space of $\HH$ with itself
\be
\FF:= \Lambda^{\infty/2}(\HH) = \bigoplus_{n\in \Zb} \FF_n,
\ee
which is the orthogonal direct sum of the subspaces $\FF_n$ with fermionic charge $n$.
Orthonormal bases $\{|\lambda; n\rangle\}$, labelled by pairs
$(\lambda, n)$ of integer partitions $\lambda$ of any weight and integers $n\in \Zb$,
are defined as
\be
|\lambda; n\rangle := e_{l_1} \swedge e_{l_2} \swedge e_{l_3} \swedge \cdots,
\label{eq:LambdanState}
\ee
where $l_1 > l_2  > \cdots$ is a strictly decreasing sequence of integers, called {\em particle locations}
which saturates, after $\ell(\lambda)$ terms, to become a sequence of  successive decreasing integers.
These are determined in terms of the parts $\{\lambda_i\}_{i\in \Nb^+}$ of $\lambda$ and $n$
(where $\lambda_i := 0$ if   $i> \ell(\lambda)$) by
\be
l_i := \lambda_i - i + n.\quad i \in \Nb^+.
\label{particle_locations}
\ee

The vacuum element in the sector $\FF_n$ is
\be
|\emptyset; n\rangle =: |n \rangle = e_{n-1} \swedge e_{n-2} \swedge \cdots.
\ee

 As in the finite dimensional case, the Fermi creation and annihilation operators are
 elements of the representation
 \be
 \hat{\Gamma} : \Cl(\HH\oplus\HH^*, Q) \ra \End(\FF)
 \ee
 of the infinite dimensional Clifford algebra $\Cl(\HH\oplus\HH^*, Q)$,
where $Q$ is the canonical quadratic form on $\HH\oplus \HH^*$.
\be
Q(v+ \mu) = 2\mu(v), \quad v \in \HH, \ \mu \in \HH^*,
\ee
generated by exterior and interior multiplication by the basis elements and their duals:
\bea
\hat{\Gamma}_{X +\xi} &\&:= X \wedge + i_{\xi} \in \End(\FF),   \quad X\in \HH, \ \xi \in \HH^*,
\\
\hat{\Gamma}_{e_i} &\&= \psi_i := e_i \swedge, \quad  \hat{\Gamma}_{e^*_i} = \psi_i^\dag := i_{e_i^*}.
\eea
These satisfy the anticommutation relations
\be
[\psi_i, \psi_j ]_+ =  [\psi^\dag_i, \psi^\dag_j ]_+ =0, \quad [\psi_i, \psi^\dag_j ]_+= \delta_{ij},
\ee
and the vacuum annihilation conditions
\be
\psi_{-i}|0 \rangle = 0, \quad \psi^\dag_{i-1}|0 \rangle =0 , \ \forall \  i \ \in \Nb^+.
\ee
An equivalent way \cite{HB} of representing the basis elements is then
\be
|\lambda ; n\rangle = (-1)^{\sum_{i=1}^r b_i }\prod_{i=1}^r \psi_{a_i+n} \psi^\dag_{-b_i -1 +n} | n \rangle,
\ee
where $(a_1, \dots, a_r | b_1 , \dots, b_r)$ are the Frobenius indices of the partition $\lambda$.

The Clifford representation  of elements of the Lie algebra $\grg\grl(\HH)$ is
\be
\hat{A} = \sum_{i, j \in \Zb }A_{ij} \no{\psi_i \psi^\dag_j},
\ee
where normal ordering $\no{\OO}$ of bilinear elements means
\be
\no{\psi_i\psi^\dag_j}:= \psi_i\psi_j^\dag - \langle 0 | \psi_i \psi^\dag_j| 0 \rangle.
\ee
The corresponding Clifford representation of an element $g = e^A \in \Gl_0(\HH)$  in the
identity component of the general linear group $\Gl(\HH)$ is given by exponentiation
\be
\hat{g}= e^{\hat{A}}.
\ee
The current components are defined by
\be
J_i := \sum_{j\in \Zb}\psi_j \psi^\dag_{j+i}, \quad \text{for} \ \pm i \in \Nb^+ ,
\ee
and the abelian group of KP ``shift flows''  is represented fermionically by
\be
\hat{\gamma}_+(\tb) = e^{\sum_{i=1}^\infty t_i J_i}.
\ee

For any $g\in \Gl_0(\HH)$ in the identity component of  the general linear group $\Gl(\HH)$) for which
\be
w := g(\HH_+) \in \Gr^0_{\HH_+}(\HH)
\ee
belongs to the virtual dimension $0$  component   $\Gr^0_{\HH_+}(\HH)$ of the Segal--Wilson Grassmannian \cite{SW, HB}  ),
the corresponding KP $\tau$-function is given  by the fermionic  vacuum expectation value (see \cite{DJKM3},  or \cite{HB}, Chapt. 5)
\be
\tau^{KP}_w(\tb) = \langle 0 | \hat{\gamma}_+(\tb) \hat{g} | 0 \rangle,
\label{tau_w_fermionic_VEV}
\ee
where $\hat{g}$ is the fermionic representation of $g$.

Under the bosonization isomorphism (in the zero fermionic charge sector $\FF_0$)
\be
\II_{\BB\FF}: | v ; 0 \rangle \mapsto \langle 0 | \hat{\gamma}_+(\tb) | v ; 0\rangle,
\label{bosoniz_isomorph}
\ee
the element $\hat{g}|0 \rangle$ gets mapped to the KP  $\tau$-function $\tau^{KP}_w(\tb)$,
the basis elements $|\lambda; 0\rangle$ get mapped to Schur functions
\be
\II_{\BB\FF}( |\lambda; 0\rangle) = s_\lambda(\tb)
\label{bosoniz_schur}
\ee
and the current components get mapped to:
\be
\II_{\BB\FF} \cdot J_i \cdot \II^{-1}_{\BB\FF}= i \frac{\partial}{\partial t_i},\quad \II_{\BB\FF}\cdot J_{-i} \cdot \II^{-1}_{\BB\FF}= t_i.
\label{bosoniz_J_i}
\ee
The Pl\"ucker coordinates $\{\pi_\lambda(w)\}$ of the element $w = g(\HH_+)$
appearing as coefficients in the expansion  over Schur functions
\be
\tau_w^{KP}(\tb) = \sum_\lambda \pi_\lambda(w)  s_\lambda(\tb)
\label{KP_tau_schur_exp}
\ee
of  the  $\tau$-function $\tau_w^{KP}(\tb)$ defined in (\ref{tau_w_fermionic_VEV}) 
are the fermionic matrix elements
\be
\pi_\lambda(w) := \langle \lambda;0 | \hat{g}|0\rangle,
\ee
and the Pl\"ucker map $\Pl_{\HH_+}: \Gr_{\HH_+}(\HH) \ra \Pb(\FF)$  applied to
an element $w$ with admissible basis  (\cite{SW})  $\{w_1, w_2, \cdots\}$ gives
\be
\Pl_{\HH_+}(w):=  [w_1 \swedge w_2 \swedge \cdots ]  =:[|w\rangle] = \big[ \sum_\lambda   \pi_\lambda (w) |\lambda; 0\rangle\big]
 = \bigcap_{i\in \Nb} \ker(\hat{\Gamma}_{w_i}).
 \label{inf_dim_plucker_map}
\ee


\subsection{Symplectic form $\omega$ on $\HH$, Lagrangian Grassmannians and fermionic representation of $\symp(\HH, \omega)$ }
\label{symplectic_form_HH}

We define the symplectic form $\omega$ on $\HH = L^2(S^1)$  by
\be
\omega(f,g) = \frac{1}{2\pi i}\oint_{z\in S^1} f(z) g(-z) dz.
\ee
The subspaces $\HH_\pm \ss \HH$ are maximal isotropic, i.e. Lagrangian, with respect to
$\omega$, and may be viewed as mutually dual  $\HH_- \sim \HH_+^*$ under the pairing
\be
f_-(g_+) := \omega(f_-, g_+) \quad \text{for } f_- \in \HH_-\ g_+ \in \HH_+.
\ee
In terms of this pairing, the symplectic form is
\be
\omega( f_- + f_+, g_- + g_+) = f_-(g_+) - g_-(f_+), \quad \text{for } \ f_\pm, g_\pm \in \HH_\pm
\ee
or, in terms of basis elements
\be
\omega(e_i, e_j) =-\omega(e_j, e_i)= (-1)^i \delta_{i, -j -1}, \quad e_{i}(e_{-j-1} )= (-1)^i \delta_{ij}, \quad i, j \in \Zb.
\ee

Following \cite{JM1}, the fermionic representation of the $C_\infty$ Lie algebra is realized as the
subalgebra of $\grg\grl(\HH) \sim A_\infty$ consisting of the fixed points
\be
\sigma_{-1}(\hat{A}) = \hat{A}
\ee
under the Clifford algebra automorphism generated by
\be
\sigma_{-1} (\psi_i) := (-1)^{i+1} \psi^\dag_{-i-1}, \quad \sigma_{-1} (\psi^\dag_i) := (-1)^{i+1} \psi_{-i-1}.
\label{sigma_def}
\ee

The entire algebra is  generated by forming successive commutators  from the Chevalley basis elements:
\bea
\hat{E}_0 &\&= \psi_{-1}\psi^\dag_0, \quad \hat{F}_0 =\psi_{0}\psi^\dag_{-1},
 \quad \hat{H}_0 = \psi_{-1}\psi^\dag_{-1}- \psi_0 \psi^\dag_0, \cr
 &\& \cr
 \hat{E}_j &\&= \psi_{j-1} \psi^\dag_j + \psi_{-j-1}\psi^\dag_{-j},  \quad
\hat{F}_j = \psi_{j} \psi^\dag_{j-1} + \psi_{-j}\psi^\dag_{-j-1}, \quad \text{for }  j\ge 1,\cr
 &\& \cr
 \hat{H}_j &\& = \psi_{j-1} \psi^\dag_{j-1}  - \psi_{j} \psi^\dag_{j} + \psi_{-j-1}\psi^\dag_{-j-1}- \psi_{-j}\psi^\dag_{-j},
 \quad  \text{for }  j\ge 1.
 \label{chevalley}
\eea
This corresponds to the following representation on $\HH$ as  generators of a subalgebra of $A_\infty \sim \grgl(\HH)$:
\bea
E_0 e_i &\&= \delta_{i,0} \, e_{-1}, \quad F_0 e_i = \delta_{i, -1}\, e_0,
\quad H_0 e _i = \delta_{i,-1} \, e_{-1} - \delta_{i, 0} \, e_0, \cr
&\& \cr
E_j e_i &\&= \delta_{i,j} \, e_{j-1} + \delta_{i, -j} \, e_{-j-1}, \quad F_j \, e_i = \delta_{i, j-1} \, e_j + \delta_{i, -j-1} \,e_{-j}, \quad   \text{for } j \ge 1\cr
&\& \cr
H_j e_i &\&=(\delta_{i, j-1} - \delta_{i,j} + \delta_{i, -j-1} - \delta_{i, -j}) \, e_i,\quad   \text{for } j \ge 1.
\eea
It follows that all these elements $X$ satisfy
\be
\omega(X e_i, e_j) + \omega(e_i, X e_j) =0,
\label{sympl_invar_chevalley}
\ee
and so do the commutators $[X, Y]$ of any two such elements, and all successive
commutators, and hence any element $X \in C_\infty$.
The symplectic form $\omega$ is therefore  invariant under this $C_\infty$ action, and we may
 identify $C_\infty \sim \symp(\HH, \omega) \ss \grgl(\HH)$.

The orbit of $\HH_+\ss \HH$ under the subgroup $\Sp(\HH, \omega) \ss \Gl(\infty)$ preserving the
symplectic form $\omega$ is the Lagrangian Grassmannian $\Gr^\LL_{\HH_+}(\HH, \omega) \ss \Gr_{\HH_+} (\HH)$
consisting of maximal isotropic subspaces $w^0\ss \HH$,  on which the restriction of $\omega$ to any
$w^0\in  \Gr^\LL_{\HH_+}(\HH, \omega)$ vanishes
\be
\omega \vert_{w^0} =0.
\label{inf_lagrangian_cond}
\ee

Bose-Fermi equivalence identifies the basis state $|\lambda\rangle $
in the zero fermionic charge sector $\FF_0$  with the Schur function $s_\lambda(\tb)$. The operators
that are the fermionization of the (Murnaghan-Nakayama) operator \cite{St} of multiplication by $j t_j$ and its dual $\frac{\partial}{\partial t_j}$
when acting on the basis of Schur functions are the current components
\bea
J_{-j} &\&:= \sum_{i\in \Zb} \psi_i\psi^\dag_{i-j} = \II^{-1}_{\FF \BB} \cdot j t_j \cdot \II_{\FF \BB} , \cr
J_j &\&:= \sum_{i\in \Zb} \psi_{i}\psi^\dag_{i+j} =  \II^{-1}_{\FF \BB} \cdot  \frac{\partial}{\partial t_j} \cdot \II_{\FF \BB} , \quad r\in \Nb^+ .
\label{fermi_murnaghan_op_dual}
\eea


\subsection{Decomposition of $\FF$ into $\Sp(\HH, \omega)$ invariant submodules}
\label{infin_Sp_decomp}

The operator $i_{\omega_N}$ defined in (\ref{eq:OperatorOmegaNDag}), may be identified in the infinite dimensional setting as  the fermionic operator
\be
\hat{\omega}^\dag := \sum_{i=0}^\infty (-1)^i \psi^\dag_{-i-1}\psi^\dag_{i},
\label{hat_omega_def}
 \ee
which lowers the fermionic charge by $2$.
Denote the kernel of $\hat{\omega}^\dag $, restricted to $\FF_0$, as
\be
\FF^{(0)}_0 := \{|v \rangle \in \FF_0 \ | \ \hat{\omega}^\dag | v \rangle= 0\}.
\ee
This is the infinite dimensional counterpart of the $\Sp(W, \omega_N)$ invariant
submodule $P^N_N \ss \Lambda^N(W)$ defined in Section \ref{decomp_Sp_2N_irreps}.
We also define the dual fermionic operator
\be
\hat{\omega} := -\sum_{i=0}^\infty (-1)^i \psi_{-i-1}\psi_{i},
\label{eq:OmegaHatDefinition}
\ee
which raises the fermionic charge by $2$, and has the same kernel
\be
\FF^{(0)}_0 := \{|v \rangle \in \FF_0 \ | \ \hat{\omega} | v \rangle= 0\}.
\ee
It follows that both $\hat{\omega}$ and $\hat{\omega}^\dag$ commute with all elements of $\symp(\HH, \omega)$.
\begin{lemma}
\label{omega_commute}
\be
[\hat{\omega}, \hat{X}]  = 0 \quad \text{and} \quad [\hat{\omega}^\dag, \hat{X}]  = 0 \quad \forall \ \hat{X} \in \symp(\HH, \omega).
\ee
\end{lemma}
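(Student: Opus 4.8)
The plan is to show directly that the commutator of $\hat\omega^\dag$ (and of $\hat\omega$) with an arbitrary element $\hat A=\sum_{i,j\in\Zb}A_{ij}\no{\psi_i\psi^\dag_j}$ of $\symp(\HH,\omega)$ vanishes, the point being that the infinitesimal symplectic condition on $A$ is exactly what is needed to cancel the result. First I would record the elementary commutators that follow from the anticommutation relations,
\be
[\hat A,\psi^\dag_k]=-\sum_{j}A_{kj}\psi^\dag_j,\qquad [\hat A,\psi_k]=\sum_{i}A_{ik}\psi_i,
\ee
which are unaffected by the normal ordering since it only shifts $\hat A$ by a scalar. Applying the Leibniz rule to each bilinear summand $\psi^\dag_{-i-1}\psi^\dag_i$ of $\hat\omega^\dag$ and collecting terms, one obtains a purely quadratic expression in the $\psi^\dag$'s.

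The key step is the bookkeeping that recombines the two resulting half-range sums into a single sum over all of $\Zb$. After reindexing (writing the first factor's index as a free integer $a$ and tracking the signs $(-1)^i$ through the substitutions $a=i\ge0$ and $a=-i-1\le-1$), the commutator takes the form
\be
[\hat A,\hat\omega^\dag]=\sum_{a,b\in\Zb}C_{ab}\,\psi^\dag_a\psi^\dag_b,\qquad C_{ab}:=(-1)^aA_{-a-1,b}.
\ee
Now I would translate the defining condition of $\symp(\HH,\omega)$. Using the basis values $\omega(e_i,e_j)=(-1)^i\delta_{i,-j-1}$ and $Ae_k=\sum_l A_{lk}e_l$, the invariance relation $\omega(Ae_i,e_j)+\omega(e_i,Ae_j)=0$ becomes $(-1)^iA_{-i-1,j}=(-1)^jA_{-j-1,i}$, i.e. precisely the symmetry $C_{ab}=C_{ba}$. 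Since $\psi^\dag_a\psi^\dag_b=-\psi^\dag_b\psi^\dag_a$ is antisymmetric, contracting it against the symmetric tensor $C$ gives zero, so $[\hat\omega^\dag,\hat A]=0$.

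The computation for $\hat\omega$ is identical in spirit: the same two commutators produce $[\hat A,\hat\omega]=\sum_{a,l}D_{al}\psi_a\psi_l$ with $D_{al}=(-1)^aA_{l,-a-1}$, and one checks from $C_{ab}=C_{ba}$ that $D$ is again symmetric, so it too vanishes against $\psi_a\psi_l$. I expect the main obstacle to be not the algebra itself but the careful sign and index bookkeeping in merging the half-range sums, together with justifying the manipulation of infinite sums. The latter is harmless because both $\hat\omega$ and $\hat\omega^\dag$ act with only finitely many nonzero terms on each basis state $|\lambda;n\rangle$, so all identities may be verified on this dense family of finite-particle states; alternatively, since $\symp(\HH,\omega)\cong C_\infty$ is generated as a Lie algebra by the finite Chevalley elements (\ref{chevalley}), one may first verify vanishing on those (a finite computation) and then extend to all iterated commutators via the Jacobi identity, entirely avoiding convergence questions.
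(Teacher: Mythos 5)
Your proof is correct, and it takes a genuinely different route from the paper's. The paper disposes of the lemma in two lines: it verifies the commutation by direct computation on the Chevalley generators (\ref{chevalley}) --- finite sums, so no convergence issues --- and then extends to all of $\symp(\HH,\omega)$ by the Jacobi identity, using the fact that $C_\infty$ is generated by successive commutators of these elements; this is exactly the fallback you sketch in your closing sentence. Your main argument instead computes the commutator for an arbitrary bilinear $\hat A=\sum_{i,j}A_{ij}\no{\psi_i\psi_j^\dag}$ and identifies the infinitesimal invariance condition $\omega(Ae_i,e_j)+\omega(e_i,Ae_j)=0$ of eq.~(\ref{sympl_invar_chevalley}) with the symmetry $C_{ab}=C_{ba}$ of the coefficient tensor $C_{ab}=(-1)^aA_{-a-1,b}$, so that $[\hat A,\hat\omega^\dag]=\sum_{a,b}C_{ab}\psi_a^\dag\psi_b^\dag$ is the contraction of a symmetric tensor against the antisymmetric $\psi_a^\dag\psi_b^\dag$ and vanishes; your sign and index bookkeeping checks out (the invariance relation does reduce to $(-1)^iA_{-i-1,j}=(-1)^jA_{-j-1,i}$, and the symmetry of $D_{al}=(-1)^aA_{l,-a-1}$ follows from that of $C$ by the substitution $i\mapsto -l-1$, $j\mapsto -a-1$). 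What your route buys: it handles all of $\symp(\HH,\omega)$ at once rather than only what is generated from the Chevalley elements, and it in fact proves the converse as well --- the commutator vanishes \emph{iff} $A$ is infinitesimally symplectic --- which makes transparent why the hypothesis is exactly right. What the paper's route buys: brevity and complete avoidance of infinite-sum manipulations. One point worth stating explicitly in your write-up: commutators of two normal-ordered bilinears can in general acquire a central (anomaly) term, but none can occur here, since $\hat\omega^\dag$ (resp.\ $\hat\omega$) is quadratic purely in the $\psi^\dag$'s (resp.\ $\psi$'s) and shifts the fermionic charge by $\mp2$, so the commutator can have no scalar component; alternatively, your verification on the finite-particle basis states $|\lambda;n\rangle$ settles both this and the convergence of the reindexed sums.
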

\begin{proof} This is a direct computation for the case of the Chevalley elements (\ref{chevalley}).
By the Jacobi identity, it  also holds for all commutators of such elements, and hence for
all elements $\hat{X} \in \symp(\HH,\omega)$.
\end{proof}
\begin{remark}
Note that the automorphism $\sigma_{-1}$ in (\ref{sigma_def}) may be expressed as
\be
\sigma_{-1}(\psi_i) = [\hat{\omega}, \psi_i], \quad  \sigma_{-1}(\psi^\dag_i) = [\hat{\omega}^\dag, \psi^\dag_i].
\ee
\end{remark}
\begin{definition}
For all  $j, n \in \Nb$, define the subspaces
\be
\FF^{(j)}_n :=( \hat{\omega})^j \left( \ker \left( \hat{\omega}^\dag \vert_{\FF_{n-2j}}\right) \right)\ss \FF_n.
\label{eq:FnJDefinition}
\ee
\end{definition}

As in the finite dimensional case, we have a direct sum decomposition:
\begin{proposition}(cf.~\cite{DJKM3, JM1, JM2})
\label{prop:FF_decomp}
The fermionic Fock space decomposes into a direct sum of $\mathfrak{sp}(\HH,\omega)$ submodules :
\be
\FF =\bigoplus_{n\in\Zb} \bigoplus_{j\in\Nb^+} \FF^{(j)}_n
\label{eq:FnjSpModuleDecomposition}
\ee
For any pair of integers $n\in\Zb$ and $j\in\Nb$ satisfying $n\leq j$,  $\mathcal F_n^{(j)}$ is irreducible.
\label{prop:FmjIrreducible}
\end{proposition}

In what follows, we only consider the submodules $\FF^{(j)}_0$ that lie within the zero fermionic charge sector $\FF_0$.
As in the finite dimensional case, these are all highest weight modules. To see this, define, for all  $j \in \Nb^+$,
the element
\be
|v(j)\rangle := (\hat{\omega})^j   | \hspace{-2 pt} -\hspace{-2 pt} 2j \rangle,
\ee
i.e., the image of the vacuum element in the fermionic charge  sector  $\FF_{-2j}$
under the $j$th power of the symplectic ``raising'' map $\hat{\omega}$.
Since, as is easily verified, the  charged vacuum vector $|\hspace{-2 pt}- \hspace{-2 pt}2j \rangle \in \FF_{-2j}$
is in the kernel of $\hat{\omega}^\dag \vert_{\FF_{-2j}}$
\be
\hat{\omega}^\dag |\hspace{-2 pt}-\hspace{-2 pt}2j\rangle =0,
\ee
it follows that $|v(j)\rangle \in \FF^{(j)}_0$.
Lemma \ref{omega_commute} then implies that the $\symp(\HH, \omega)$ action on  $|v(j)\rangle $
is the same as its action on the vacuum vector $|\hspace{-2 pt}-\hspace{-2 pt} 2j\rangle \in \FF_{-2j}$ in each sector.
We have
\be
|\hspace{-2 pt}-\hspace{-2 pt}2j \rangle = e_{\hspace{-2 pt}-\hspace{-2 pt}2j-1} \swedge e_{-2j-2} \swedge \cdots,
\ee
so the action of the raising operators $\{\hat{E}_m\}_{m\in \Nb}$,  and  the Cartan elements $\{\hat{H}_{m\in \Nb}\}$
 on the vacuum vector $|\hspace{-2 pt}- \hspace{-2 pt} 2j\rangle$ and hence also on $|v(j)\rangle$ are easily computed.
\begin{lemma}
\label{chevalley_highest_weight}
\be
\hat{E}_m |v(j)\rangle = 0,  \quad  \hat{H}_m |v(j)\rangle = \delta_{j,m} |v(j)\rangle, \quad  \forall \ m \in \Nb.
\ee
\end{lemma}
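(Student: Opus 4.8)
The plan is to use Lemma \ref{omega_commute} to transport the whole computation out of the charge-$0$ sector back to the charged vacuum $|-2j\rangle\in\FF_{-2j}$, where the action of the Chevalley generators is transparent. Since each $\hat E_m$ and each $\hat H_m$ belongs to $\symp(\HH,\omega)$, Lemma \ref{omega_commute} gives $[\hat\omega,\hat E_m]=[\hat\omega,\hat H_m]=0$, so that $\hat E_m\hat\omega^j=\hat\omega^j\hat E_m$ and $\hat H_m\hat\omega^j=\hat\omega^j\hat H_m$. Because $|v(j)\rangle=\hat\omega^j|-2j\rangle$ by definition, this yields $\hat E_m|v(j)\rangle=\hat\omega^j(\hat E_m|-2j\rangle)$ and $\hat H_m|v(j)\rangle=\hat\omega^j(\hat H_m|-2j\rangle)$. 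It therefore suffices to show that the charged vacuum is a highest weight vector, i.e. $\hat E_m|-2j\rangle=0$ and $\hat H_m|-2j\rangle=\delta_{j,m}|-2j\rangle$; applying $\hat\omega^j$ then produces the two stated identities, since $\hat\omega^j$ carries a scalar multiple of $|-2j\rangle$ to the same scalar multiple of $|v(j)\rangle$.

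For the raising relations I would use that $|-2j\rangle=e_{-2j-1}\swedge e_{-2j-2}\swedge\cdots$ has occupied sites exactly $\{i\in\Zb:i\le -2j-1\}$, a filled Fermi sea below level $-2j-1$. Inspecting the list (\ref{chevalley}), every $\hat E_m$ is a sum of quadratic monomials $\psi_p\psi^\dag_q$ in which $\psi^\dag_q$ empties the site $q$ and $\psi_p$ fills a strictly lower site $p<q$ (namely $(p,q)=(-1,0)$ for $\hat E_0$, and $(m-1,m)$, $(-m-1,-m)$ for $\hat E_m$ with $m\ge1$). Acting on $|-2j\rangle$, each such monomial vanishes: if $q\ge 0$ then $q$ is empty and $\psi^\dag_q$ kills the state; if instead $q$ is one of the occupied negative sites, then the creation site $p<q$ lies even lower in the Fermi sea and is already occupied, so $\psi_p$ kills the state. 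Hence $\hat E_m|-2j\rangle=0$ for all $m\in\Nb$, which is precisely the highest weight condition.

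For the Cartan relations, each $\hat H_m$ acts on any fixed-occupancy wedge as a scalar, namely the signed sum of the occupation numbers of the sites appearing in the corresponding line of (\ref{chevalley}) --- the pair $\{-1,0\}$ for $m=0$, and the quadruple $\{m-1,\,m,\,-m-1,\,-m\}$ for $m\ge1$. On $|-2j\rangle$ (with $j\ge1$) the nonnegative sites are empty and contribute nothing, so for $m=0$ the eigenvalue is $0=\delta_{j,0}$, while for $m\ge1$ only the adjacent negative pair $\{-m-1,-m\}$ can contribute, through the difference of the occupancy indicators of $-m-1$ and of $-m$. This alternating difference telescopes and is nonzero precisely when the pair straddles the edge of the Fermi sea of $|-2j\rangle$, producing a single Kronecker delta; tracking the conventions to identify that critical value of $m$ gives $\hat H_m|-2j\rangle=\delta_{j,m}|-2j\rangle$.

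The computation of $\hat E_m|-2j\rangle$ and of the scalar by which $\hat H_m$ acts is elementary once the commutation with $\hat\omega^j$ is secured, so the real engine of the argument is Lemma \ref{omega_commute}. The one delicate point, which I regard as the main obstacle, is the index bookkeeping in the Cartan step: one must reconcile the charge shift built into $|-2j\rangle$, the labeling of the Chevalley generators in (\ref{chevalley}), and the normal-ordering convention for the diagonal bilinears, in order to be certain that the surviving occupancy jump sits at $m=j$ rather than at a shifted index. Everything else follows directly from Lemma \ref{omega_commute} together with the creation/annihilation rules on the semi-infinite wedge.
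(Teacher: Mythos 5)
Your overall strategy coincides exactly with the paper's proof: the published argument likewise verifies the relations directly on the charged vacuum $|-2j\rangle$ and then transports them to $|v(j)\rangle = \hat{\omega}^j|-2j\rangle$ using the equivariance supplied by Lemma \ref{omega_commute}. Your treatment of the raising operators is complete and correct: in each monomial $\psi_p\psi^\dag_q$ occurring in $\hat{E}_m$, either the annihilation site $q$ is empty in the Fermi sea $\{i \le -2j-1\}$, or the creation site $p<q$ lies deeper in the sea and is already occupied, so $\hat{E}_m|-2j\rangle = 0$; the equivariance step is also secure (and normal ordering is immaterial here, since for $m\ge 1$ the vacuum constants of the two diagonal pairs in $\hat{H}_m$ cancel).

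The genuine gap sits exactly at the point you flag as ``the main obstacle'' and then defer. Carrying out the Cartan computation with the paper's conventions: on $|-2j\rangle$ the occupied sites are $i\le -2j-1$, so the number operators give $N_{m-1}=N_m=0$, while $N_{-m-1}=1$ iff $m\ge 2j$ and $N_{-m}=1$ iff $m\ge 2j+1$. Hence
\be
\hat{H}_m|-2j\rangle = \left(N_{m-1}-N_m+N_{-m-1}-N_{-m}\right)|-2j\rangle = \delta_{m,2j}\,|-2j\rangle,
\ee
with $\hat{H}_0|-2j\rangle = 0$ for $j\ge 1$: the pair $\{-m-1,-m\}$ straddles the Fermi edge when $m=2j$, not $m=j$. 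So the conclusion you assert, $\hat{H}_m|-2j\rangle = \delta_{j,m}|-2j\rangle$, is not what the computation produces. The paper's own surrounding text corroborates $m=2j$: the Remark immediately after the lemma states that $\hat{F}_m|v(j)\rangle = 0$ for $m\neq 2j$, and the submodules are spanned by $\hat{F}_\alpha|v(j)\rangle$ with strict partitions ending in $\alpha_r=2j$ --- both consistent only with the weight supported at $m=2j$ (the lemma's $\delta_{j,m}$ thus appears to be a misprint for $\delta_{2j,m}$). A blind proof must either derive the stated Kronecker delta or detect this discrepancy; by writing ``tracking the conventions\dots gives $\delta_{j,m}$'' without performing the check, yours does neither, and the one step you postpone is precisely where the argument fails as written.
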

\begin{proof}
The corresponding relations  on $|\hspace{-2 pt}-\hspace{-2 pt} 2j\rangle$,  are verified directly from the definition of the operators $\{\psi_i, \psi^\dag_i\}$.
They therefore also hold on $|v(j)\rangle$ by the equivariance of the maps $(\hat{\omega}^\dag)^j$  and $(\hat{\omega})^j$ implied by
Lemma  \ref{omega_commute}.
\end{proof}
Since the $|v(j)\rangle$'s are  annihilated by the raising operators $\{\hat{E}_m\}_{m\in \Nb}$ and are eigenvectors of the Cartan elements
  $\{\hat{H}_m\}$,  with  weights given by the eigenvalues $\{\delta_{m,j}\}$,
   they are highest weight vectors in the various submodules $\FF^{(j)}_0$.
The submodules $\FF^{(j)}_0$ can therefore be viewed as the linear span of
the elements $ \{\hat{F}_{\alpha}|v(j)\rangle\}$, where $\alpha = (\alpha_1 > \cdots > \alpha_r =2j)$
is any strict partition ending with $\alpha_r =2j$, for  all $r\in \Nb$ and
\be
\hat{F}_\alpha := \hat{F}_{\alpha_1}  \cdots  \hat{F}_{\alpha_r}.
\ee

\begin{remark}
The lowering operators $\{\hat{F}_m\}$ for $m\neq 2j$ also annihilate the highest weight vectors $| v(j)\rangle$:
\be
\hat{F}_m |v(j)\rangle = 0 \quad \text{if} \ m\neq 2j.
\ee
The elements $\hat{F}_\alpha$ may be viewed as spanning the universal enveloping
algebra $U(\NN_-)$  of the subalgebra $\NN_- \ss \symp(\HH, \omega)$ generated
by the Chevalley elements $\{\hat{F}_m\}_{m\in \Nb}$.
The submodule $\FF^{(j)}_0$ may be viewed as a quotient of the Verma module corresponding to this universal
enveloping algebra, with highest weight the same as $|v(j)\rangle$,
 for which we choose a basis $\{\FF_\alpha\}$ labelled by the strict partitions corresponding to the elements
 $\{\hat{F}_\alpha\}$, and quotient by the span of all those elements $\FF_\alpha$ for which $| v(j)\rangle$
 is in the kernel of $\hat{F}_\alpha$.
\end{remark}

It follows, as in the finite dimensional case (Prop.~\ref{plucker_lagrangian_linear_image}),
that the image of  the Lagrangian Grassmannian $\Gr^\LL_{\HH_+}(\HH, \omega) \ss \Gr_{\HH_+}(\HH)$
under the Pl\"ucker map (\ref{inf_dim_plucker_map}) is contained within the kernel $\FF^{(0)}_0$ of $\hat{\omega}$
(or $\hat{\omega}^\dag$) acting on $\FF_0$.
For $w^0 \in \Gr^\LL_{\HH_+}(\HH, \omega) $, let
\be
[|w^0\rangle] := \Pl_{\HH_+}(w^0)
\ee
denote its image under the Pl\"ucker map. Then 
\be
\hat{\omega}  | w^0\rangle =0,  \quad \hat{\omega}^\dag  | w^0\rangle =0, \quad\forall \ w^0\in \Gr^\LL_{\HH_+}(\HH, \omega) ,
\label{null_lagrangian_cond}
\ee
and that these kernels are equal to the entire submodule $\FF^{(0)}_0$
\be
\ker(\hat{\omega}) |_{\FF_0} = \ker(\hat{\omega}^\dag) |_{\FF_0}  = \FF^{(0)}_0.
\ee
\begin{proposition}
\label{image_lagrangian_inf}
The images $\{[|w^0\rangle] \}$ of the elements $w^0  \in \Gr^\LL({\HH_+}(\HH, \omega)$ of  the Lagrangian Grassmannian
under the Pl\"ucker map span the $\symp(\HH, \omega)$-submodule $\FF^{(0)}_0$.
\end{proposition}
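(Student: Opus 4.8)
The plan is to reduce the statement to its already-established finite-dimensional counterpart, Proposition~\ref{plucker_lagrangian_linear_image}, by exploiting the exhaustion introduced in the proof of Proposition~\ref{prop:FmjIrreducible}. Write $S\ss\FF^{(0)}_0$ for the linear span of the Pl\"ucker images $\{[\,\ket{w^0}\,]\}$ of all $w^0\in\Gr^\LL_{\HH_+}(\HH,\omega)$. The inclusion $S\ss\FF^{(0)}_0$ is exactly the content of (\ref{null_lagrangian_cond}), so only the reverse inclusion needs proof. Recall from the proof of Proposition~\ref{prop:FmjIrreducible} the nested filtration $M_N=\Lambda(\HH_N)\swedge\ket{-N}$, for which $\FF^{(0)}_0=\bigcup_N(\FF^{(0)}_0)_N$ with $(\FF^{(0)}_0)_N=P^N_N\swedge\ket{-N}\cong P^N_N$. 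It therefore suffices to prove that $(\FF^{(0)}_0)_N\ss S$ for every $N$.

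The key step is to lift each finite-dimensional Lagrangian plane to an infinite-dimensional one without disturbing its Pl\"ucker data. Given $w^0\in\Gr^\LL_V(\HH_N,\omega_N)$, set
\be
\tilde{w}^0:=w^0\oplus\span\{e_{-N-1},e_{-N-2},\dots\}.
\ee
Since $\omega(e_i,e_j)=(-1)^i\delta_{i,-j-1}$ vanishes whenever $e_i\in\HH_N$ and $j\le-N-1$, the tail is $\omega$-orthogonal to $\HH_N$ and is itself isotropic, so $\tilde{w}^0$ is isotropic; as it coincides with $\HH_+$ on the tail and differs from it only in the finite-dimensional block $\HH_N$, it is commensurable with $\HH_+$, and since $\ket{w^0}\in\Lambda^N(\HH_N)$ consists of charge-zero states, $\tilde{w}^0$ lies in the charge-zero component, i.e. $\tilde{w}^0\in\Gr^\LL_{\HH_+}(\HH,\omega)$. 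Its Pl\"ucker image is, up to projectivization, $\ket{w^0}\swedge\ket{-N}$. Now Proposition~\ref{plucker_lagrangian_linear_image} asserts that the images $\{\ket{w^0}\}$ of the finite Lagrangian Grassmannian span $P^N_N$; wedging with $\ket{-N}$, the images $\{\ket{\tilde{w}^0}\}$ span $P^N_N\swedge\ket{-N}=(\FF^{(0)}_0)_N$. Hence $(\FF^{(0)}_0)_N\ss S$ for all $N$, and passing to the union yields $\FF^{(0)}_0\ss S$, whence $S=\FF^{(0)}_0$.

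An alternative, shorter route proceeds via irreducibility: $S$ contains $[\ket{0}]=\Pl_{\HH_+}(\HH_+)\neq0$, and by $\Sp(\HH,\omega)$-equivariance of the Pl\"ucker map together with the fact that $\Sp(\HH,\omega)$ preserves the class of Lagrangian subspaces, $S$ is $\Sp(\HH,\omega)$-invariant and hence an $\symp(\HH,\omega)$-submodule; since $\FF^{(0)}_0$ is irreducible (Proposition~\ref{prop:FmjIrreducible} with $n=j=0$), this forces $S=\FF^{(0)}_0$. I expect the main obstacle to lie not in the representation theory but in the bookkeeping of the lifting step (or, for the alternative route, in justifying that the span is genuinely Lie-algebra invariant rather than merely group invariant): one must verify with care that $\tilde{w}^0$ is a bona fide point of the infinite Lagrangian Grassmannian---isotropy, Fredholm commensurability of index zero, and the wedge factorization of its Pl\"ucker image---after which the problem collapses onto the already-proved Proposition~\ref{plucker_lagrangian_linear_image}.
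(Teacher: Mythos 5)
Your second, ``alternative'' route is precisely the paper's proof: the span $S$ of the Pl\"ucker images is a nontrivial $\symp(\HH,\omega)$-submodule of $\FF^{(0)}_0$, which is irreducible by Proposition \ref{prop:FmjIrreducible} (with $n=j=0$), forcing $S=\FF^{(0)}_0$. Your main route is correct but genuinely different, and the comparison is instructive. The paper's two-line argument buys brevity, but it rests on the irreducibility theorem and quietly assumes that group invariance of the span upgrades to $\symp(\HH,\omega)$-invariance --- exactly the point you flag. Your filtration argument sidesteps both issues: it needs no irreducibility at all and proves the stronger filtered statement $(\FF^{(0)}_0)_N=P^N_N\swedge|{-}N\rangle\ss S$ for every $N$, using only the finite-dimensional spanning assertion of Proposition \ref{plucker_lagrangian_linear_image}; moreover the filtration $M_N=\Lambda(\HH_N)\swedge|{-}N\rangle$, the identification $(\FF^{(0)}_0)_N=P^N_N\swedge|{-}N\rangle$, and the exhaustion $\FF^{(0)}_0=\bigcup_N(\FF^{(0)}_0)_N$ are the very ingredients the paper itself deploys in proving Proposition \ref{prop:FmjIrreducible}, so nothing extraneous is imported. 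The price is the lifting bookkeeping, and your checks go through: $\omega(e_i,e_j)\neq 0$ forces $i=-j-1$, so the tail $\span\{e_{-N-1},e_{-N-2},\dots\}$ is isotropic and $\omega$-orthogonal to $\HH_N$, the projection $\pi_+:\tilde w^0\ra\HH_+$ is the identity on the tail and differs by finite rank otherwise (Fredholm of index $0$), and an admissible basis $\{w_1,\dots,w_N,e_{-N-1},e_{-N-2},\dots\}$ exhibits the Pl\"ucker image as $|w^0\rangle\swedge|{-}N\rangle$. One small gap worth closing explicitly: membership in $\Gr^\LL_{\HH_+}(\HH,\omega)$ requires \emph{maximal} isotropy, not mere isotropy; this is quickly supplied, since if $\omega(v,\tilde w^0)=0$ then orthogonality to the tail annihilates every component of $v$ along $e_i$ with $i\geq N$, so modulo the tail one may take $v\in\HH_N$, and maximality of $w^0$ in $(\HH_N,\omega_N)$ then gives $v\in\tilde w^0$. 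With that sentence added, your main argument is a complete, more constructive proof than the paper's, and it delivers the spanning statement level by level in the filtration rather than abstractly.
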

\begin{proof}
By construction, the image of the Lagrangian Grassmannian must span a nontrivial $\mathfrak{sp}(\mathcal H,\omega)$ submodule of $\FF_0^{(0)}$. By Proposition \ref{prop:FmjIrreducible} we know that $\FF_0^{(0)}$ is irreducible, and hence must coincide with the span of the image of the Lagrangian Grassmannian.
\end{proof}


\subsection{The CKP reduction condition}
\label{CKP_reduction}

Combining these results, it follows that a KP $\tau$-function admitting a Schur function expansion
\be
\tau^{KP}(\tb) = \sum_{\lambda} \pi_\lambda s_\lambda(\tb)
\ee
 is of CKP type if  and only if its fermionic counterpart $\sum_\lambda \pi_\lambda | \lambda \rangle$
 is in the submodule $\FF^{(0)}_0 \ss \FF_0$; i.e. if, in addition to the Pl\"ucker relations,
 the linear constraint
 \begin{subequations}
 \be
 \hat{\omega} (\sum_\lambda \pi_\lambda |\lambda\rangle) = 0
 \label{null_fermi_omega}
 \ee
 or, equivalently,
 \be
 \hat{\omega}^\dag (\sum_\lambda \pi_\lambda |\lambda\rangle) = 0
 \label{null_fermi_omega_dual}
 \ee
  \end{subequations}
is satisfied.
This may be expressed equivalently as a set of linear relations for the Pl\"ucker coefficients.

Another way to express the fact that a KP $\tau$-function
\be
\tau^{KP}_{w^0}, \quad w^0 = h(\HH_+)
\ee
 is the bosonization of an element in $\FF^{(0)}_0$ is to note that in the fermionic VEV  representation
\be
\tau^{KP}(\tb) = \langle 0 | \hat{\gamma}_+(\tb) \hat{h}| 0\rangle,
\ee
the group element $h$ belongs to $\Sp(\HH, \omega)$,  so that
\be
\sigma_{-1}(\hat{h}) = \hat{h}.
\label{h_sigma_1_inv}
\ee
From its definition (\ref{sigma_def}), $\sigma_{-1}$ acts on the shift flow current component generators as
\be
\sigma_{-1}(J_j) = (-1)^{j+1} J_j,
\ee
and therefore
\be
\sigma_{-1}\left(\hat{\gamma}_+(\tb)\right)  =\hat{\gamma}_+ (\tilde{\tb}),
\ee
where
\be
\tilde{\tb} := (t_1, - t_2, t_3, -t_4, \dots ).
\ee
Since the (right) ideal of the fermionic Clifford algebra $\Cl(\HH\oplus\HH^*), Q)$ generated by the
annihilators $\{\psi_{-i}, \psi^\dag_{i-1}\}_{i\in \Nb}$ of the vacuum $|0 \rangle$
is invariant under $\sigma_{-1}$, the action of $\sigma_{-1}$ passes to the quotient by
this ideal, and hence projects to the Fock space, so that
\be
\langle 0 | \sigma_{-1}(\OO) |0  \rangle = \langle  0| \OO|0\rangle
\ee
for any element $\OO \in \Cl(\HH+\HH^*,Q)$ of the Clifford algebra. Therefore
\be
\langle 0 | \sigma_{-1}\left(\hat{\gamma}_+(\tb) \hat{h}\right) | 0 \rangle =  \langle 0 |\left(\hat{\gamma}_+(\tilde{\tb}) \hat{h}\right) | 0 \rangle ,
\ee
and hence
\be
\tau^{KP}_{w^0}(\tb) = \tau^{KP}_{w^0}(\tilde{\tb}), \quad \forall \ \tb=(t_1, t_2, \dots).
\label{tau_KP_CKP_symm}
\ee
In particular, this implies the conditions
\be
\frac{\partial \tau^{KP}_{w^0}(\tb)}{\partial t_{2j} }\bigg|_{\tb_e = {\bf 0}}=0, \quad  \forall \ j \in \Nb^+ .
\label{KZ_2j_linear_cond}
\ee

As explained in the introduction, the square of any CKP $\tau$-function $\tau^{CKP}_{w^0}({\bf t}_o)$
can be expressed as the restriction to  $\tb':= (t_1, 0, t_3, 0, \cdots )$ of a KP $\tau$-function $\tau^{KP}_{w^0}(\tb)$,
\be
(\tau^{CKP}_{w^0}({\bf t}_o))^2 = \tau^{KP}_{w^0}(\tb'), \quad
\label{KZ_CKP_square_ KP}
\ee
satisfying the auxiliary criticality conditions \cite{KZ}.
 It follows that we have a Schur function expansion
\be
\tau^{KP}_{w^0}(\tb') = \sum_{\lambda}\pi_\lambda(w^0)s_\lambda(\tb'),
\label{square_schur_expansion}
\ee
in which the  Pl\"ucker coordinates $\{\pi_\lambda(w^0)\}$  are subject to the linear constraints
(\ref{KZ_2j_linear_cond}}).

To find these explicitly, we first recall the Murnaghan-Nakayama rule \cite{St}, which gives the product of  any
Schur function $s_\lambda$  with the power sum symmetric functions $p_r = r t_r$, $r\in \Nb^+$.
To express this concisely, let $\Xi(f)$ be the space of formal  linear combinations,
with complex coefficients, of symbols $f_\lambda$ indexed by
elements of the Young lattice of integer  partitions $\lambda$. Define
\be
M_r: \Xi(f) \ra \Xi(f)
\ee
to be the linear map generated by
\be
M_r(f_ \lambda ):= \sum_\mu (-1)^{h(\mu/\lambda)+1} f_\mu,
\label{murnaghan_map}
\ee
where the sum is over partitions $\mu$ of weight $|\lambda|+r$ obtained by augmenting
the Young diagram for $\lambda$ by adding $r$ squares, such that the skew partition $\mu/\lambda$
is a continuous border strip (i.e. of width = $1$ and  height $h(\mu/\lambda)$, and let
$M^*_r: \Xi(f) \ra \Xi(f)$ be the dual map generated by
\be
M^*_r(f_ \lambda ):= \sum_\mu (-1)^{h(\lambda/\mu) +1} f_\mu,
\label{dual_murnaghan_map}
\ee
where the sum is over partitions $\mu$ of weight $|\lambda | -r$ obtained by reducing
the Young diagram for $\lambda$ by removing $r$ squares, such that the skew partition $\lambda/\mu$
is a continuous border strip of height $h(\mu/\lambda)$.

Viewing the Schur functions $s_\lambda(\tb)$  as  weighted homogeneous polynomials
in the normalized power sums
\be
t_i := \frac{1}{i}p_i, \quad i \in \Nb,
\ee
 the Murnaghan-Nakayama rule may be expressed as
\be
p_r s_\lambda = rt_r s_\lambda =   M_r(s_\lambda), \quad r \in \Nb^+
\label{murnaghan_rule}
\ee
and the dual Murnaghan-Nakayama rule as:
\be
\frac{\partial s_\lambda}{\partial t_r} = M^*_r(s_\lambda).
\label{dual_murnaghan_rule}
\ee

 Identifying the linear space $\Xi(f)$ with $\FF_0$ such that
 \be
 f_\lambda \sim |\lambda\rangle,
 \ee
 it follows from the bosonization map that eqs.~(\ref{murnaghan_rule}), (\ref{dual_murnaghan_rule}) may
equivalently be expressed fermionically as
\be
J_{-r} | \lambda\rangle =M_r (|\lambda\rangle) ,  \quad  J_r | \lambda\rangle = M^*_r (|\lambda\rangle).
\label{fermionic_murnaghan2}
\ee
Therefore, If
\be
[|w^0 \rangle]  = \Pl_{\HH_+} (w^0)= [\sum_\lambda  \pi_\lambda(w^0)|\lambda \rangle]
\label{lambda_basis_expansion_w_0}
\ee
 is the image under the Pl\"ucker map of an element  $w^0 \in \Gr^\LL_{\HH_+} (\HH, \omega)$, we have
 \be
 J_{-r} | w^0\rangle =M_r (|w^0\rangle)  \quad  J_r | w^0\rangle = M^*_r (|w^0\rangle).
\label{fermionic_murnaghan2}
\ee

Also note that if all the even flow variables  are set equal to $0$,
\be
\tb_e:=(t_2, t_4, \dots) = (0,0 \dots),\quad \tb = \tb':=(t_1, 0, t_3, 0, \dots),
\ee
the value of the Schur function $s_\lambda(\tb')$ equals that for the transposed partition
\be
s_\lambda(\tb') = s_{\lambda^T}(\tb').
\label{schur_CKP_sym}
\ee
Defining the orthogonal projector
 \be
\Pi_S |\lambda \rangle   \mapsto \tfrac{1}{2}\left( |\lambda\rangle + |\lambda^T\rangle\right),
\label{transpose_sym_proj}
 \ee
 (extended linearly), whose image is the  subspace consisting of elements
that  are invariant under  the transpose  involution $|\lambda\rangle \ra |\lambda^T\rangle$,
the fermionic expression of the linear constraint (\ref{KZ_2j_linear_cond}) is therefore
\be
J_{2j} \circ \Pi_S |w^0\rangle  = 0, \quad \forall\ j\in \Nb^+.
\label{fermionic_KZ_2J_linear_cond}
\ee

 Dualizing, eq. (\ref{fermionic_KZ_2J_linear_cond}) can equivalently be written in terms of the Pl\"ucker coefficients in
 the expansion (\ref{lambda_basis_expansion_w_0}).
 \begin{proposition}
 \label{prop:null_vanishing_conds}
The reduction conditions  (\ref{KZ_2j_linear_cond}) are equivalent to the following
set of  linear relations satisfied by the Pl\"ucker coefficients in the expansion (\ref{square_schur_expansion}):
\be
M_{2j}(\pi_{\lambda}(w^0)) + M_{2j}(\pi_{\lambda^T}(w^0))=0, \quad \forall \ j \in \Nb^+.
\label{plucker_KZ_2j_cond}
\ee
\end{proposition}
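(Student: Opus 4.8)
The plan is to obtain (\ref{plucker_KZ_2j_cond}) by simply reading the already-derived fermionic reformulation (\ref{fermionic_KZ_2J_linear_cond}) of the reduction conditions off in the basis $\{|\lambda\rangle\}$, using the fermionic Murnaghan--Nakayama rule (\ref{fermionic_murnaghan2}). Since the text has identified (\ref{KZ_2j_linear_cond}) with the fermionic constraint $J_{2j}\circ\Pi_S|w^0\rangle=0$, $j\in\Nb^+$, and since passing between a vector in $\FF_0$ and its coefficients in the orthonormal basis $\{|\lambda\rangle\}$ is a bijective operation, the equivalence asserted in the proposition will follow at once from a coefficientwise rewriting; the only real work is to check that this rewriting produces exactly the combination $M_{2j}(\pi_\lambda)+M_{2j}(\pi_{\lambda^T})$ appearing in (\ref{plucker_KZ_2j_cond}).

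Concretely, I would substitute the expansion (\ref{lambda_basis_expansion_w_0}), $|w^0\rangle=\sum_\lambda\pi_\lambda(w^0)|\lambda\rangle$, into the symmetrizing projector (\ref{transpose_sym_proj}); after reindexing the transposed sum by $\lambda\leftrightarrow\lambda^T$ this gives $\Pi_S|w^0\rangle=\tfrac12\sum_\lambda\bigl(\pi_\lambda(w^0)+\pi_{\lambda^T}(w^0)\bigr)|\lambda\rangle$. Applying $J_{2j}$ and invoking (\ref{fermionic_murnaghan2}), $J_{2j}|\lambda\rangle=M^*_{2j}(|\lambda\rangle)$, where $M^*_{2j}$ is the box-removing operator (\ref{dual_murnaghan_map}), I would then pair with each bra $\langle\mu|$. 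The coefficient of $|\mu\rangle$ is the finite alternating sum, over all partitions $\lambda$ obtained by attaching to $\mu$ a connected border strip of size $2j$, of $\bigl(\pi_\lambda+\pi_{\lambda^T}\bigr)$ weighted by $(-1)^{h(\lambda/\mu)+1}$. Demanding that every such coefficient vanish, for all $\mu$ and all $j$, is the family of linear relations to be shown equal to (\ref{plucker_KZ_2j_cond}).

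The step I expect to be the main obstacle is precisely the convention/sign matching that turns this $M^*_{2j}$-sum into the box-adding operator $M_{2j}$ of (\ref{murnaghan_map}) with the stated plus sign. The key identity is the transpose behaviour of border strips: if $\lambda/\mu$ is a connected border strip of size $r$ and height $h(\lambda/\mu)$, then $\lambda^T/\mu^T$ is again a border strip of size $r$ with $h(\lambda^T/\mu^T)=r-1-h(\lambda/\mu)$, so for the even size $r=2j$ the sign $(-1)^{h+1}$ is reversed under transposition. Combining this with the reindexing $\lambda\mapsto\lambda^T$ interchanges $M_{2j}$ and $M^*_{2j}$ up to the overall sign needed to rewrite the relations through $M_{2j}$ acting on both $\pi_\lambda$ and $\pi_{\lambda^T}$; this bookkeeping is routine but must be carried out carefully to land on (\ref{plucker_KZ_2j_cond}) rather than a sign-flipped variant. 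For completeness I would note that the underlying ``only if'' content—that vanishing of $\partial\tau^{KP}_{w^0}/\partial t_{2j}$ at $\tb=\tb'$ forces the symmetrized coefficient of every $s_\mu$ to vanish—rests on the fact, already built into (\ref{fermionic_KZ_2J_linear_cond}) via (\ref{schur_CKP_sym}), that the kernel of the restriction map $F(\tb)\mapsto F(\tb')$ on symmetric functions is exactly the $(-1)$-eigenspace of the transpose involution, i.e.\ the span of $\{s_\mu-s_{\mu^T}\}=\ker\Pi_S$; granting this, the bosonic computation via the dual rule (\ref{dual_murnaghan_rule}) applied to (\ref{square_schur_expansion}) reproduces the same relations and confirms the equivalence in both directions.
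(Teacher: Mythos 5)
Your overall route is the one the paper itself takes (its entire ``proof'' is the assertion that (\ref{plucker_KZ_2j_cond}) is the dualization of (\ref{fermionic_KZ_2J_linear_cond}) in the orthonormal basis $\{|\lambda\rangle\}$), but both of the places where you defer the work contain genuine problems. The sign issue you flag as ``routine bookkeeping'' is not resolvable within your setup: let $T:|\lambda\rangle\mapsto|\lambda^T\rangle$, so that $\Pi_S=\tfrac12(1+T)$. The transposition identity you correctly quote makes $h(\lambda/\mu)+h(\lambda^T/\mu^T)$ odd for a border strip of even size $2j$, which says precisely that $J_{2j}$ \emph{anticommutes} with $T$; hence $J_{2j}\circ\Pi_S=\tfrac12(1-T)\circ J_{2j}$. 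Writing $c_\mu:=\sum_{\lambda\supset\mu}(-1)^{h(\lambda/\mu)+1}\pi_\lambda(w^0)$, the sum running over border-strip additions of size $2j$ (this is the quantity denoted $M_{2j}(\pi_\mu(w^0))$ in (\ref{plucker_KZ_2j_cond})), the coefficient of $|\mu\rangle$ in your expansion of $J_{2j}\Pi_S|w^0\rangle$ is therefore $\tfrac12\bigl(c_\mu-c_{\mu^T}\bigr)$: taking (\ref{fermionic_KZ_2J_linear_cond}) at face value yields $c_\mu=c_{\mu^T}$, exactly the sign-flipped variant you feared, and not (\ref{plucker_KZ_2j_cond}), which reads $c_\mu+c_{\mu^T}=0$. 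What actually encodes ``differentiate, then evaluate at $\tb'$'' in (\ref{KZ_2j_linear_cond}) is the opposite composition, $\Pi_S\circ J_{2j}|w^0\rangle=0$, whose $|\mu\rangle$-coefficient is $\tfrac12\bigl(c_\mu+c_{\mu^T}\bigr)$ and which does dualize to (\ref{plucker_KZ_2j_cond}); since the two orderings differ exactly by the anticommutation, you must correct the operator order rather than quote (\ref{fermionic_KZ_2J_linear_cond}) verbatim.

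The second gap is an outright false statement: the kernel of the restriction map $F(\tb)\mapsto F(\tb')$ is the ideal generated by $p_2,p_4,\dots$, which is strictly larger than $\span\{s_\mu-s_{\mu^T}\}$ (the $(-1)$-eigenspace of an algebra involution is not an ideal). Concretely, $p_2^2=s_{(4)}-s_{(3,1)}+2s_{(2,2)}-s_{(2,1,1)}+s_{(1^4)}$ is invariant under transposition yet restricts to zero, producing the extra linear relation $s_{(4)}(\tb')-s_{(3,1)}(\tb')+s_{(2,2)}(\tb')=0$; a dimension count confirms the phenomenon (three transpose-classes of partitions of $4$, while the degree-$4$ part of $\Cb[p_1,p_3]$ is only $2$-dimensional). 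So the restricted Schur functions, one per transpose class, are \emph{not} linearly independent, and your justification of the ``only if'' direction---the only step with real analytic content---collapses. For the Lagrangian elements at hand the needed implication is in fact saved by a different mechanism: the symmetry $\pi_\lambda=\pi_{\lambda^T}$ following from (\ref{tau_KP_CKP_symm}) forces $c_{\mu^T}=-c_\mu$ by the same parity identity, so that both (\ref{KZ_2j_linear_cond}) and (\ref{plucker_KZ_2j_cond}) hold automatically in that setting; but that observation, not your kernel claim, is what must carry the argument, and as written your proposal does not establish the stated equivalence.
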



\subsection{Lagrange map and hyperdeterminantal relations}
\label{infinite_Lagrange_hyperdeterminantal}


\subsubsection{Lagrange map}
\label{lagrange_map_inf}

As in the finite dimensional setting, we define the subspace $\FF^S_{0} \ss \FF^{(0)}_0 \ss \FF_0$
as  the span of the basis elements corresponding to symmetric partitions
\be
\FF^S_0 := \span\{| \lambda \rangle\} \ss \FF^{(0)}_0, \quad \lambda = \lambda^T.
\label{FS_0_def}
\ee
Equivalently, we may identify  $\FF^{S}_0$ with the semi-infinite wedge product space
\be
\FF^S: = \Lambda^{\infty/2} (\HH_+)
\label{FS_def}
\ee
spanned by basis vectors
\be
e_{-J^c} := (-1)^{r+ \sum_{i=1}^r J_i }\prod_{i=1}^r\psi^\dag_{-J_i}|0\rangle = e_{\ell_1} \swedge e_{\ell_2} \swedge \cdots,
\label{FS_FS_0_isom}
\ee
in which $J \ss\Nb^+$ is a subset $\{J_1,  \dots, J_r\}$ of  the positive integers of cardinality $r$,
 ordered increasingly, so the sequence of indices  $(l_1> l_2 \cdots)$ are decreasing
negative integers that eventually saturate to a sequence of consecutive negative integers.
The basis elements of $\FF^S_0$ are those,  in the fermionic sectors $\{ \FF_{-r}\}$,
that correspond to symmetric partitions
 \be
e_{-J^c}  \leftrightarrow  |\lambda; -r\rangle,
\label{isom_FF_S_FF_S0}
\ee
where, in Frobenius notation
\be
\lambda = (J_r-1, \cdots , J_1-1 | J_r-1, \cdots , J_1-1 ).
\ee
The Lagrange map
\be
\LL: \Gr^\LL_{\HH_+}(\HH, \omega)  \ra \Pb(\FF^S)
\ee
 is then defined, as in the finite dimensional case (\ref{lagrange_map_finite_basis}), by
\be
\LL(w^0) :=\big[ \sum_J  \LL_J (w^0) e_{-J^c}\big] ,
\label{lagrange_map_inf_def}
\ee
where
\be
\LL_J(w^0) := \pi_{(J_r-1, \cdots , J_1-1 | J_r-1, \cdots , J_1-1 )}.
\label{lagrange_sym_plucker}
\ee


 \subsubsection{Hyperdeterminantal relations in infinite dimensions}
 \label{inf_hyperdet_rels}

 We again extend our definition of the Lagrange coefficients $\LL_J$ to allow the multi-index  $J=(J_1, \dots, J_r)$
 to appear in arbitrary order without changing the value of $\LL_J$.
 Choose  a triplet $(j_1, j_2, j_3)$ of distinct positive integers, and an $r$-tuplet $J$ of positive integers
 that does not contain any of these. As in Section \ref{lagrange_princip_minors},  for $a, b = 1,2,3, \ a\neq b $,
 we mean by $(J, j_a)$, $(J, j_a, j_b)$ and $(J, j_1, j_2, j_3)$  the distinct  $r+1$, $r+2$ and $r+3$-tuples consisting of the indicated
 sets of indices. As in the finite dimensional case (Proposition \ref{hyperdet_finite}), we have

 \begin{proposition}
 \label{hyperdet_inf}
 The coefficients $\LL_J$ in (\ref{lagrange_map_inf_def}) satisfy the hyperdeterminantal relations
  \bea
 &\& \LL_J^2\LL^2_{J, j_1,j_2,j_3} +\LL^2_{J, j_1} \LL^2_{J, j_2, j_3}  +\LL^2_{J, j_2} \LL^2_{J, j_1, j_3} +\LL^2_{J, j_3} \LL^2_{J, j_1, j_2}  \cr
&\&-2\LL_J \LL_{J,j_1}\LL_{J, j_2, j_3}\LL_{J, j_1. j_2, j_3}  - 2\LL_J  \LL_{J,j_2}\LL_{J, j_1, j_3}\LL_{J, j_1. j_2, j_3}  -2\LL_J  \LL_{J,j_3}\LL_{J, j_1, j_2}\LL_{J, j_1. j_2, j_3} \cr
 &\& -2\LL_{J,j_1}\LL_{J, j_2}\LL_{J, j_1,j_3}\LL_{J,j_2, j_3}  -2\LL_{J, j_1}\LL_{J, j_3}\LL_{J, j_1,j_2}\LL_{J,j_2, j_3}
 - 2\LL_{J,j_2}\LL_{J, j_3}\LL_{J, j_1,j_2}\LL_{J, j_1, j_3} \cr
 &\&   - 4 \LL_{J}\LL_{J,j_1, j_2}\LL_{J, j_1, j_3} \LL_{J, j_2, j_3}  - 4 \LL_{J, j_1}\LL_{J, j_2} \LL_{J, j_3} \LL_{J, j_1, j_2, j_3} =0.
 \cr
 &\&
 \label{hyper_det_rels_inf}
 \eea
These determine the image of the Lagrange map (\ref{lagrange_map_inf_def}) on a Zariski
open set of $\Pb(\FF^S_0)$. The inverse image $\LL^{-1}(\LL(w^0))$ of any element $\LL(w^0)$
in the variety cut out by these relations is the orbit of $w^0$ under the group
$(\Zb_2)^\infty = \{ {\boldsymbol \epsilon} :=\{ \epsilon_i=\pm 1\}_{i \in \Zb} \ss \Sp(\HH)\}$,
 acting by reflections:
 \be
 {\boldsymbol \epsilon} : (e_{-i-}, e_i) \mapsto (\epsilon_ie_{-i-1}, \epsilon_i e_i)
 \ee
  within the coordinate planes $\{e_{-i-1}, e_i\}_{i \in \Nb}$.
 \end{proposition}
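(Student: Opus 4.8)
The plan is to reduce the infinite-dimensional statement to its finite-dimensional counterpart, Proposition \ref{hyperdet_finite}, by exploiting the fact that, for a fixed choice of $J=(J_1,\dots,J_r)$ and distinct $j_1,j_2,j_3\in\Nb^+$, the relation (\ref{hyper_det_rels_inf}) involves only the eight Lagrange coefficients $\LL_{J\cup S}$ with $S\subseteq\{j_1,j_2,j_3\}$, each of which is a symmetric-partition Pl\"ucker coordinate $\pi_\lambda(w^0)$ whose Young diagram fits in the square $(N)^N$ for $N:=\max\bigl(J\cup\{j_1,j_2,j_3\}\bigr)$. Thus the whole identity is supported on a finite set of \emph{bounded} Pl\"ucker coordinates, and it suffices to show that these coordinates arise from a genuine finite-dimensional Lagrangian plane in $\HH_N$.

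First I would use the nested exhaustion $M_N=\Lambda(\HH_N)\wedge|{-}N\rangle$ together with the isomorphism $(\FF^{(0)}_0)_N=\FF^{(0)}_0\cap M_N\simeq P^N_N$ established in the proof of Proposition \ref{prop:FmjIrreducible}. Projecting the Pl\"ucker image $[|w^0\rangle]=\bigl[\sum_\lambda\pi_\lambda(w^0)|\lambda\rangle\bigr]$ onto $M_N\cap\FF_0\simeq\Lambda^N(\HH_N)$ produces the truncated vector $\phi_N:=\sum_{\lambda\subseteq(N)^N}\pi_\lambda(w^0)|\lambda\rangle$. Two points must then be checked: that $\phi_N$ is decomposable, and that $\phi_N\in P^N_N$. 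For the first, the finite Pl\"ucker relations for $\Gr_N(\HH_N)$ involve only coordinates indexed by partitions in $(N)^N$ and follow by restricting the index sets in the infinite-dimensional Pl\"ucker relations already satisfied by the decomposable element $|w^0\rangle$; hence $\phi_N$ is the Pl\"ucker image of an $N$-plane $w^0_N\subseteq\HH_N$. For the second, the null/Lagrangian condition $\hat\omega|w^0\rangle=0$ of (\ref{null_lagrangian_cond}) descends to $\hat\omega_N\phi_N=0$: each summand $-(-1)^i\psi_{-i-1}\psi_i$ of $\hat\omega$ with $i\ge N$ annihilates every $|\lambda;0\rangle$ with $\lambda\subseteq(N)^N$, since $e_{-i-1}$ already appears in the frozen semi-infinite tail, so only the finite operator $\hat\omega_N$ acts nontrivially on the bounded sector. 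By Proposition \ref{plucker_lagrangian_linear_image}, a decomposable element of $P^N_N$ is the Pl\"ucker image of a Lagrangian plane, so $w^0_N\in\Gr^\LL_V(\HH_N,\omega_N)$.

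Applying Proposition \ref{hyperdet_finite} to $w^0_N$ then shows that its symmetric-partition Lagrange coefficients, which coincide with the $\LL_{J\cup S}$ appearing in (\ref{hyper_det_rels_inf}), satisfy the core hyperdeterminantal relations (\ref{hyper_det_rels}). Since these are identical to (\ref{hyper_det_rels_inf}) and involve only bounded coordinates, and are therefore independent of the chosen $N$ (provided $N$ is large enough), the first assertion follows; on the big cell $\LL_\emptyset\neq0$ guarantees $\phi_N\neq0$, and the relation extends to the whole image by Zariski density. That these relations cut out the image on a Zariski-open set follows from the finite-dimensional reduction of Section \ref{hyperdet_Gr_L_3_6} applied after each truncation. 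For the fibre statement I would mirror Section \ref{invert_lagrange_map}: the short Pl\"ucker relations, i.e. the infinite analogues of (\ref{short1})--(\ref{short3*}), determine the non-symmetric coordinates from the symmetric ones up to independent signs in each coordinate plane $\{e_{-i-1},e_i\}$, and these sign ambiguities are exactly the reflections generating $(\Zb_2)^\infty$; the converse, that equality of all symmetric coordinates forces agreement up to this group, follows by an exhaustion argument from the finite case \cite{VGM}, each truncation agreeing up to a compatible $(\Zb_2)^N$.

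The point requiring the most care, and the likely main obstacle, is the compatibility of the truncation with \emph{both} the decomposability and the isotropy structure: precisely, verifying that the infinite null condition $\hat\omega|w^0\rangle=0$ restricts exactly to the finite condition $\hat\omega_N\phi_N=0$ on the bounded sector (the tail-annihilation computation above) and that no unbounded Pl\"ucker coordinates contaminate the bounded Pl\"ucker relations. This is where the identification $(\FF^{(0)}_0)_N\simeq P^N_N$ and the ``only-adds-boxes'' behaviour of $\hat\omega$ are essential. A secondary subtlety lies in the fibre statement, where one must ensure the infinite product $(\Zb_2)^\infty$ acts admissibly on $\Gr^\LL_{\HH_+}(\HH,\omega)$, so that only finitely many reflections differ from the identity while preserving commensurability with $\HH_+$, allowing the finite-dimensional orbit descriptions to patch together.
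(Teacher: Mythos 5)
Your proposal is essentially correct for the main assertion and shares the paper's overall skeleton --- reduce to the finite-dimensional Proposition \ref{hyperdet_finite} via truncation to $\Lambda^N(\HH_N)$; indeed your $\phi_N=\sum_{\lambda\ss(N)^N}\pi_\lambda(w^0)|\lambda\rangle$ is exactly the paper's element $\phi^N_N$ of (\ref{phi_N_N_def}) --- but the mechanism by which you certify the truncation is genuinely different. The paper proceeds analytically: it introduces virtual decomposability, shows that the annihilator $Ann(|\phi\rangle)$ has $N$-dimensional truncations $Ann(|\phi\rangle)^N_N$ annihilating $\phi^N_N$ (Fredholm index and convergence arguments), and then stabilizes the contraction/projection operators $p^3_{(B,I)_\infty}\,i_{f_{(A,I)_\infty}}$ of the $(3,6)$-reduction to obtain Proposition \ref{prop:hypedet_inf_red}. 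You instead note that each relation (\ref{hyper_det_rels_inf}) involves only coordinates bounded by $N=\max(J\cup\{j_1,j_2,j_3\})$, that the finite Pl\"ucker relations among bounded coordinates are literally the sub-family of infinite Pl\"ucker relations whose two index sequences both contain the frozen tail $\{-N-1,-N-2,\dots\}$, and that $\hat\omega|w^0\rangle=0$ projects onto the bounded sector as $\hat\omega_N\phi_N=0$ (summands $\psi_{-i-1}\psi_i$ with $i\ge N$ kill bounded states, while filling a hole below $-N$ in an unbounded state necessarily creates a particle at position $\ge N$, so no contamination occurs --- your flagged ``point requiring care'' does indeed close). This algebraic route is more elementary and completely bypasses the annihilator/limit apparatus for the ``relations hold'' direction; what the paper's heavier machinery buys is the biconditional of Proposition \ref{prop:hypedet_inf_red}, which underwrites the claim that the relations cut out the image on a Zariski open set, whereas your treatment of that converse and of the fibre statement remains a sketch (exhaustion plus \cite{VGM}, with sign choices to be made compatibly across truncations) --- though the paper's own discussion of these points is comparably brief. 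Two minor repairs: Proposition \ref{plucker_lagrangian_linear_image} as stated only asserts that Lagrangian images lie in $P^N_N$; the converse you need --- that a nonzero decomposable $\phi_N\in P^N_N$ spans an isotropic plane --- is standard (contracting $\omega_N$ into $w_1\wedge\cdots\wedge w_N$ gives $\sum_{i<j}\pm\,\omega_N(w_i,w_j)\,w_1\wedge\cdots\widehat{w_i}\cdots\widehat{w_j}\cdots\wedge w_N$ with linearly independent $(N-2)$-vectors) but deserves its own line rather than that citation; and your genericity caveat $\LL_\emptyset\neq0$ is unnecessary for the first assertion, since $\phi_N=0$ renders (\ref{hyper_det_rels_inf}) trivially true.
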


To prove these relations in the infinite dimensional setting recall that, in finite dimensions, the result was obtained
by first contracting and then  projecting down in a family of different ways to $\Lambda^3(\Cb^6)$,
showing that  a generic element satisfied the Pl\"ucker conditions and the linear isotropy conditions if and only if these
reductions to $(3,6)$ dimensions also satisfied the Pl\"ucker conditions and linear isotropy conditions.
 From this, it was possible to manipulate the quadratic and linear constraints, eliminating all but the symmetric
 partition Pl\"ucker coordinates, to obtain the quartic hyperdeterminantal relations for the various cases of $(3,6)$
 dimensions.  Tracing back, the remaining variables are exactly the coefficients of the image of the
 Lagrange map, and the relations are those given  in eq.~(\ref{hyper_det_rels_inf}).

In infinite dimensions we proceed essentially the same way, but first reduce to a nested sequence of elements
$\phi_N^N \in \Lambda^N(\HH_N)$ of the $N$-th exterior  power of  a nested sequence of finite dimensional symplectic subspaces $\HH_N$,
showing that if both the Pl\"ucker relations  and the isotropy condition are satisfied for $N>N_0$  for some $N_0$
 depending on the subspace, the result in infinite dimensions follows by taking the direct limit.

Thus, consider the  image of the Lagrangian Grassmannian $\Gr^\LL_{\HH_+}(\HH, \omega) \ss \Gr_{\HH_+} (\HH)$
under the Pl\"ucker map (\ref{plucker_map_def}), intersected with the kernel $\FF^{(0)}_0$ of $\hat{\omega}$
(or $\hat{\omega}^\dag$) acting on $\FF_0$,  where it consists of the set of decomposable elements in  $\FF^{(0)}_0$.
The finite dimensional criteria for decomposability can be extended to this context. The  Grassmannian $\Gr_{\HH_+} (\HH)$ is an orbit  space
 under the general linear group $\Gl(\HH)$ restricted (as in \cite{SW}), such that the orthogonal projection maps from elements
 of $\Gr_{\HH_+} (\HH)$ to ${\HH_+} $ are Fredholm, and of index zero. The corresponding Fock subspace $\FF_0\ss \FF$ is
 such that a generic decomposable element projects to a non zero multiple of the vacuum, which is the decomposable element
 corresponding to  ${\HH_+} $. The same then holds over the whole Fock space.

 Turning to criteria of decomposability, in finite dimensions decomposable elements
 $\phi \in\Lambda^k(\Cb^N)$ are those whose annihilators under the exterior product
\be
Ann(\phi) := \{\alpha\in  \Cb^N\ | \ \alpha \wedge \phi  = 0\}
\ee
 have maximal dimension $k$. In theinfinite dimensional setting, we have virtual dimensions for subspaces,
 given by the Fredholm index of  the  projections onto ${\HH_+} $ along $\HH_-$,  so we can define an
 element $|\phi\rangle \in \FF^{(0)}$ to be {\it virtually decomposable}  if its annihilator
 \be
Ann(|\phi\rangle := \{ v \in \HH \ | \  \hat{\Gamma}_v |\phi \rangle = 0\}
\ee
 is in $\Gr^0_{\HH_+} (\HH)$.

 We now truncate to finite dimensions. For $N\in \Nb^+$, let $\HH_N\ss \HH$  denote the $2N$ dimensional subspace
 spanned by $\{e_{-N }, e_{-N+1}, ..., e_{N-1}\}$. We then have the decomposition
 \be
 \HH = \HH_{N+} \oplus \HH_N \oplus \HH_{N-} \ee
 where $\HH_{N+}$ is the span of $\{e_{-N-i} \}_{i\in \Nb^+}$, and $\HH_{N-}$ is the span of $\{e_{N + i }\}_{i\in \Nb}$.
 Thus $\HH_N$  of eq.~(\ref{HH_N_def}) may be identified with $\HH_N$. Let
  \be
  \pi_N:  \HH \rightarrow \HH_{N+} \oplus \HH_N
  \ee
   denote projection to $\HH_{N+} \oplus \HH_N$  along $\HH_{N-} $, and
  \be
  \hat{\pi}_N: \FF_0 \ra \FF_{(0,N)}
  \label{proj_FF0_FF0N}
  \ee
 the corresponding projection from   $\FF_0$ to the subspace $\FF_{(0,N)}\ss \FF_0$
    spanned by those basis elements that have no factors in $\{e_{N + i }\}_{i\in \Nb}$.
Decomposable elements of this space are the Pl\"ucker image of subspaces
of codimension $N$  in $\HH_{N+} \oplus \HH_N$.

Consider the vacuum element
  \be
 | \hspace{-2 pt} -\hspace{-2 pt} N\rangle  = e_{-N-1 } \swedge e_{-N-2} \swedge \cdots
  \ee
  in the fermionic charge sector $\FF_{-N}$.
  In analogy with finite dimensions, the inner product map
  \be
   i_{| \hspace{-2 pt} -\hspace{-2 pt} N\rangle}: \FF_{(0,N)} \ra \Lambda^N(\HH_N)
   \ee
   is defined on basis elements by
   \be
 i_{| \hspace{-2 pt} -\hspace{-2 pt} N\rangle}( e_{l_1} \swedge e_{l_2} \swedge \cdots  ) =
\begin{cases}  e_{l_1} \swedge e_{l_2} \swedge \cdots  \swedge e_{l_N}  \ \text{if } l_j= -j , \ \forall \  j> N \cr
    0 \quad \text{otherwise}.
\end{cases}
  \ee
  Now define
   \be
  \phi_N^N := i_{| \hspace{-2 pt} -\hspace{-2 pt} N\rangle} (\hat{\pi}_N(|\phi \rangle) \in \Lambda^N(\HH_N) \ss \Lambda(\HH_N),
    \label{phi_N_N_def}
    \ee
    and
    \be
     |\phi_N\rangle :=  \hat{\Gamma}_{\phi^N_N }  | \hspace{-2 pt} -\hspace{-2 pt} N\rangle \in \FF_0,
  \ee
  where the Grassmann algebra $\Lambda(\HH_N)$ is identified with the finite dimensional subalgebra of
  the fermionic Clifford algebra $\Cl(\HH\oplus\HH^*, Q)$ generated by $\{\psi_{-N}, \cdots, \psi_{N-1}\}$.
  We then have
  \begin{lemma}
  The elements $|\phi_N\rangle$ converge to $|\phi \rangle$ as $N \ra \infty$.
  \end{lemma}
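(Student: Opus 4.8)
The plan is to identify the composite map $|\phi\rangle \mapsto |\phi_N\rangle$ with the orthogonal projection of $|\phi\rangle$ onto the span of those basis vectors $|\lambda;0\rangle$ whose Young diagram fits inside the $N\times N$ square $(N)^N$; once this is established the convergence becomes the standard statement that the partial sums of an orthonormal-basis expansion converge in norm. First I would expand $|\phi\rangle=\sum_\lambda \pi_\lambda|\lambda;0\rangle$ in the orthonormal basis of $\FF_0$. Since $|\phi\rangle$ is a genuine element of the Hilbert space $\FF_0$ (the semi-infinite wedge $w_1\wedge w_2\wedge\cdots$ of an admissible basis converges because it differs from the standard one by a Hilbert--Schmidt operator), the coefficients satisfy $\sum_\lambda|\pi_\lambda|^2<\infty$. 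I would then trace the three operations $\hat\pi_N$, $i_{|-N\rangle}$ and $\hat\Gamma_{(\cdot)}$ through this expansion, using the particle-location description $l_j=\lambda_j-j$ of $|\lambda;0\rangle$.

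The bookkeeping runs as follows. The basis vector $|\lambda;0\rangle$ contains a factor $e_m$ with $m\geq N$ precisely when its largest particle location $l_1=\lambda_1-1$ satisfies $l_1\geq N$, i.e. $\lambda_1>N$; hence $\hat\pi_N$ retains exactly the terms with $\lambda_1\leq N$. Next, by its defining formula, $i_{|-N\rangle}$ annihilates a basis vector unless $l_j=-j$ for all $j>N$, i.e. unless $\ell(\lambda)\leq N$, in which case it strips off the semi-infinite vacuum tail $e_{-N-1}\wedge e_{-N-2}\wedge\cdots$ and leaves the degree-$N$ monomial $e_{l_1}\wedge\cdots\wedge e_{l_N}\in\Lambda^N(\HH_N)$. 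Combining the two conditions yields $\phi_N^N=\sum_{\lambda\subseteq(N)^N}\pi_\lambda\, e_{l_1}\wedge\cdots\wedge e_{l_N}$. Finally, since $\phi_N^N$ is built only from creation operators, $\hat\Gamma_{\phi_N^N}|-N\rangle=\phi_N^N\wedge|-N\rangle$ simply reattaches the vacuum tail, recovering $|\lambda;0\rangle$ for each surviving $\lambda$ (the concatenated sequence $(l_1,\dots,l_N,-N-1,-N-2,\dots)$ is exactly the strictly decreasing sequence associated to $\lambda$, with no sign change). Therefore
\be
|\phi_N\rangle=\sum_{\lambda\subseteq(N)^N}\pi_\lambda\,|\lambda;0\rangle.
\ee

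Thus $|\phi_N\rangle$ is precisely the partial sum of the expansion of $|\phi\rangle$ over the partitions fitting in the $N\times N$ box. Since the squares $(N)^N$ form an increasing family of Young diagrams exhausting the entire Young lattice as $N\ra\infty$, and $\{|\lambda;0\rangle\}$ is orthonormal with $\sum_\lambda|\pi_\lambda|^2<\infty$, the partial sums over this exhausting family converge in the Fock-space norm to $|\phi\rangle$, which is the claim. The argument is essentially combinatorial and I do not expect a genuine obstacle; the only points requiring care are the precise index conditions $\lambda_1\leq N$ and $\ell(\lambda)\leq N$ that together cut out $(N)^N$ under the two maps, and the verification that the relevant topology on $\FF_0$ is the Hilbert-space norm, in which orthonormal-basis partial sums over an exhausting family converge — this being guaranteed by $|\phi\rangle$ lying in the Hilbert space $\FF_0$.
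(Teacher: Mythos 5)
Your proof is correct, and in fact the paper states this lemma without proof; your computation---showing that the composite of $\hat{\pi}_N$, $i_{|{-}N\rangle}$ and $\hat{\Gamma}_{(\cdot)}|{-}N\rangle$ acts on the standard basis as the orthogonal projection onto $\span\{|\lambda;0\rangle \,:\, \lambda\subseteq (N)^N\}$, so that $|\phi_N\rangle$ is exactly the partial sum of the Schur-basis expansion over partitions in the $N\times N$ box and converges in Fock-space norm by square-summability of the coefficients together with the exhaustion of the Young lattice by the boxes $(N)^N$---is evidently the intended argument that the definitions (\ref{proj_FF0_FF0N}) and (\ref{phi_N_N_def}) are set up to make work. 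Your index bookkeeping is exactly right: $\hat{\pi}_N$ cuts out $\lambda_1\leq N$ (via $l_1=\lambda_1-1\leq N-1$), $i_{|{-}N\rangle}$ cuts out $\ell(\lambda)\leq N$ (via $l_j=-j$ for $j>N$), and reattaching the vacuum tail introduces no sign since the concatenated particle sequence is already strictly decreasing.
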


  Let $|\phi\rangle$ be  virtually decomposable. The annihilator  $Ann(|\phi\rangle)$ of $ |\phi\rangle$ is in $\Gr^0_{\HH_+}(\HH)$,
 so it has a Fredholm projection onto $\HH_+$, and a small (compact, or Hilbert Schmidt \cite{SW}) projection to $\HH_-$.
  Since its virtual dimension is zero, this means that its intersection
\be
Ann(|\phi\rangle)\cap (\HH_{N-} \oplus \HH_N)
\ee
 has dimension $N$, for large $N$, and the  projection
\be
Ann(|\phi\rangle)^N_N \subset \HH_N
\ee
of $Ann(|\phi\rangle)\cap (\HH_{N-} \oplus \HH_N)$ to $\HH_N$  has dimension $N$ for large $N$.

  \begin{lemma}
 Let $|\phi\rangle$ be virtually decomposable. Then, for large $N$,  $Ann(|\phi \rangle)^N_N$ is the annihilator of $\phi^N_N$,
 which  implies that $\phi^N_N$ is decomposable in  $\Lambda^N(\HH_N)$ and $|\phi_N\rangle$ is decomposable in $\FF_0$.
  \end{lemma}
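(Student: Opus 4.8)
The plan is to deduce the finite-dimensional decomposability of $\phi^N_N$ from a dimension count on annihilators; the only genuine work will be to transfer the vanishing $\hat\Gamma_v\ket{\phi}=0$ on $\HH$ down to a vanishing $u\swedge\phi^N_N=0$ on $\HH_N$. Write $w:=Ann(\ket{\phi})\in\Gr^0_{\HH_+}(\HH)$ (this membership is the hypothesis of virtual decomposability), and abbreviate the three blocks $A:=\HH_{N+}$, $B:=\HH_N$, $C:=\HH_{N-}$, so that $\HH=A\oplus B\oplus C$. As already recorded above, $w^{(N)}:=w\cap(B\oplus C)$ has dimension $N$ for large $N$, and $Ann(\ket{\phi})^N_N=\pi_N(w^{(N)})\ss B$ also has dimension $N$; since the two dimensions agree, $\pi_N|_{w^{(N)}}$ is injective, i.e.\ $w^{(N)}\cap C=0$, which is where the hypothesis ``for large $N$'' is used. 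I would set $\tau_N:=i_{\ket{-N}}\circ\hat\pi_N\colon\FF_0\ra\Lambda^N(B)$, so that $\phi^N_N=\tau_N(\ket{\phi})$ by definition; concretely $\tau_N$ selects the component of a vector whose $A$-part is the full sea $\ket{-N}=\bigwedge A$ and whose $C$-part is empty, and then strips the $A$-tail.

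The key step will be an intertwining property. Let $\tau^{(1)}_N\colon\FF_1\ra\Lambda^{N+1}(B)$ be the analogous truncation on the charge $+1$ sector (same recipe: $A$-part full, $C$-part empty, strip the tail). I claim that for any $v=u+v_-$ with $u\in B$, $v_-\in C$, and any $\ket{\psi}\in\FF_0$,
\[
\tau^{(1)}_N(v\swedge\ket{\psi})=u\swedge\tau_N(\ket{\psi}).
\]
Indeed, exterior multiplication by the $C$-component $v_-$ always produces terms with a nonempty $C$-part, which $\tau^{(1)}_N$ annihilates; and multiplication by $u\in B$ alters neither the $A$-part nor the $C$-part, so the only surviving contributions come from the terms of $\ket{\psi}$ already selected by $\tau_N$, onto which $u$ is wedged. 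Applying this with $\ket{\psi}=\ket{\phi}$ and any $v\in w^{(N)}$ (for which $v\swedge\ket{\phi}=\hat\Gamma_v\ket{\phi}=0$) yields $u\swedge\phi^N_N=0$ with $u=\pi_N(v)$. Hence every element of $Ann(\ket{\phi})^N_N=\pi_N(w^{(N)})$ annihilates $\phi^N_N$, that is, $Ann(\ket{\phi})^N_N\ss Ann(\phi^N_N)$ inside $\HH_N$.

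To close the argument I would invoke the standard finite-dimensional criterion: for a nonzero $\psi\in\Lambda^N(\HH_N)$ the annihilator $\{u\in\HH_N:u\swedge\psi=0\}$ has dimension at most $N$, with equality if and only if $\psi$ is decomposable, in which case $\psi$ is a scalar multiple of the wedge of a basis of that annihilator. Since $\ket{\phi_N}=\hat\Gamma_{\phi^N_N}\ket{-N}$ converges to $\ket{\phi}\neq0$ by the preceding Lemma, we have $\phi^N_N\neq0$ for large $N$; combining $\dim Ann(\phi^N_N)\ge\dim Ann(\ket{\phi})^N_N=N$ with the bound $\dim Ann(\phi^N_N)\le N$ forces equality throughout. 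Therefore $Ann(\phi^N_N)=Ann(\ket{\phi})^N_N$, the vector $\phi^N_N$ is decomposable in $\Lambda^N(\HH_N)$, and writing $\phi^N_N=c\,u_1\swedge\cdots\swedge u_N$ gives $\ket{\phi_N}=c\,u_1\swedge\cdots\swedge u_N\swedge\ket{-N}$, the Pl\"ucker image of $\span\{u_1,\dots,u_N\}\oplus\HH_{N+}\in\Gr^0_{\HH_+}(\HH)$, so $\ket{\phi_N}$ is decomposable in $\FF_0$.

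The main obstacle is the intertwining identity together with the attendant sector bookkeeping: one must verify that $i_{\ket{-N}}\circ\hat\pi_N$ really does commute with wedging by the $\HH_N$-part while killing the $\HH_{N-}$-part, tracking the correct charge shift $\FF_0\ra\FF_1$ and the signs produced by reordering factors past the semi-infinite tail. The accompanying geometric input, namely that $\pi_N|_{w^{(N)}}$ is injective for all large $N$ (equivalently $w\cap\HH_{N-}=0$), is exactly where virtual dimension zero enters, and should be isolated as a small lemma. Granting these, the finite-dimensional decomposability criterion and the lift of decomposability from $\Lambda^N(\HH_N)$ to $\FF_0$ are routine.
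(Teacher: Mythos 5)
The paper states this lemma bare, with no accompanying proof, so there is nothing in the text to compare against line by line; your proposal supplies the missing argument, and it is correct. It is also exactly the completion that the surrounding text invites: the input you quote --- that $w^{(N)}:=Ann(|\phi\rangle)\cap(\HH_N\oplus\HH_{N-})$ and its projection $Ann(|\phi\rangle)^N_N=\pi_N(w^{(N)})\ss\HH_N$ both have dimension $N$ for large $N$, whence $w^{(N)}\cap\HH_{N-}=0$ --- is taken verbatim from the paragraph preceding the lemma (and is where the Fredholm index-zero hypothesis enters), while your two additions are precisely the missing steps. The first, the intertwining identity $\tau^{(1)}_N(v\swedge|\psi\rangle)=\pi_N(v)\swedge\tau_N(|\psi\rangle)$, checks out: the $\HH_{N-}$-component of $v$ only creates basis terms containing a factor $e_m$, $m\ge N$, all killed by the charge-one projection, while wedging by $u\in\HH_N$ neither fills a hole in the $\HH_{N+}$-tail nor creates an $\HH_{N-}$-factor, so it commutes with both $\hat\pi_N$ and the tail-stripping $i_{|-N\rangle}$. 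Note that this identity genuinely fails if $v$ has a component in $\HH_{N+}$ (wedging by $e_{-m}$, $m>N$, can complete a defective tail and produce terms surviving $i_{|-N\rangle}$), so your restriction to $v\in w^{(N)}\ss\HH_N\oplus\HH_{N-}$ is essential rather than cosmetic. The second addition, the standard fact that for $0\neq\psi\in\Lambda^N(\HH_N)$ the annihilator has dimension at most $N$ with equality exactly when $\psi$ is decomposable, then squeezes $Ann(|\phi\rangle)^N_N\subseteq Ann(\phi^N_N)$ into the asserted equality, giving the full strength of the lemma (``is the annihilator'', not merely ``annihilates''). Your nonvanishing argument is fine and can be sharpened slightly: wedging with the tail $|{-}N\rangle$ is injective on $\Lambda(\HH_N)$, so $\phi^N_N=0$ if and only if $|\phi_N\rangle=0$, and $|\phi_N\rangle\to|\phi\rangle\neq0$ (virtual decomposability forces $|\phi\rangle\neq0$, since otherwise its annihilator would be all of $\HH$). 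Modulo the analytic conventions the paper itself adopts (bounded Clifford operators, $\hat\pi_N$ an orthogonal projection, admissible bases), the proof is complete.
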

The  annihilator $Ann(|\phi_N\rangle)$ of $|\phi_N\rangle$ is obtained  by adding to $Ann(\phi^N_N)$  the vectors $ e_{- N-1}, e_{ -N-2 }, e_{-N-3} ,  \dots$,
to obtain an infinite dimensional space of virtual dimension $0$.
\begin{lemma} 
The annihilators $Ann(|\phi_N\rangle)$ converge to $Ann(|\phi\rangle)$.
 \end{lemma}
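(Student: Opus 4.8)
The statement asserts convergence of a sequence of points in the index-zero Grassmannian $\Gr^0_{\HH_+}(\HH)$, so the first step is to fix the topology in which this is meant. The plan is to use the natural Banach-manifold topology of $\Gr^0_{\HH_+}(\HH)$, realized on the big cell as the Hilbert--Schmidt topology on graph coordinates: a subspace transverse to $\HH_-$ is the graph $\{\zeta + A\zeta : \zeta\in\HH_+\}$ of a Hilbert--Schmidt operator $A\colon\HH_+\to\HH_-$, and two such subspaces are close when their graph operators are close in $\|\cdot\|_{\mathrm{HS}}$ (see \cite{SW, HB}). Since both the annihilator construction $|\phi\rangle\mapsto Ann(|\phi\rangle)$ and this topology are equivariant under the restricted group $\Gl(\HH)$ of \cite{SW}, I may assume, after acting by a fixed group element, that $W:=Ann(|\phi\rangle)$ lies in the big cell, so that it is the graph of a Hilbert--Schmidt operator $A$; this is precisely the analytic content of $W\in\Gr^0_{\HH_+}(\HH)$, i.e. of the commensurability of $W$ with $\HH_+$.

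The next step is to write $Ann(|\phi_N\rangle)$ explicitly as a graph and compare it with $W$. By the preceding lemmas, $\phi^N_N$ is decomposable with $Ann(\phi^N_N)=Ann(|\phi\rangle)^N_N$, the projection to $\HH_N$ along $\HH_{N-}$ of $W\cap(\HH_N\oplus\HH_{N-})$, and $Ann(|\phi_N\rangle)=Ann(\phi^N_N)\oplus\HH_{N+}$. Splitting $\HH_+=\HH_{N+}\oplus\span\{e_{-N},\dots,e_{-1}\}$ and $\HH_-=\span\{e_0,\dots,e_{N-1}\}\oplus\HH_{N-}$, a vector of $W$ over $\zeta=\eta+\xi\in\HH_+$ (with $\eta\in\HH_{N+}$, $\xi\in\span\{e_{-N},\dots,e_{-1}\}$) is $\eta+\xi+A\eta+A\xi$, whereas the corresponding vector of $Ann(|\phi_N\rangle)$ is $\eta+\xi+(A\xi)^N_-$, where $(\cdot)^N_-$ denotes the component in $\span\{e_0,\dots,e_{N-1}\}$. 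Hence $Ann(|\phi_N\rangle)$ is itself the graph of the finite-rank truncation
\[
A_N := \Pi_{[0,N-1]}\,A\,\Pi_{[-N,-1]}\colon\HH_+\to\HH_-,
\]
where $\Pi_{[0,N-1]}$ and $\Pi_{[-N,-1]}$ are the orthogonal projections onto $\span\{e_0,\dots,e_{N-1}\}$ and $\span\{e_{-N},\dots,e_{-1}\}$ respectively.

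The conclusion then reduces to the elementary fact that the finite truncations $A_N$ of a Hilbert--Schmidt operator $A$ converge to $A$ in $\|\cdot\|_{\mathrm{HS}}$. Writing $A-A_N=\Pi_{[0,N-1]}A(I-\Pi_{[-N,-1]}) + (I-\Pi_{[0,N-1]})A$, the two error terms discarded in passing from $W$ to $Ann(|\phi_N\rangle)$---the deep-tail values $A\eta$ with $\eta\in\HH_{N+}$, and the far components of $A\xi$ lying in $\HH_{N-}$---are controlled by $A(I-\Pi_{[-N,-1]})$ and $(I-\Pi_{[0,N-1]})A$ respectively, whose Hilbert--Schmidt norms tend to $0$ because the matrix of $A$ is square-summable. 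Since the graph chart is a homeomorphism onto the big cell, $A_N\to A$ in $\|\cdot\|_{\mathrm{HS}}$ yields $Ann(|\phi_N\rangle)\to Ann(|\phi\rangle)$, as claimed.

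I expect the main obstacle to be not the estimate but the bookkeeping that makes the reduction to the big cell legitimate and that identifies the projection-along-$\HH_{N-}$ appearing in $Ann(|\phi\rangle)^N_N$ with the honest truncation $A_N$; this is where the commensurability of $W$ with $\HH_+$ (the Hilbert--Schmidt smallness of the off-diagonal block) must be used uniformly in $N$ to guarantee that the $\HH_{N-}$-components neglected in the projection are asymptotically negligible. As a consistency check, the same conclusion follows more softly from the already established convergence $|\phi_N\rangle\to|\phi\rangle$ together with the fact that the Pl\"ucker embedding of $\Gr^0_{\HH_+}(\HH)$ into $\Pb(\FF)$ is a topological embedding, so that its inverse is continuous on the locus of virtually decomposable elements.
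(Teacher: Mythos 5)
Your proof is correct in substance and is in fact more complete than the paper's own treatment: the paper states this lemma without proof, the convergence being implicit in the passage that follows, where admissible bases $w_{N,j} = e_{-I_j} + \sum_{i\notin I} a_{N,i}\, e_{-i}$ for the $Ann(|\phi_N\rangle)$ are written in a coordinate chart $w_I$ adapted to $|\phi\rangle$ and the coefficients $a_{N,i}$ are observed to converge individually as $N\to\infty$. Your graph-coordinate computation is the quantitative version of this. The identification of $Ann(|\phi_N\rangle)$ with the graph of the truncation $A_N = \Pi_{[0,N-1]}A\,\Pi_{[-N,-1]}$ is exactly right: by the preceding lemmas, $Ann(\phi^N_N)$ is the projection along $\HH_{N-}$ of $\mathrm{graph}(A)\cap(\HH_N\oplus\HH_{N-})$, which is $\{\xi + \Pi_{[0,N-1]}A\xi\}$, and adjoining $\HH_{N+}$ adds precisely the graph vectors over $\eta\in\HH_{N+}$, where $A_N\eta=0$. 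The tail estimate $\|A-A_N\|_{\mathrm{HS}}\to 0$ then uses the Hilbert--Schmidt smallness of $\pi_-$, i.e.\ the honest analytic content of commensurability, and yields norm convergence of the annihilators --- strictly stronger than the componentwise convergence of the $a_{N,i}$ that the paper actually needs to pass to the limit decomposition (\ref{decomp_phi_infty}).

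One step, however, is not legitimate as stated: the reduction to the big cell by $\Gl(\HH)$-equivariance. The annihilator map and the topology are indeed equivariant, but the truncation $|\phi\rangle\mapsto|\phi_N\rangle$ is not: it is built from $\hat{\pi}_N$ and $i_{|-N\rangle}$, which are tied to the fixed basis, and in general $\hat{\pi}_N(\hat{g}|\phi\rangle)\neq \hat{g}\,\hat{\pi}_N(|\phi\rangle)$. After moving $W=Ann(|\phi\rangle)$ into the big cell you would be proving convergence for the truncations of $\hat{g}|\phi\rangle$, which is a different sequence from the one in the lemma. The repair is cheap and is what the paper implicitly does: either invoke the standing genericity assumption of this section (a generic decomposable element projects nontrivially onto the vacuum, so $W$ is already a graph over $\HH_+$), or replace the group action by the choice of a coordinate plane $w_I$ with $I_j=j$ for $j$ large --- a permutation of finitely many basis indices, compatible with the splittings $\HH_{N+}\oplus\HH_N\oplus\HH_{N-}$ for all $N$ beyond the discrepancy --- after which your computation goes through verbatim. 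Your closing ``soft'' argument (continuity of the inverse of the Pl\"ucker embedding on the locus of virtually decomposable elements) is plausible but would itself require justification in the Segal--Wilson topology, so it is rightly offered only as a consistency check rather than as the proof.
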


Let $I$ be an infinite multi-index $I_1<I_2<\cdots$ such that for $j$ beyond a certain $N_I$, $I_j = j+\ell_0$ for a fixed $\ell_0$. 
The Fredholm property tells us that there is a coordinate plane $w_I$ corresponding to such an  $I$,  spanned by vectors $e_{-I_j}$,   
such that the projection of $|\phi\rangle$ to $ e_{-I_1} \swedge e_{-I_2} \swedge e_{-I_3} \cdots$  is nonzero. 
Going now to our decomposable $|\phi_N\rangle$,  for  $N>N_0$, the $|\phi_N\rangle$,'s
 also map non-trivially, and since  they  correspond to subspaces  $w_N$, they have bases $w_{N,1}, w_{N,2}, ....$,  
 of the form
\be
w_{N, j} = e_{-I_j} + \sum_{i\notin I} a_{N, i} e_{-i}.
\ee
These $w_{N, j}$ converge individually as $N\ra \infty$, for each $j$, and the limits $w_{\infty, j}$
 then give a decomposition 
 \be
 |\phi\rangle = w_{\infty, 1} \swedge w_{\infty, 2}  \swedge \cdots.
 \label{decomp_phi_infty}
 \ee
 \begin{proposition}
Suppose that for $N>N_0$, the elements $\phi_N^N$ are nonzero and decomposable; then $|\phi\rangle$ is also.
 If $|\phi\rangle$ is virtually decomposable, then it is decomposable.
 \end{proposition}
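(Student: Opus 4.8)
The plan is to assemble the four preceding lemmas, treating the two assertions of the proposition in turn. For the first, suppose that $\phi^N_N$ is nonzero and decomposable in $\Lambda^N(\HH_N)$ for every $N>N_0$. Then $|\phi_N\rangle = \hat{\Gamma}_{\phi^N_N}| \hspace{-2 pt} -\hspace{-2 pt} N\rangle$ is the semi-infinite wedge of a decomposable $N$-vector with the charged vacuum, hence is itself decomposable in $\FF_0$; by the lemma describing its annihilator, $Ann(|\phi_N\rangle)$ is obtained by adjoining the vectors $e_{-N-1}, e_{-N-2}, \dots$ to $Ann(\phi^N_N)$, producing a subspace $w_N \in \Gr^0_{\HH_+}(\HH)$ of virtual dimension zero. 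First I would invoke the Fredholm property of $|\phi\rangle$ to fix an infinite multi-index $I_1 < I_2 < \cdots$ that stabilizes to $I_j = j+\ell_0$ beyond some $N_I$ and for which the projection of $|\phi\rangle$ onto the coordinate direction $e_{-I_1}\swedge e_{-I_2}\swedge\cdots$ is nonzero.

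Next I would use the convergence $|\phi_N\rangle \ra |\phi\rangle$ (the first lemma of this sequence) to deduce that, for $N$ large, the $|\phi_N\rangle$ also project nontrivially onto this same coordinate direction. Normalizing the subspaces $w_N$ relative to the coordinate plane $w_I$ then yields, for each $j$, a basis vector of the form $w_{N,j} = e_{-I_j} + \sum_{i\notin I} a_{N,i}\, e_{-i}$. The key step is the individual convergence of these basis vectors: for each fixed $j$, $w_{N,j}$ converges as $N\ra\infty$ to a limit $w_{\infty,j}$, and passing to the limit in the wedge product — justified jointly by $|\phi_N\rangle\ra|\phi\rangle$ and by the convergence of annihilators $Ann(|\phi_N\rangle)\ra Ann(|\phi\rangle)$ — produces the factorization $|\phi\rangle = w_{\infty,1}\swedge w_{\infty,2}\swedge\cdots$, exhibiting $|\phi\rangle$ as decomposable.

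For the second assertion, suppose $|\phi\rangle$ is virtually decomposable, i.e.\ $Ann(|\phi\rangle)\in\Gr^0_{\HH_+}(\HH)$. By the lemma asserting that, for large $N$, the projected annihilator $Ann(|\phi\rangle)^N_N$ coincides with the annihilator of $\phi^N_N$, the truncations $\phi^N_N$ are nonzero and decomposable in $\Lambda^N(\HH_N)$ for all $N>N_0$. The hypothesis of the first assertion is thereby satisfied, so $|\phi\rangle$ is decomposable, which completes the proof.

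The hard part will be the analytic control in the limit: ensuring that the normalized basis vectors $w_{N,j}$ converge in the topology of $\Gr^0_{\HH_+}(\HH)$ (Hilbert--Schmidt, or compact, on the $\HH_-$ component) and, crucially, that the limiting subspace $w_\infty = \span\{w_{\infty,j}\}_{j\in\Nb^+}$ remains commensurable with $\HH_+$ of index zero rather than drifting out of the Grassmannian. The Fredholm property, packaged into the choice of the stabilizing multi-index $I$, is exactly what forestalls this drift: it bounds the tail coefficients $a_{N,i}$ and guarantees that only finitely many of the $w_{\infty,j}$ differ from the corresponding coordinate vectors $e_{-I_j}$. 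This is the point at which the virtual-decomposability and index-zero hypotheses do the essential work.
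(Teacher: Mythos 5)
Your proposal is correct and follows essentially the same route as the paper, whose ``proof'' is precisely the chain of preceding lemmas you assemble: the truncations $\phi^N_N$, the convergence $|\phi_N\rangle \ra |\phi\rangle$, the identification of $Ann(\phi^N_N)$ with the projected annihilator for large $N$, the convergence of annihilators, and the stabilized bases $w_{N,j} = e_{-I_j} + \sum_{i\notin I} a_{N,i}\, e_{-i}$ converging individually to give $|\phi\rangle = w_{\infty,1}\swedge w_{\infty,2}\swedge\cdots$. Your closing remarks on the analytic control of the limit flag exactly the details the paper itself defers to the Segal--Wilson framework, so nothing is missing relative to the paper's own level of rigor.
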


 An element $|\phi\rangle\in \FF^{(0)}_0$  will  be the Pl\"ucker image of an element of the Lagrangian Grassmannian if and
 only if  it is virtually decomposable, since the Lagrangian condition is guaranteed by its belonging to $\FF^{(0)}_0$. Furthermore, 
 the finite dimensional elements $\phi^N_N$ must also be isotropic with respect to the (finite-dimensional) symplectic form $\omega_N$.
 Thus,
 \begin{proposition}
The element $|\phi\rangle$ corresponds to a Lagrangian subspace if and only if the elements $\phi^N_N$ correspond to
 Lagrangian subspaces in finite dimensions for all $N>N_0$.
\end{proposition}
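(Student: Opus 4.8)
The plan is to split the assertion ``corresponds to a Lagrangian subspace'' into its two independent constituents, \emph{decomposability} and \emph{isotropy}, at both the infinite and the finite level, and to reconcile them separately. At the infinite level, $|\phi\rangle$ is the Pl\"ucker image of an element of $\Gr^\LL_{\HH_+}(\HH,\omega)$ exactly when it is virtually decomposable and lies in $\FF^{(0)}_0$, i.e.\ $\hat{\omega}^\dag|\phi\rangle=0$; at the finite level, $\phi^N_N$ is the Pl\"ucker image of a Lagrangian plane in $\HH_N$ exactly when it is decomposable and lies in $P^N_N$, i.e.\ $\hat{\omega}_N^\dag\phi^N_N=0$. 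The decomposability halves are already matched by the preceding Proposition, which gives that $|\phi\rangle$ is virtually decomposable (equivalently decomposable) if and only if every $\phi^N_N$ is nonzero and decomposable for $N>N_0$. Hence the only remaining task is the isotropy equivalence
\be
\hat{\omega}^\dag|\phi\rangle=0 \quad\Longleftrightarrow\quad \hat{\omega}_N^\dag\phi^N_N=0 \ \text{ for all } N>N_0 .
\ee

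The structural input I would record first is the observation, already used in the proof of Proposition \ref{prop:FmjIrreducible}, that on $M_N=\Lambda(\HH_N)\swedge| \hspace{-2 pt} -\hspace{-2 pt} N\rangle$ the operator $\hat{\omega}^\dag$ acts simply by applying $\hat{\omega}_N^\dag$ to the $\Lambda(\HH_N)$ factor. Since $|\phi_N\rangle = \phi^N_N\swedge| \hspace{-2 pt} -\hspace{-2 pt} N\rangle$, this yields $\hat{\omega}^\dag|\phi_N\rangle = \big(\hat{\omega}_N^\dag\phi^N_N\big)\swedge| \hspace{-2 pt} -\hspace{-2 pt} N\rangle$, so that $\phi^N_N\in P^N_N$ if and only if $|\phi_N\rangle\in\ker\hat{\omega}^\dag$. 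The cleanest way I would then finish is to compare, directly, the linear relations that the two null conditions impose on the coefficients $\{\pi_\lambda\}$. Reading off the coefficient of a fixed basis vector $|\mu;{-}2\rangle\in\FF_{-2}$ in $\hat{\omega}^\dag|\phi\rangle$ produces a relation of the form $\sum_i (-1)^i\pi_{\lambda(\mu,i)}=0$, the sum running over those $i$ for which the pair $\{e_i,e_{-i-1}\}$ can be adjoined to $|\mu;{-}2\rangle$; the finite condition $\hat{\omega}_N^\dag\phi^N_N=0$ produces exactly the same type of relation, but with $i$ confined to $\{0,\dots,N-1\}$ and $\lambda(\mu,i)\subseteq(N)^N$.

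The main obstacle, and the place demanding care, is to show that for large $N$ these two families of relations coincide term by term, with no spurious contributions surviving the truncation. This will rest on the fact that for a \emph{fixed} charge $-2$ target $|\mu;{-}2\rangle$ only finitely many indices $i$ are admissible: the companion position $-i-1$ of any large $i$ lies deep inside the saturated Dirac sea of $|\mu;{-}2\rangle$, where it is already occupied, so the pair cannot be adjoined. Thus every infinite relation is a finite sum involving only partitions $\lambda(\mu,i)$ inside a fixed box $(N)^N$ once $N$ exceeds a bound $N(\mu)$, whereupon it agrees identically with the finite relation indexed by the corresponding $(N-2)$-subset of $\{-N,\dots,N-1\}$. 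Conversely, each relation arising from $\hat{\omega}_N^\dag\phi^N_N=0$ is precisely the infinite relation for the $\mu$ obtained by appending the saturated tail to that $(N-2)$-subset, since for this $\mu$ every index $i\ge N$ would again require occupying a tail position and hence contributes nothing. As $N$ ranges over all values exceeding $N_0$ these targets exhaust $\FF_{-2}$, so the two systems of linear relations are identical, establishing the isotropy equivalence. Combining it with the decomposability equivalence of the preceding Proposition — the convergence $|\phi_N\rangle\to|\phi\rangle$ guaranteeing that the truncations faithfully reconstruct $|\phi\rangle$ in the limit, as in \eqref{decomp_phi_infty} — then yields the Proposition.
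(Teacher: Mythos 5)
Your proposal is correct and takes essentially the same route as the paper: the paper likewise splits the Lagrangian condition into decomposability (settled by the preceding lemmas and proposition on virtual decomposability) and isotropy, the latter resting on the observation, already invoked in the proof of Proposition \ref{prop:FmjIrreducible}, that $\hat{\omega}^\dag$ acts on $M_N=\Lambda(\HH_N)\wedge | \hspace{-2 pt} -\hspace{-2 pt} N\rangle$ by applying $\hat{\omega}^\dag_N$ to the $\Lambda(\HH_N)$ factor, so that $\phi^N_N\in P^N_N$ exactly when $|\phi_N\rangle\in\ker\hat{\omega}^\dag$. Your coefficient-by-coefficient matching of the two families of linear relations on the $\pi_\lambda$ (using that each relation targeting a fixed charge $-2$ state is a finite sum, stabilized once $N$ is large) simply spells out this truncation identity, which the paper leaves implicit.
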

Recall the definition (\ref{p_B_I_proj_j}) of the projection map
\be
p^j_{(B,I)}: \Lambda^j(\HH_N) \ra  \Lambda^j(\HH_N) .
\ee
By Proposition (\ref{prop:plucker-hyper}), a generic $\phi_N^N$ is decomposable and represents a Lagrangian subspace if and only if,
 for all multi-indices $(A,I)$ of cardinality $N-3$ and complementary $(B,I)$ (as defined in (\ref{marked_index})), 
 the elements  $p^3_{(B,I)}(i_{f_{(A,I)}}(\phi_N^N) )= i_{f_{(A,I)}}(p^N_{(B,I)}(\phi_N^N))$ represent a Lagrangian $3$-space in $6$ dimensions. 
 Since in this section we have  different conventions for the numbering of elements of the basis,
  we redefine, for a fixed $N$, multi-indices $(A,I)_N, (B,I)_N$,  where
  \begin{subequations}
\bea
 I &\& = \{I_1,..,I_{N-3}\} \ss \{1,2,...,N\} \text{ is a subset of cardinality }  N-3 , \\
A  &\& \  \text{ associates to each } \ I_j\in I\ \text{ an integer }  A(I_j) \ \text{which is either } \ I_j-1, \ \text{ or} \  -I_j, \cr
&\& \\
 B &\& \ \text{associates to each}  \ I_j\in I \ \text{ an integer }   B(I_j)  \text{ which is complementary to } A(I_j),\cr
  &\& \text{so that either } \ B(I_j) = I_j-1,  \text{ if}\  A(I_j)= -I_j, \text{ or }  \ B(I_j) =-I_j  \text{ if }  A(I_j)= I_j-1. \cr
 &\&
\eea
\end{subequations}
We can  define multivectors $f_{(A,I)_N}$  on the spaces $\HH_N$, as wedge products of the elements $e_{A(i_j)}$, ordered so that the $A(i_j)$ are increasing.
 This gives corresponding contractions $i_{f_{(A,I)_N}}$. As above, we can project out  all the basis elements $e_{B(i_j)}$, and obtain a projection $p_{(B,I)_N}$.
 And, as above, $\{p^3_{(B,I)}i_{f_{(A,I)}}(\phi_N^N)\}$ give us $3$-vectors in $6$-space that correspond to Lagrangian subspaces for $\phi^N_N$
 to correspond to one.

There is a natural extension $(A^+,I^+)_{N+1}, (B^+,I^+)_{N+1} $ of $(A,I)_N, (B,I)_N$ from $N$ to $N+1$, given by:
\begin{subequations}
\bea
 I^+ &\&=  \{I_1,..,I_{N-3}, N+1\} , \\
 A^+(I_j) &\&= A(I_j), j\leq N-3, \ \text{ and}\  A^+(N+1)= -N-1, \\
 B^+(I_j) &\&= B(I_j), j\leq N-3, \ \text{ and}\  B^+(N+1)= N.
\eea
\end{subequations}
We  can stabilise to infinite dimensions, and  define semi-infinite multi-indices $(A,I)_\infty, (B,I)_\infty$ as follows.
\begin{subequations}
\bea
 I &\& =\{I_1,I_2,...\} \text{ is a subset  of the positive integers, omitting only $3$ integers }, \cr
 &\& \\
 A &\& \text{  associates to each } I_j \text{ an integer } A(I_j) \text{ which is either }  A(I_j)= I_j-1 \text{ or} -I_j.\cr
 &\& \text{ For } j \text{ greater than}   \text{ some } j_0, A(I_j) = -I_j.\\
 B &\& \text{  associates to each } I_j \text{ the integer } B(I_j) \text{\ ``complementary" to } A(I_j),\cr
 &\& \text{ that is }  B(I_j)= I_j-1  \text{ if } A(I_j)= -I_j, \text{ or }B(I_j) =\ -I_j \text{ if } A(I_j)= I_j-1.\cr
 &\& \text{ For } j \text{ greater than some } j_0, \ B(I_j) =  I_j-1.
\eea
\end{subequations}
Fix $(A, I)_\infty, (B, I)_\infty$, and let  $\HH_{(A,B,I)^c}$ be the six dimensional space defined in the same way 
as the finite dimensional case;  i.e.,  spanned by the basis elements indexed by integers not in $(A,I)$ or $(B,I)$. 
Now pick an $N>j_0$, and set $(A, I_{\leq N})_\infty, (B, I_{\leq N})_\infty$ to be the multi-index formed  
by the  indices of $(A,I)_\infty, (B,I)_\infty$ with $I_j\leq N$. Since $N>j_0$, they are of cardinality $N-3$, 
and are formed by the removal of the infinite tails  $A(I_j) =-I_j$, $ B(I_j) = I_j-1$ for $I_j>N$. 
From this we have  contractions $i_{f_{(A,I)_\infty}}$ and   projections $p^3_{(B,I)_\infty}$ given by 
\bea
 i_{f_{(A,I)_\infty}} =&i_{f_{(A,I_{\leq N})_\infty}}\circ i_{| \hspace{-2 pt} -\hspace{-2 pt} N\rangle} \\
 p^3_{(B,I)_\infty} =&p^3_{(B,I_{\leq N})_\infty }\circ \hat{\pi}_N
 \eea
 The composition  
 \be
  i_{f_{(A,I)_\infty}} p^3_{(B,I)_\infty} = i_{f_{(A,I_{\leq N})_\infty}}\circ p^3_{(B,I_{\leq N})_\infty}\circ  i_{| \hspace{-2 pt} -\hspace{-2 pt} N\rangle}\circ \hat{\pi}_N
 \ee
  then maps us to $\Lambda^3(\HH_{(A,B,I)^c})$, as in the finite dimensional case, passing through $\HH_N$ as an intermediary step. 
  The result is invariant under the stabilization from $N$ to $N+1$.
  Thus, the elements in dimensions $(3,6)$ that we must test, for $\phi^N_N$ to correspond to a Lagrangian plane, 
  are obtainable directly from $|\phi\rangle$ as $ i_{f_{(A,I)_\infty}}p^3_{(B,I)_\infty}(|\phi\rangle)$, 
  and belong to $\Lambda^3(\HH_{(A,B,I)^c})$.  
\begin{proposition} 
\label{prop:hypedet_inf_red}
A generic $|\phi\rangle \in \FF_0$ corresponds to an element of the Lagrangian Grassmannian if all of its $(3,6)$ dimensional reductions
$ p^3_{(B,I)_\infty}i_{f_{(A,I)_\infty}}(|\phi\rangle)$ correspond to Lagrangian planes.
This in turn is equivalent, for generic elements,  to $ p^3_{(B,I)_\infty}i_{f_{(A,I)_\infty}}(|\phi\rangle)$ satisfying the hyperdeterminantal relations (\ref{hyper_det_rels_inf}).
\end{proposition}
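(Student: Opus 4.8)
The plan is to assemble the proposition from the finite-dimensional results of Sections~\ref{geometry_plucker_restriction} and~\ref{hyperdet_Gr_L_3_6} together with the truncation and stabilization machinery developed just above. First I would invoke the proposition already established, that $|\phi\rangle \in \FF^{(0)}_0$ corresponds to a Lagrangian subspace if and only if each truncation $\phi^N_N \in \Lambda^N(\HH_N)$ defined in~(\ref{phi_N_N_def}) corresponds to a Lagrangian subspace of $(\HH_N,\omega_N)$ for all $N > N_0$. The Fredholm, virtual-dimension-zero hypothesis on $\Gr^0_{\HH_+}(\HH)$ guarantees that a single such threshold $N_0$ exists, beyond which $Ann(|\phi\rangle)\cap(\HH_{N-}\oplus\HH_N)$ has dimension $N$, so that each $\phi^N_N$ is nonzero and decomposable; the preceding convergence lemmas ensure $|\phi_N\rangle \to |\phi\rangle$ and $Ann(|\phi_N\rangle)\to Ann(|\phi\rangle)$, and hence the decomposition~(\ref{decomp_phi_infty}) in the limit. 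The converse implication (a Lagrangian $|\phi\rangle$ yields isotropic decomposable reductions) is the easy direction, already noted in the finite-dimensional setup, so the content lies in the forward direction.

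Next I would apply the finite-dimensional criterion, Proposition~\ref{prop:plucker-hyper}, to each $\phi^N_N$: for generic $\phi^N_N$ it is decomposable and represents a Lagrangian plane precisely when, for every marked multi-index $(A,I)$ with $|I|=N-3$, the reduced element $p^3_{(B,I)}(i_{f_{(A,I)}}(\phi^N_N))$ is a null $3$-plane in $\Lambda^3(\HH_{(A,B,I)^c})$. The crucial link is the stabilization compatibility established immediately before the proposition: since $i_{f_{(A,I)_\infty}} = i_{f_{(A,I_{\leq N})_\infty}}\circ i_{|{-}N\rangle}$ and $p^3_{(B,I)_\infty} = p^3_{(B,I_{\leq N})_\infty}\circ\hat\pi_N$, the $(3,6)$ reduction computed from the truncation $\phi^N_N$ coincides with $p^3_{(B,I)_\infty}i_{f_{(A,I)_\infty}}(|\phi\rangle)$ computed directly from $|\phi\rangle$, and this outcome is invariant under the stabilization $N\to N+1$ for $N>j_0$. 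Consequently the family of finite-dimensional isotropy conditions over all $N>N_0$ collapses to the single semi-infinite family indexed by $(A,I)_\infty$, which establishes the first assertion.

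For the equivalence with the hyperdeterminantal relations I would invoke the explicit computation of Section~\ref{hyperdet_Gr_L_3_6}: a decomposable isotropic $3$-vector in $\Lambda^3(\Cb^3\oplus\Cb^{3*})$ is, on a Zariski open set, cut out by the single quartic relation~(\ref{hyperdet_3_6}) among its eight symmetric coordinates $(S_0,\dots,S_{3^*})$, obtained by eliminating the twelve non-symmetric coordinates through the short and four-term Pl\"ucker relations combined with the linear Lagrangian conditions, exactly as in Proposition~\ref{prop:N_3_hyper_red}. Tracing the identification at the close of Section~\ref{hyperdet_Gr_L_3_6}, the eight symmetric coordinates of $p^3_{(B,I)_\infty}i_{f_{(A,I)_\infty}}(|\phi\rangle)$ are precisely the Lagrange coefficients $\LL_J,\ \LL_{J,j_a},\ \LL_{J,j_a,j_b},\ \LL_{J,j_1,j_2,j_3}$, where $J$ records the indices $i\in I$ with $A(i)=i-1$ and $(j_1,j_2,j_3)$ are the three omitted indices. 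Substituting turns~(\ref{hyperdet_3_6}) into~(\ref{hyper_det_rels_inf}), giving the stated equivalence.

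The hard part will be controlling genericity uniformly in the passage to the limit. One must ensure that the Zariski-open conditions of Propositions~\ref{prop:plucker-hyper} and~\ref{prop:N_3_hyper_red}—nonvanishing of the relevant contractions and transversality of the intersections $w_I\cap w$—hold simultaneously for \emph{all} truncations $\phi^N_N$ with $N>N_0$, and that the value of $p^3_{(B,I)_\infty}i_{f_{(A,I)_\infty}}(|\phi\rangle)$ is genuinely independent of the auxiliary choice of $N>j_0$. Both are secured by the index-zero Fredholm structure, which forces the annihilator intersections to stabilize in dimension, but verifying that the genericity locus itself is preserved under the stabilization $N\to N+1$, rather than shrinking at each step, is the delicate point of the argument.
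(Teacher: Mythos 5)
Your proposal follows essentially the same route as the paper: truncation to the elements $\phi^N_N$ with the Fredholm/annihilator convergence lemmas, application of the finite-dimensional criterion of Proposition~\ref{prop:plucker-hyper} for all $N>N_0$, the stabilization identities $i_{f_{(A,I)_\infty}} = i_{f_{(A,I_{\leq N})_\infty}}\circ i_{|-N\rangle}$ and $p^3_{(B,I)_\infty} = p^3_{(B,I_{\leq N})_\infty}\circ\hat\pi_N$ identifying the truncated $(3,6)$ reductions with $p^3_{(B,I)_\infty}i_{f_{(A,I)_\infty}}(|\phi\rangle)$, and the elimination argument of Section~\ref{hyperdet_Gr_L_3_6} converting the isotropy conditions into the quartic relation (\ref{hyper_det_rels_inf}) via the identification of the eight symmetric coordinates with the Lagrange coefficients. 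Your closing caution about uniform genericity across truncations is apt but not a divergence: the paper relies on the same index-zero Fredholm structure to stabilize the generic locus, without spelling out further details.
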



 \subsubsection{Parametric families of  hyperdeterminantal relations in terms of $\tau^{KP}_{w^0}$}
 \label{param_hyperdet_tau}

 \quad \ Choose three arbitrary parameters $(x_1, x_2, x_3)$, such that $x_i+x_j\ne 0$ for any distinct pair $i, j \in \{ 1,2,3\}$,
 and define the $3\times 3$ matrix valued function  $A(\tb', x_1, x_2, x_3)$  of the parameters $(x_1, x_2, x_3)$ and the
 odd KP flow parameters $\tb'=(t_1, 0, t_3, 0, \dots)$ with matrix elements
 \be
 A_{ij} (\tb', x_1, x_2, x_3) :=\frac{\tau_{w^0}^{KP}(\tb' + [x_i] - [-x_j])}{(x_i + x_j) \tau_{w^0}^{KP}(\tb')} , \quad i,j\in \{1,2,3\},
 \ee
 where $\tau_{w^0}^{KP}(\tau)$ is a KP $\tau$-function satisfying the condition (\ref{tau_KP_CKP_symm}) assuring that
 it generates solutions to the CKP hierarchy.
 It follows  that $A(\tb', x_1, x_2, x_3)$ is a symmetric matrix
 \be
 A(\tb', x_1, x_2, x_3)= A^T(\tb', x_1, x_2, x_3).
 \ee

 Define the following evaluations of $\tau^{KP}_{w^0}(\tb' )$
 \begin{subequations}
 \bea
 \sigma_{0}(\tb', x_1, x_2, x_3)&\&:= \tau^{KP}_{w^0}(\tb' ) , \\
 \sigma_{i} (\tb', x_1, x_2, x_3)&\&:=\frac{1}{2x_i}\tau^{KP}_{w^0}(\tb' +[x_i]-[-x_i]), \quad i=1,2,3\\
 \sigma_{0^*}(\tb', x_1, x_2, x_3) &\&:= \frac{\prod_{1\le i<j}^3 (x_i-x_j)^2}{\prod_{i,j= 1}^3(x_i+x_j)}\tau^{KP}_{w^0}(\tb' +\sum_{i=1}^3([x_i]-[-x_i])) , \\
 \sigma_{i^*}(\tb', x_1, x_2, x_3)  &\&:= \frac{(x_j-x_k)^2}{4x_j x_k (x_j+x_k)^2}\tau^{KP}_{w^0}(\tb' +[x_j] + [x_k]-[-x_j]- [-x_k]) , \cr
  &\&\cr
 \text{where } (i,j,k)  &\&\    \text{ is a} \text{ cyclic permutation of} \ (1,2,3).
 \eea
 \end{subequations}
 \begin{proposition}
 \label{tau_functional_hyperdet}
 \label{hyperdet_3_param_family}
 These satisfy the parametric family of hyperdeterminantal relations
 \bea
&\& \sigma_0^2\sigma_{0^*}^2  + \sigma_{1}^2\sigma_{1^*}^2  +\sigma_2^2\sigma_{2^*}^2  +\sigma_3^2\sigma_{3^*}^2
=2\sigma_0\sigma_{0^*} \sigma_{1}\sigma_{1^*} + 2\sigma_0\sigma_{0^*}\sigma_2\sigma_{2^*}+2 \sigma_0\sigma_{0^*} \sigma_{3}\sigma_{3^*} +
 \cr
 &\&  + 2 \sigma_{1}\sigma_{1^*} \sigma_{2}\sigma_{2^*}  + 2 \sigma_{1}\sigma_{1^*}  \sigma_3\sigma_{3^*}+ 2  \sigma_{2}\sigma_{2^*} \sigma_3\sigma_{3^*}
 -  4\sigma_{0^*}\sigma_1  \sigma_2\sigma_3  - 4\sigma_{0}\sigma_{1^*} \sigma_{2^*}\sigma_{3^*} \cr
 &\&
  \label{hyperdet_3_6_sigma}
\eea
 for all $(\tb', x_1, x_2, x_3)$.
\end{proposition}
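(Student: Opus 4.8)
The plan is to recognize $\sigma_0,\sigma_i,\sigma_{i^*},\sigma_{0^*}$ as the principal minors of the symmetric matrix $A(\tb',x_1,x_2,x_3)$, rescaled by the common factor $\tau^{KP}_{w^0}(\tb')$, and then to invoke the finite-dimensional hyperdeterminantal relation of Proposition \ref{hyperdet_finite} in the case $N=3$, $J=\emptyset$, $(j_1,j_2,j_3)=(1,2,3)$. Writing $\LL_J:=\det\big(A_J(\tb',x_1,x_2,x_3)\big)$ for the principal minor on rows and columns in $J\subseteq\{1,2,3\}$ (with $\LL_\emptyset:=1$), I claim the dictionary
\begin{align*}
\sigma_0 &= \tau^{KP}_{w^0}(\tb'), & \sigma_i &= \tau^{KP}_{w^0}(\tb')\,\LL_{\{i\}},\\
\sigma_{i^*} &= \tau^{KP}_{w^0}(\tb')\,\LL_{\{j,k\}}, & \sigma_{0^*} &= \tau^{KP}_{w^0}(\tb')\,\LL_{\{1,2,3\}},
\end{align*}
where $(i,j,k)$ is a cyclic permutation of $(1,2,3)$. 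Since every monomial in (\ref{hyper_det_rels}) is a product of exactly four Lagrange coefficients, the overall factor $\tau^{KP}_{w^0}(\tb')^4$ is common to all terms and cancels, so that (\ref{hyperdet_3_6_sigma}) is exactly (\ref{hyperdet_3_6}) written for the minors $\LL_J$.

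First I would verify that $A$ is symmetric, which is the only place the CKP hypothesis enters. Applying the involution $\tb\mapsto\tilde\tb$ of sign reversal on the even flow variables to the argument $\tb'+[x_i]-[-x_j]$ fixes its odd components and negates its even ones; a direct check on the power-sum components shows this carries $\tb'+[x_i]-[-x_j]$ to $\tb'+[x_j]-[-x_i]$. By the CKP symmetry (\ref{tau_KP_CKP_symm}) the two tau-values agree, and since the prefactor $(x_i+x_j)^{-1}$ is symmetric in $i,j$ we obtain $A_{ij}=A_{ji}$.

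The central step is the identification of the minors $\LL_J$ with the multi-shifted tau evaluations by means of the generalized Fay (addition) identity. Choosing the dual points to be $y_a=-x_a$, the Cauchy-type determinant identity for KP tau-functions gives, for every $J\subseteq\{1,2,3\}$,
\[
\det\big(A_J\big)=\frac{\prod_{a<b,\,a,b\in J}(x_a-x_b)^2}{\prod_{a,b\in J}(x_a+x_b)}\,\frac{\tau^{KP}_{w^0}\big(\tb'+\sum_{a\in J}([x_a]-[-x_a])\big)}{\tau^{KP}_{w^0}(\tb')}.
\]
Specializing to $|J|=0,1,2,3$ reproduces precisely the Cauchy denominators $\prod(x_a+x_b)$, the Vandermonde numerators $\prod(x_a-x_b)^2$, and the shift patterns in the definitions of $\sigma_0$, $\sigma_i$, $\sigma_{i^*}$ and $\sigma_{0^*}$, establishing the dictionary above. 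I would take this identity as the known addition formula for KP tau-functions, deriving it if necessary by induction on $|J|$ from the three-point Fay identity, which itself is a specialization of the Hirota bilinear residue relation (\ref{KP_hirota_res_eq}).

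With $A$ symmetric and its principal minors realized as the $\LL_J$, Proposition \ref{hyperdet_finite} (equivalently the explicit $(3,6)$ relation (\ref{hyperdet_3_6}) obtained in Section \ref{hyperdet_Gr_L_3_6}) shows that the $\LL_J$ satisfy the core hyperdeterminantal relation, and the homogeneity observation transports this to (\ref{hyperdet_3_6_sigma}). The main obstacle is the Fay step: one must confirm that the generalized addition formula delivers exactly the rational normalizations chosen for the $\sigma$'s, since any mismatch in the prefactors would spoil the clean matching with the principal-minor relation. Once the prefactors are pinned down, the rest is the symmetry check and the degree-four homogeneity bookkeeping.
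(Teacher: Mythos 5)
Your proposal is correct and follows essentially the same route as the paper: identify the $\sigma$'s, via the addition formula (\ref{addition}) with $y_a=-x_a$, as $\tau^{KP}_{w^0}(\tb')$ times the principal minors $\Sigma$ of the symmetric matrix $A(\tb',x_1,x_2,x_3)$, invoke the core hyperdeterminantal relation (\ref{hyperdet_3_6}) for those minors, and cancel the common factor $\tau^{KP}_{w^0}(\tb')^4$ by degree-four homogeneity. Your explicit check that the involution $\tb\mapsto\tilde\tb$ sends $\tb'+[x_i]-[-x_j]$ to $\tb'+[x_j]-[-x_i]$, so that (\ref{tau_KP_CKP_symm}) forces $A=A^T$, is a detail the paper states without proof, and your prefactor bookkeeping matches the paper's normalizations exactly.
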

 \begin{proof}
 Denote the  eight principal minors of $A(\tb', x_1, x_2, x_3)$,
 \begin{subequations}
 \bea
 \Sigma_0(\tb',x_1,x_2,x_3) &\&:= 1, \\
 &\&\cr
 \Sigma_i(\tb', x_1, x_2, x_3)&
\& := A_{ii} (\tb', x_1, x_2, x_3)  \quad i \in \{1,2,3\}, \\
&\&\cr
  \Sigma_{0^*}(\tb', x_1, x_2, x_3)  &\&:= \det\left(A(\tb', x_1, x_2, x_3)\right), \\
  &\&\cr
   \Sigma_{i*}(\tb', x_1, x_2, x_3)&\&:= \det\left(\begin{matrix}
  A_{jj} (\tb', x_1, x_2, x_3)  & A_{jk} (\tb', x_1, x_2, x_3) \\
   &\&\cr
   A_{kj} (\tb', x_1, x_2, x_3)  & A_{kk} (\tb', x_1, x_2, x_3)
   \end{matrix}\quad  \right), \\
   &\&\cr
 \text{where } (i,j,k)    \text{is a}  &\&\   \text{ cyclic  permutation of} \ (1,2,3).
 \nonumber
 \eea
 \end{subequations}
Since $A(\tb', x_1,x_2,x_3)$ is symmetric, these  satisfy the hyperdeterminantal relation
 \bea
&\& \Sigma_0^2\Sigma_{0^*}^2  + \Sigma_{1}^2\Sigma_{1^*}^2  +\Sigma_2^2\Sigma_{2^*}^2  +\Sigma_3^2\Sigma_{3^*}^2
=2\Sigma_0\Sigma_{0^*} \Sigma_{1}\Sigma_{1^*} + 2\Sigma_0\Sigma_{0^*}\Sigma_2\Sigma_{2^*}+2 \Sigma_0\Sigma_{0^*} \Sigma_{3}\Sigma_{3^*} +
 \cr
 &\&  + 2 \Sigma_{1}\Sigma_{1^*} \Sigma_{2}\Sigma_{2^*}  + 2 \Sigma_{1}\Sigma_{1^*}  \Sigma_3\Sigma_{3^*}+ 2  \Sigma_{2}\Sigma_{2^*} \Sigma_3\Sigma_{3^*}
 - 4\Sigma_{0^*}\Sigma_1  \Sigma_2\Sigma_3 - 4\Sigma_{0}\Sigma_{1^*} \Sigma_{2^*}\Sigma_{3^*}. \cr
 &\&
  \label{hyperdet_3_6_Sigma}
\eea

 Now recall the following consequence of the addition formula for KP $\tau$-functions (\cite{Shig},  and \cite{HB}, Chapt.~3, Prop. 3.10.4):
   \bea
    \label{addition}
\frac{\tau^{KP}  (\tb +\sum_{i=1}^k[x_i]-\sum_{i=1}^k[y_i])}{\tau^{KP}(\tb)} \frac{\prod_{i<j}(x_i-x_j)(y_j-y_i)}{\prod_{i,j= 1}^k(x_i-y_j)} &\&= \det\left(\frac{\tau^{KP} ( \tb + [x_i]-[y_j])}{(x_i-y_j)\tau^{KP}(\tb)}\right) _{1\leq i,j\leq k}.\cr
&\& \cr
&\&
\eea
Setting
\be
\tau^{KP}=\tau^{KP}_{w^0}, \quad \tb = \tb', \quad y_i := -x_i, \quad i=1,2,3,
\ee
and  choosing $k=0, 1, 2$ or $3$, this gives
\begin{subequations}
\bea
 \sigma_{0}(\tb', x_1, x_2, x_3)&\&=\tau^{KP}_{w^0}(\tb')\Sigma_{0}(\tb', x_1, x_2, x_3), \\
 \sigma_{i} (\tb', x_1, x_2, x_3)&\&= \tau^{KP}_{w^0}(\tb' )  \Sigma_{i}(\tb', x_1, x_2, x_3), \quad i=1,2,3\\
 \sigma_{0^*}(\tb', x_1, x_2, x_3) &\&= \tau^{KP}_{w^0}(\tb' ) \Sigma_{0^*}(\tb', x_1, x_2, x_3), \\
 \sigma_{i^*}(\tb', x_1, x_2, x_3)  &\&= \tau^{KP}_{w^0}(\tb' )  \Sigma_{i^*}(\tb', x_1, x_2, x_3) , \\
 \text{where } (i,j,k)  \   \text{is a} &\&\  \text{ cyclic  permutation of} \ (1,2,3).
 \nonumber
 \eea
 \end{subequations}
The hyperdeterminantal relation (\ref{hyperdet_3_6_Sigma}) may therefore be written  equivalently as (\ref{hyperdet_3_6_sigma}).
\end{proof}

In fact, there is no reason to limit the number of parameters to just $3$.
For any $\tau$-function $\tau^{KP}_{w^0}(\tb)$ of Lagrangian type, 
choose a set of $N$ parameters $\{x_i\}_{i =1, \dots, N}$ satisfying
\be
x_i + x_j \neq 0, \quad \forall\  i, j \in (1, \dots,  N)
\ee
 (where, in principle, we could allow $N \ra \infty$, provided suitable convergence conditions are satisfied),
and an arbitrary point $\tb'$ in the space of (odd) flow parameters. Then define the map 
\bea
\tau: \Zb^N &\& \ra \Cb \cr
\tau: \nb: &\& \mapsto \tau^{\nb}  := \tau^{KP}_{w^0}(\tb' + \sum_{i=1}^N n_i ([x_i]- [-x_i])) \\
 \nb&\&=(n_1, \dots, n_N) \in \Zb^N
 \nonumber
\eea
and, for each triple of integers $(i,j,k)$,  $1\le i < j < k\le N$,  the  eight quantities
\begin{subequations}
 \bea
 \sigma^\nb&\&:= \tau^\nb , 
 \label{sigma_tau_sub_a}\\
 \sigma^\nb_{a} &\&:=\frac{1}{2x_a}\tau^{(n_1, \dots,    n_a+1,  \dots, n_N)}, \quad a = i,j,k
 \label{sigma_tau_sub_b} \\
 \sigma^\nb_{ijk}&\&:= \frac{(x_i-x_j)^2 (x_j - x_k)^2(x_k-x_i)^2}{(x_i+x_j)^2 (x_j + x_k)^2(x_k +x_i)^2} 
\ \tau^{(n_1, \dots,  n_i+1,  \dots, n_j+1, \cdots n_k+1, \cdots n_N)},
\label{sigma_tau_sub_c} \\
 \sigma^\nb_{ab} &\&:= \frac{(x_a-x_b)^2}{4x_a x_b(x_a+x_b)^2}\tau^{(n_1, \dots,    n_a+1,  \dots, n_b+1,  \cdots n_N)} ,
\quad  (a,b) = (i,j), (j,k), (i,k). \cr
&\& 
\label{sigma_tau_sub_d}
   \eea
 \end{subequations}
 We then have:
 \begin{corollary}
   \label{tau_functional_N_param_hyperdet}
 For all $i<j<k$, the following $N$-parameter family of hyperdeterminantal relations hold:
 \label{hyperdet_N_param_family}
 \bea
&\& (\sigma^\nb \sigma^\nb_{ijk})^2  + (\sigma^\nb_i \sigma^\nb_{jk})^2 + (\sigma^\nb_j\sigma^\nb_{ki})^2  + (\sigma^\nb_k \sigma^\nb_{ij})^2 
=2\sigma^\nb\sigma^\nb_{ijk} \sigma^\nb_{i}\sigma^\nb_{jk} + 2\sigma^\nb\sigma^\nb_{ijk} \sigma^\nb_j\sigma^\nb_{ki} + 
2\sigma^\nb\sigma^\nb_{ijk} \sigma^\nb_k\sigma^\nb_{ij} + 
 \cr
 &\&  + 2 \sigma^\nb_{i}\sigma^\nb_{jk} \sigma^\nb_{j}\sigma^\nb_{ki}  + 2 \sigma^\nb_{j}\sigma^\nb_{ki} \sigma^\nb_{k}\sigma^\nb_{ij} 
 + 2 \sigma^\nb_{k}\sigma^\nb_{ij} \sigma^\nb_{j}\sigma^\nb_{ki} 
 -  4\sigma^\nb_{ijk}\sigma^\nb_i  \sigma^\nb_j\sigma^\nb_k  - 4\sigma^\nb\sigma^\nb_{ij} \sigma^\nb_{jk}\sigma^\nb_{ki} .\cr
 &\&
  \label{hyperdet_N_param_sigma}
\eea
 \end{corollary}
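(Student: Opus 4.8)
The plan is to deduce the entire $N$-parameter family directly from the three-parameter statement already established in Proposition \ref{hyperdet_3_param_family}, by absorbing each lattice translation into a shift of the base point. Fix once and for all a triple $i<j<k$ and a lattice point $\nb=(n_1,\dots,n_N)\in\Zb^N$, and set
\[
\hat\tb:=\tb'+\sum_{a=1}^N n_a\big([x_a]-[-x_a]\big).
\]
I would then argue that the eight quantities $\sigma^\nb,\ \sigma^\nb_i,\sigma^\nb_j,\sigma^\nb_k,\ \sigma^\nb_{jk},\sigma^\nb_{ki},\sigma^\nb_{ij},\ \sigma^\nb_{ijk}$ defined in \eqref{sigma_tau_sub_a}--\eqref{sigma_tau_sub_d} are precisely the eight objects produced by Proposition \ref{hyperdet_3_param_family} when its base point $\tb'$ is replaced by $\hat\tb$ and its three parameters are taken to be $(x_i,x_j,x_k)$. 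Granting this, the target relation \eqref{hyperdet_N_param_sigma} is nothing but \eqref{hyperdet_3_6_sigma} read at the shifted base point $\hat\tb$, and since $\nb,i,j,k$ are arbitrary the whole family follows at once.

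The conceptual crux, and the one hypothesis that must be re-verified at the shifted point, is that $\hat\tb$ still lies in the odd-flow locus, so that the associated Fay matrix remains symmetric; this is where the CKP reduction enters. The key observation is that each generator of the translation has only odd-indexed components, because the $m$-th component of $[x]-[-x]$ is $\tfrac1m\big(x^m-(-x)^m\big)$, which vanishes for even $m$:
\[
[x_a]-[-x_a]=\Big(2x_a,\,0,\,\tfrac{2}{3}x_a^{3},\,0,\,\tfrac{2}{5}x_a^{5},\dots\Big).
\]
Hence $\hat\tb$ has vanishing even flow components, exactly like $\tb'$. Consequently, for the $3\times3$ matrix with entries $A_{rs}=\tau^{KP}_{w^0}(\hat\tb+[x_r]-[-x_s])/\big((x_r+x_s)\,\tau^{KP}_{w^0}(\hat\tb)\big)$, $r,s\in\{i,j,k\}$, the two arguments $\hat\tb+[x_r]-[-x_s]$ and $\hat\tb+[x_s]-[-x_r]$ share all odd components and have opposite even components; since $\hat\tb$ is even-free they are exchanged by the involution $\tb\mapsto\tilde\tb$, and the CKP symmetry \eqref{tau_KP_CKP_symm} forces $A_{rs}=A_{sr}$. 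Thus $A(\hat\tb,x_i,x_j,x_k)$ is symmetric for every choice of $\nb$ and of $i,j,k$.

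With symmetry in hand the remaining steps repeat the proof of Proposition \ref{hyperdet_3_param_family} verbatim. By Proposition \ref{hyperdet_finite} (the ``core'' hyperdeterminantal relation for the principal minors of any symmetric matrix), the eight principal minors $\{\Sigma_\bullet\}$ of $A(\hat\tb,x_i,x_j,x_k)$ satisfy \eqref{hyperdet_3_6_Sigma}. The generalized Fay identity \eqref{addition}, applied with $\tb=\hat\tb$, $y_a=-x_a$ and $k=0,1,2,3$, then identifies each minor $\Sigma_\bullet$ with $\tau^{KP}_{w^0}(\hat\tb)^{-1}$ times the corresponding $\sigma^\nb_\bullet$: the empty, singleton and $2\times2$ minors match the definitions \eqref{sigma_tau_sub_a}, \eqref{sigma_tau_sub_b}, \eqref{sigma_tau_sub_d}, and the full $3\times3$ determinant matches \eqref{sigma_tau_sub_c} up to the Vandermonde-type prefactor dictated by the identity. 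Substituting $\sigma^\nb_\bullet=\tau^{KP}_{w^0}(\hat\tb)\,\Sigma_\bullet$ into \eqref{hyperdet_3_6_Sigma} and clearing the common homogeneity factor $\tau^{KP}_{w^0}(\hat\tb)^{4}$ produces \eqref{hyperdet_N_param_sigma}.

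The step I expect to require the most care is the normalization bookkeeping at the end: one must check that the explicit prefactors in \eqref{sigma_tau_sub_c}--\eqref{sigma_tau_sub_d} are exactly the ratios $\prod_{a<b}(x_a-x_b)^2/\prod_{a,b}(x_a+x_b)$ (products over $\{i,j,k\}$) emitted by the Fay determinant \eqref{addition}, so that all eight $\sigma^\nb_\bullet$ are the \emph{same} scalar multiple $\tau^{KP}_{w^0}(\hat\tb)$ of the corresponding minors. This uniformity is what makes the quartic \eqref{hyperdet_3_6_Sigma} genuinely homogeneous of degree four in the minors and hence invariant under the common rescaling; if any one prefactor were off by an independent factor the relation would not transcribe, so reconciling \eqref{sigma_tau_sub_c} with the Fay normalization $\prod_{a,b}(x_a+x_b)=8x_ix_jx_k\prod_{a<b}(x_a+x_b)^2$ is the essential check. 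Everything else is a routine transcription of the already-proved three-parameter identity to the translated base point.
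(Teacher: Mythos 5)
Your proof is correct and takes exactly the paper's route: the paper's entire proof of this corollary is the remark that Proposition \ref{hyperdet_3_param_family} applies verbatim under the replacements $(x_1,x_2,x_3)\ra(x_i,x_j,x_k)$ and $\tb'\ra\tb'+\sum_{a=1}^N n_a([x_a]-[-x_a])$. Your explicit verification that the shifted base point still has vanishing even flow components (so that the CKP symmetry (\ref{tau_KP_CKP_symm}) again forces the Fay matrix $A$ to be symmetric) is precisely the hypothesis the paper's one-line proof leaves implicit, correctly spelled out.
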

 The proof is the same as for Proposition \ref{tau_functional_hyperdet}, with the replacements
 \be
 (x_1, x_2, x_3) \ra (x_i, x_j, x_k), \quad
 \tb' \ra \tb' + \sum_{i=1}^N n_i ([x_i]- [-x_i]).
 \ee
 To obtain (\ref{hyper_det_rels}) from (\ref{hyperdet_N_param_sigma}), set $(i,j,k) = (j_1, j_2, j_3)$ and $\nb=\nb_J$, 
 the binary vector with $1$'s in positions $(J_1, \dots, J_r)$ and $0$'s elsewhere:
 \bea
 \LL_J := \sigma^{\nb_J}, \quad  \LL_{J, j_a} &\&:= \sigma^{\nb_J}_{j_a}, \quad   \LL_{J, j_a, j_b } := \sigma^{\nb_J}_{j_a, j_b},
 \quad  \LL_{J, j_1,j_2, j_3}  := \sigma^{\nb_J}_{(j_1, j_2, j_3)}, 
 \eea
 for $a, b \in\{1,2,3\}, \ a<b$.
Defining
\be
T_{n_i} (\tau^\nb):= \tau^{(n_1, \dots,    n_i+1,  \dots, n_N)}, \quad i\in \{1, \dots, N\}
\ee
and substituting eqs.~(\ref{sigma_tau_sub_a}) - (\ref{sigma_tau_sub_d})  into (\ref{hyperdet_N_param_sigma})
gives the form of the discrete CKP relations studied in \cite{FN}.


\subsection{Summary of results and further developments}
\label{summery_further_devel}

We have shown that any KP $\tau$-function $\tau_{w^0}$ satisfying the CKP reduction conditions 
(eq.~(\ref{inf_lagrangian_cond}), or eqs.~(\ref{null_fermi_omega}), (\ref{h_sigma_1_inv}) or (\ref{tau_KP_CKP_symm})),
corresponds to an element $w^0$ belonging to the subgrassmannian $\Gr^\LL_{\HH_+}(\HH, \omega) \ss \Gr_{\HH_+} (\HH)$
of Lagrangian subspaces of the symplectic Hilbert space $(\HH, \omega)$, acted on by the subgroup consisting of
those abelian flow group $\Gamma_+$ that preserve $\Gr^\LL_{\HH_+}(\HH, \omega)$ (i.e., the odd parameter flows
only). It was proved, as a consequence of the addition formulae for KP $\tau$-functions, that
any such $\tau$-function of CKP type, when evaluated at the finite or infinite lattice points
and normalized as in eqs.~(\ref{sigma_tau_sub_a}-\ref{sigma_tau_sub_d}), provides infinite parametric families 
of solutions to the hyperdeterminantal relations, depending on the choice of parameters $\{x_i\}_{i\in \Zb}$ and the origin
 $\gamma(\tb)w_0$ in the group orbit in which the lattice is embedded. These relations may be interpreted as defining
solutions of the discretized lattice form of the CKP hierarchy \cite{Sch, FN}. Moroever,  as noted 
 in Section \ref{sec:hexahedron} for the finite dimensional case, these may also be extended to a more general system, 
 by adding  further Pl\"ucker coordinates of the  Lagrangian Grassmannian element, besides those entering in the Lagrange map, so as to provide solutions 
to the hexahedron recurrence relations of Kenyon and Pemantle \cite{KePe1, KePe2}. 

   This suggests that, by making suitable lattice evaluations, corresponding both to symmetric partitions and some further
 ``almost'' symmetric ones, we may derive, for any CKP type $\tau$-function, infinite families of solutions, both
   of the discrete CKP hierarchy, and the hexahedron recursion relations. The detailed development of these results
   is done in a subsequent paper \cite{AHH},  in which the addition  formulae for KP $\tau$-functions are shown to 
   imply that such normalized lattice evaluations of $\tau$-functions of the continuous hierarchies, both KP and CKP, 
   provide infinite parametric families of solutions to the discrete ones.
   A similar result holds for the BKP hierarchy, in which the Pl\"ucker relations, which in the KP case are
   equivalent to the Hirota bilinear residue relations, are replaced by the corresponding Cartan relations 
   (cf. \cite{Ca}, Sec. 7.2 and Appendix E of \cite{HB}, and \cite{BHH}), which play a similar role in the embedding of maximal 
   isotropic Grassmannians with respect to a complex scalar product  into the projectivization of the Fock 
   space of neutral fermions (\cite{DJKM1, DJKM2}).

 \bigskip
 \bigskip
\noindent
\small{ {\it Acknowledgements.} The authors would like to thank M. Jimbo, R. Kenyon,  J. van de Leur, L. Oeding
and A. Zabrodin for helpful exchanges that contributed greatly to clarifying the results presented here.
This work was partially supported by the Natural Sciences and Engineering Research Council of Canada (NSERC).

 \bigskip
\noindent
\small{ {\it Data sharing.}
Data sharing is not applicable to this article since no new data were created or analyzed in this study.
\bigskip


 \newcommand{\arxiv}[1]{\href{http://arxiv.org/abs/#1}{arXiv:{#1}}}

\end{document}